\renewcommand{\vec}{\mathbf}
\newcommand{\RS}{\Sigma}
\newtheorem{theorem}{Theorem}[section]
\newtheorem{lemma}[theorem]{Lemma}
\newtheorem{proposition}[theorem]{Proposition}
\theoremstyle{definition}
\newtheorem{definition}[theorem]{Definition}
\theoremstyle{remark}
\newtheorem{remark}[theorem]{Remark}
\newtheorem{example}[theorem]{Example}%[chapter]
\let\Re=\undefined\DeclareMathOperator{\Re}{Re}
\let\Im=\undefined\DeclareMathOperator{\Im}{Im}
\DeclareMathOperator{\diag}{diag}
\DeclareMathOperator{\wron}{Wron}
\DeclareMathOperator{\tr}{tr}
\newcommand*\dd{\mathop{}\!\mathrm{d}}
\newcommand{\ii}{\ensuremath{\mathrm{i}}}
\newcommand{\ee}{\ensuremath{\,\mathrm{e}}}
\title{Computation of large-genus solutions of the Korteweg-de Vries equation}
\author{Deniz Bilman}
\address{Department of Mathematical Sciences, University of Cincinnati, OH}
\email{bilman@uc.edu}
\urladdr{https://homepages.uc.edu/~bilman/}
\author{Patrik Nabelek}
\address{Department of Mathematics, Oregon State University, OR}
\email{nabelekp@oregonstate.edu}
\urladdr{https://math.oregonstate.edu/node/15580}
\author{Thomas Trogdon}
\address{Department of Applied Mathematics, University of Washington, WA}
\email{trogdon@uw.edu}
\urladdr{http://faculty.washington.edu/trogdon/}
\begin{document}

\begin{abstract}
    We consider the numerical computation of finite-genus solutions of the Korteweg-de Vries equation when the genus is large. Our method applies both to the initial-value problem when spectral data can be computed and to dressing scenarios when spectral data is specified arbitrarily.  In order to compute large genus solutions, we employ a weighted Chebyshev basis to solve an associated singular integral equation.  We also extend previous work to compute period matrices and the Abel map when the genus is large, maintaining numerical stability.  We demonstrate our method on four different classes of solutions.  Specifically, we demonstrate dispersive quantization for ``box'' initial data and demonstrate how a large genus limit can be taken to produce a new class of potentials.
\end{abstract}

\maketitle

\section{Introduction}
Consider the Korteweg-de Vries (KdV) equation, written in the form
\begin{align}
    q_t + 6 q q_x + q_{xxx} = 0,\quad x\in[0,L),~t>0,
    \label{KdV}
\end{align}
subject to periodic boundary conditions. A main outcome of this work is an efficient numerical method for the computation of the inverse scattering transform associated with \eqref{KdV}, that is, a numerical inverse scattering transform for the Schr\"odinger operator with a periodic, piecewise smooth, potential.

\begin{figure}[h!]  
  \centering
  \begin{overpic}[width = 0.6\linewidth]{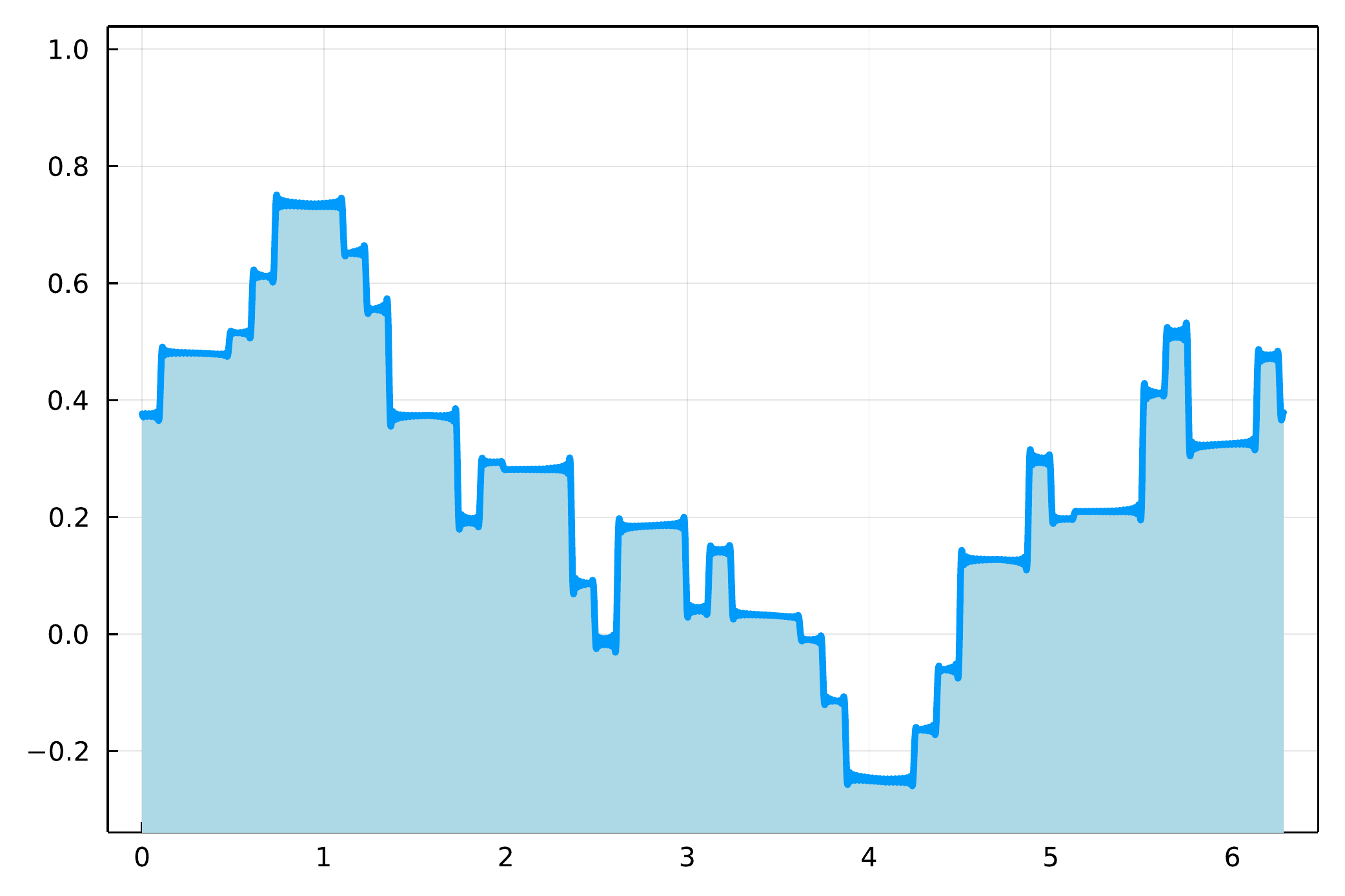}
    \put(-3,25){\rotatebox{90}{$u(x,1.03 \pi)$}}
    \put(55,-1){{$x$}}
  \end{overpic}
  \caption{\label{fig:demo} $u(x,1.03\pi)$ in \eqref{eq:ukdv} with \eqref{eq:udata} {solving KdV equation in the form $u_t-uu_x + u_{xxx}=0$ (as in \cite{Chen2014a})}. This plot shows dispersive quantization. The solution appears to be piecewise smooth at rational-times-$\pi$ times and fractal otherwise.  For the KdV equation, this was first observed by Chen and Olver, see \cite{Chen2014a}, for example. These plots are produced using a genus $g = 300$ approximation.  More details concerning the computation of this solution can be found in Section~\ref{s:numerics-box}.}
  \end{figure}

Following \cite{TrogdonFiniteGenus,McLaughlin2018,TrogdonSOBook}, we formulate a Riemann--Hilbert problem for the so-called Bloch eigenfunctions of the Schr\"odinger operator.  In the finite-gap case, two eigenfunctions are classically used to construct the associated Baker--Akhiezer function on a hyperelliptic Riemann surface. A key improvement we make here over the numerical approach in \cite{TrogdonFiniteGenus} is that through a transformation $z^2 = \lambda$ we pose Riemann--Hilbert problem with jumps supported on the gaps.  This idea was used with limited scope in \cite{TrogdonDressing}.

The key numerical innovations come from the use of the Chebyshev-U and Chebyshev-V (third and fourth kind) polynomials and their weighted Cauchy transforms.  These weighted Cauchy transforms encode the singularity structure of the solution of the Riemann--Hilbert problem we pose and allow for an extremely sparse representation of the solution.  This is demonstrated in Figure~\ref{fig:coefs} below.

The developed numerical method can handle high-genus potentials --- Riemann-Hilbert problems with jump matrices supported on thousands of intervals.  This ability stems from the choice of Chebyshev-U and V basis{, which is one of the new ideas in this work.}  But there are additional developments that are required to even pose that Riemann--Hilbert problem in the inverse scattering context.  These developments are related to computing the period matrix for a basis of holomorphic differentials when the genus is high.  Through a judicious choice of the basis of holomorphic differentials, we develop an approach that appears stable as $g \to \infty$.  Furthermore, our use of Chebyshev-U and Chebyshev-V polynomials and their weights is predicated on having a potential that produces a Baker--Akhiezer function with poles at band ends.  As this is not the generic setting, following \cite{TrogdonFiniteGenus}, we construct a parameterix Baker--Akhiezer function to move the poles to the band ends, {without loss of generality}. {A simplification explained in Section~\ref{s:move} allows for $g$ to be large with this approach.}

In this paper we do not discuss, in detail, the computation of the direct scattering transform for the Schr\"odinger operator with a periodic potential --- the computation of the periodic, anti-periodic and Dirichlet spectra.  We do accomplish this using existing standard techniques but consider any improvement on these approaches as important future research topics.  In the case of a ``box'' potential (see \eqref{eq:udata} below) we can compute the Bloch eigenfunctions explicitly and apply simple root-finding routines to compute the requisite spectra.  In Figure~\ref{fig:demo} we plot the evolution of this infinite-genus box potential to time $t = 1.03\pi$ using a genus $g = 300$ approximation.
{Realization of dispersive quantization in a nonlinear setting is clear --- the solution appears to be piecewise smooth whenever $t$ is a rational multiple of $\pi$ \cite{Chen2014a,Olver2010a,Berry1996,Talbot1836}.}

\subsection{Relation to other work}

We emphasize that the computation of finite-genus solutions is a nontrivial matter.  Lax's foundational paper \cite{Lax} includes an appendix by Hyman, where solutions of genus 2 were obtained through a variational principle. The classical approach to their computation goes through their algebro-geometric description in terms of Riemann surfaces, see \cite{Deconinck-theta} or \cite{klein}, for instance.  While very effective, this approach has only been applied to relatively small genus Riemann surfaces.

Yet another approach is by the numerical solution of the so-called Dubrovin equations \cite{Algebro,Dubrovin}. And the finite-genus solution is easily recovered from the solution of the Dubrovin equations \cite{TheoryOfSolitons, Osborne}.  We do not take this approach again because {1)} the dimensionality involved may pose possible stability issues and {2)} one has to time-step the solution to get to large times.  The Riemann--Hilbert problem we pose has $x$ and $t$ as {explicit} parameters, and therefore the complexity associated with computing the solution at any given $(x,t)$ value is independent of $(x,t)$.

As mentioned above, a numerical Riemann--Hilbert approach was introduced in \cite{TrogdonFiniteGenus} (see also \cite{TrogdonSOBook}). While the approach in \cite{TrogdonFiniteGenus} should be seen as the precursor to the current work, it was only successful for small genus solutions and was too inefficient when the genus is larger than, say, 10.  

\subsection{Outline of the paper}

The paper is laid out as follows.  In Section~\ref{s:ist} we review the inverse spectral theory for the Schr\"odinger operation with a periodic or finite-gap potential, connecting it to an underlying Riemann surface (in the finite-gap case) and the associated Baker--Akhiezer function.  In Section~\ref{s:move} we discuss the parametrix Baker--Akhiezer function that allows the movement of poles and in Section~\ref{s:form_RHP} we begin formulating a Riemann--Hilbert problem satisfied by the planar representation of the Baker--Akhiezer function.  In Section~\ref{s:adaptive-Cauchy} we convert the Riemann--Hilbert problem to a singular integral equation on a collection of intervals.  We look for solutions in a weighted $L^2$ space.  In Section~\ref{s:NIST} we discuss the numerical solution of the singular integral equation from the previous section, discussing both preconditioning and adaptivity of grid points. In Section~\ref{s:Applications} we discuss the comptuation of various solutions of the KdV equation.  Specifically, in Section~\ref{s:numerics-dressing-slow} and \ref{s:numerics-dressing-univ} we compute solutions with prescribed spectral data.  In Section~\ref{s:numerics-dressing-univ} we give a formal universality result that demonstrates how primitive potentials can be obtained in a large-genus limit.  In Section~\ref{s:numerics-smooth} we solve the initial-value problem for the KdV equation with smooth initial data.  In Section~\ref{s:numerics-box}, we give an extensive treatment of the numerical solution of the KdV equation with ``box'' initial data.  
%Lastly, Appendix~\ref{a:tech} gives some foundational results for singular integral equations that motivates our choice of preconditioner.

This work gives rise to many interesting questions.  The work here, while empirically valid, comes with no rigorous error bounds and the full numerical analysis of the method  is an open problem.  Similarly, we provide no error bounds for the approximation of an infinite genus potential by one of finite genus.   
%Below,
{The reconstruction formula} \eqref{eq:dubrovin-recover} appears to imply that the errors will be small if one removes gaps such that $\alpha_{j+1}-\beta_j$ is small.  But this removal has a non-trivial impact on $\gamma_k(x)$ for $k \neq j$ and that error needs to be estimated.  This leads to the question of understanding both the large $g$ limit of the period matrix of our basis of holomorphic differentials and the large $g$ limit of the singular integral equation we formulate. These issues will be addressed in future work.  
{There is also some room for improvement in the complexity of the numerical method.}
{A significant improvement on the complexity would be to put it inside a matrix-free framework using some incarnation of the fast multiple method \cite{Carrier1988}.} Code used to generate the plots in the current paper can be found at \cite{Trogdon2022}.

Before we proceed, we give a remark that details our notational conventions.
\begin{remark}[Notational conventions]
We use capital boldface letters, e.g., $\mathbf{M}$, to denote $1\times 2$ row-vectors and to denote matrices, with the exception of the Pauli matrices,
\begin{equation}
\sigma_{1}:=\begin{bmatrix}
0 & 1 \\
1 & 0
\end{bmatrix}, \quad \sigma_{2}:=\begin{bmatrix}
0 & -\mathrm{i} \\
\mathrm{i} & 0
\end{bmatrix}, \quad \sigma_{3}:=\begin{bmatrix}
1 & 0 \\
0 & -1
\end{bmatrix},
\end{equation}
and denote the identity matrix or identity operator by $\mathbb{I}$. 
We use lowercase boldface letters, e.g., $\mathbf{u}$ to denote column-vectors that are of arbitrary dimension.
We use the capitalized Greek characters, e.g., $\Psi$, to denote functions defined on a hyperelliptic Riemann surface. Given such $\Psi$, we denote by $\psi_{\pm}$ the (scalar-valued) components of its planar representation in the form of a row-vector which is denoted by the boldface version ${\bf \Psi}$ of $\Psi$.
We use superscripts $f^{\pm}(z)$ to denote the boundary values of $f$ at a point $z$ on an oriented contour taken from the left ($+$) and the right ($-$) side of the contour with respect to the orientation. 
We use fraktur $\mathfrak{a}$ and $\mathfrak{b}$ to denote the cycles on a Riemann surface.  Lastly, for a function $f: \mathbb C \to \mathbb C$ we use $f(\mathbf{u})$ to denote $f$ applied entrywise to the vector $\mathbf{u}$.
\label{rem:notation}
\end{remark}

\subsection*{Acknowledgements}
{Bilman was supported by the National Science Foundation under grant number DMS-2108029. Trogdon was partially supported by the National Science Foundation under grant number DMS-1945652. The authors thank Ken McLaughlin, Peter Miller, and Peter Olver for helpful and motivating discussions.}

\section{Inverse Scattering Transform for Periodic Solutions} \label{s:ist}
In this section we give a summary of the well-known inverse scattering transform associated with \eqref{KdV} and define the quantities relevant to the method we develop in this work, along with particular choices we make. 
The KdV equation in the form \eqref{KdV} is the $\lambda$-independent compatibility condition for the linear problems, i.e., the Lax pair,
\begin{align}
\mathcal{L}(t) \psi &= \lambda \psi, \label{Lax-x}\\
\psi_t &= \mathcal{P}(t) \psi,\label{Lax-t}
\end{align}
where is $\mathcal{L}$ is the Schr\"odinger operator
\begin{align}
\mathcal{L}(t):=-\frac{\dd^2}{\dd x^2} - q(\diamond,t)
\label{Schrodinger-op}
\end{align}
with the time-dependent potential $-q(\diamond,t)$, and $\mathcal{P}$ is the skew-symmetric operator
\begin{align}
\mathcal{P}(t):=-4\frac{\dd^3}{\dd x^3} - 6 q(\diamond,t)\frac{\dd}{\dd x} - 3 q_x(\diamond,t).
\end{align}
The compatibility condition for the system of linear problems \eqref{Lax-x}--\eqref{Lax-t} yields the operator equation, referred to as the Lax equation \cite{Lax68}, in the form
\begin{align}
    \frac{\dd}{\dd t}\mathcal{L}(t) + [\mathcal{L}(t),\mathcal{P}(t)] = 0,
    \label{Lax}
\end{align}
which is equivalent to the KdV equation \eqref{KdV} in the sense that the left-hand side defines an operator of multiplication by the function $-(q_t + 6 q q_x + q_{xxx})$, where $[\mathcal{L},\mathcal{P}]:=\mathcal{L}\mathcal{P}-\mathcal{P}\mathcal{L}$ is the operator commutator. As $q$ evolves in time according to the KdV equation \eqref{KdV}, \eqref{Lax} defines an isospectral deformation of the Schr\"odinger operator $\mathcal{L}$.

\subsection{The spectrum and the Riemann surface}
For fixed $t\geq 0$, consider the problem \eqref{Lax-x} for the Schr\"odinger operator with the time-independent potential $-q(\diamond,t) = -q(\diamond)$:
\begin{align}
-\psi_{xx} - q \psi = \lambda \psi,
\label{Schrodinger-problem}
\end{align}
for real periodic $q$ with minimal period $L>0$: $q(x+L) = q(x)$.
The Bloch spectrum $\sigma_\mathrm{B}(q)$ associated with the periodic potential $-q$ for the Schr\"odinger operator \eqref{Schrodinger-op} is 
\begin{align}
\sigma_\mathrm{B}(q):=\{\lambda\in\mathbb{C}\colon \text{there exists a solution $\psi(\diamond;\lambda)$ to \eqref{Schrodinger-problem} such that }\sup_{x\in\mathbb{R}}|\psi(x;\lambda)|<\infty \}.
\end{align}
For real-valued smooth (and periodic) $q$, the Bloch spectrum is a countable union of real intervals
\begin{equation}
\sigma_\mathrm{B}(q) = \bigcup_{k=1}^{g+1} [\alpha_k, \beta_k],\quad\text{where}~g\in\mathbb{Z}_{>0}~\text{or}~g=\infty,
\end{equation}
with 
\begin{equation}
\alpha_k < \beta_k < \alpha_{k+1},\quad k = 1, 2, \ldots.
\end{equation}
We refer to the intervals $[\alpha_k, \beta_k] \subset \sigma_\mathrm{B}(q)$ as \emph{bands} and $(\beta_k, \alpha_{k+1})$ as \emph{gaps}.
If the number of intervals $g+1$ is finite, $\beta_N := +\infty$ and the last interval is $[\alpha_{g+1}, +\infty)$, in which case the associated $-q$ is called a \emph{finite-gap} potential. The endpoints $\alpha_j$ and $\beta_j$, $j=1,2,\ldots,N$, remain invariant as $q(\diamond, t)$ evolves according to the KdV equation \eqref{KdV}, and hence $\sigma_{\mathrm{B}}(q_0) = \sigma_{\mathrm{B}}(q)$ for $q(\diamond,t)$ solving $\eqref{KdV}$ with $q(\diamond, 0)=q_0$.
The following well-known symmetry transformations associated with the KdV equation play a role in various choices we will make in this work.
\begin{remark}[Two symmetry groups of KdV] Suppose that $q(x,t)$ is a solution of \eqref{KdV}. \label{r:symmetry}
\begin{itemize}
    \item \emph{Galilean transformation:} The function
    \begin{align}
        \tilde{q}(x,t):=q(x-6ct,t)+c
    \label{Galielean-sym}
    \end{align}
is also a solution of \eqref{KdV} for any constant $c$.
    \item \emph{Scaling transformation:} The function
        \begin{align}
            \tilde{q}(x,t):=c^{2} q(c x, c^3 t)
        \label{scaling-sym}
        \end{align}
is also a solution of \eqref{KdV} for any constant $c$.
\end{itemize}
\end{remark}
Given $q=q(\diamond,t)$ and $\alpha_1 = \min(\sigma_\mathrm{B}(q))$, using the Galilean symmetry transformation \eqref{Galielean-sym} with $c=\alpha_1$ lets one map $q(\diamond, t)$ to $\tilde{q}(\diamond-6 c t, t) + c$ for which $\min (\sigma_\mathrm{B}(\tilde{q})) = 0$. Doing so becomes useful in the formulation of a Riemann-Hilbert problem (and of the associated singular integral equation). This transformation is employed in the numerical implementation of our method: once $\alpha_1 \in \sigma_{\mathrm{B}}(q_0)$ is computed for given $q_0$ at $t=0$, we perform the spectral shift described above and then invert it to obtain $q(\diamond, t)$ from $\tilde{q}(\diamond-6 c t, t) + c$ at a later time $t>0$. Accordingly, we take $\alpha_1 = 0$ without loss of generality in the remainder of this paper. 

For our (computational) purposes, we restrict the theory to the finite-gap case. For $q_0$ giving rise to $g+1$ bands 
\begin{equation}
\sigma_\mathrm{B}(q_0)= [\alpha_{g+1}, +\infty) \cup \left( \bigcup_{j=1}^g [\alpha_j, \beta_j] \right),
\end{equation}
and $g$ gaps, $g\geq 2$, consider the monic polynomial $P(\lambda)$ of degree $2g +1$ given by
\begin{equation}
P(\lambda) := (\lambda-\alpha_{g+1}) \prod_{j=1}^{g}(\lambda-\alpha_{j}) (\lambda-\beta_{j}),
\end{equation}
and define
$\RS$ to be the hyperelliptic (elliptic, if $g=1$) nonsingular Riemann surface
\begin{equation}
\RS:=\{(\lambda, w) \in\mathbb{C}^2 \colon w^2 = P(\lambda)\},
\label{RS-def}
\end{equation}
associated with the zero locus of $F(\lambda, w) := w^2 - P(\lambda)$. The points $(\alpha_j,0)$, $(\beta_j,0)$, $j=1,2,\ldots, g,$ and $(\alpha_{g+1},0)$ on $\Sigma$ are branch points for the projection $(z,w)\mapsto z$ and there is a single point at $\infty$ on $\Sigma$. 
For $P_0=(\lambda_0, w_0) \in \Sigma$ we have the following choices of a local coordinate $\zeta$:
\begin{itemize}[topsep=1em]
\item If $P_0$ is not a branch point and not $\infty\in \Sigma$, then for $(\lambda,w)$ near $P_0$ we may take essentially $\lambda$ to be a local coordinate, so we write for $|\zeta|$ sufficiently small
\begin{equation}
\lambda = \lambda(\zeta):= \lambda_0 + \zeta,\qquad w = w(\zeta):= \sqrt{P(\lambda(\zeta))}.
\label{local-regular}
\end{equation}
\item  If $P_0=E_k=(\lambda_k, 0)$ for some $k$, then for $(\lambda,w)$ near $P_0$ we may write
 \begin{equation}
\lambda = \lambda(\zeta):= \lambda_k + \zeta^2,\qquad w = w(\zeta):= \zeta \sqrt{\prod\limits_{\substack{j=1 \\ j\neq k}}^{2g_+1}(\zeta^2+ \lambda_k - \lambda_j )}.
\label{local-branch}
\end{equation}
\item Finally, if $P_0 = \infty\in\Sigma$, then for $(\lambda,w)$ near $P_0$ we may write
\begin{equation}
\lambda=\lambda(\zeta):= \frac{1}{\zeta^2},\qquad w=w(\zeta):=\frac{1}{\zeta} \sqrt{\prod\limits_{j=1}^{2g_+1}(1 -\zeta^2  \lambda_j )}.
\label{local-infty}
\end{equation}
\end{itemize}
In all three cases $\lambda(\zeta)$ and $w(\zeta)$ are locally holomorphic functions of $\zeta$ in a neighborhood of $\zeta=0$ with non-zero derivatives at $\zeta=0$, making them locally injective, and $\zeta(P_0)=0$.

Define the branch of square root $R(\lambda)$ of $P(\lambda)$ to be the (unique) single-valued function that is analytic in $\mathbb{C}\setminus \sigma_\mathrm{B}(q_0)$ and that satisfies $R(\lambda)^2 = P(\lambda)$ along with the asymptotic behavior
\begin{equation}
R(\lambda) = |\lambda|^{g+\frac{1}{2}} \left( (-1)^g \ii  + o(1) \right) ,\quad \lambda \to -\infty.
\end{equation}
Importantly, the boundary values $R(\lambda)$ taken on the bands $\sigma_\mathrm{B}(q)$ are real-valued and $R(\lambda)$ is purely imaginary on the gaps, namely for $\lambda\in (\beta_j, \alpha_{j+1})$, $j=1,2,\ldots, g$. Assuming that the bands constituting $\sigma_\mathrm{B}(q_0)$ are oriented in the increasing direction of the real line, we define the boundary values $R^\pm(\lambda)$ of $R$ taken on $\sigma_\mathrm{B}(q_0)$ as:
\begin{equation}
R^{\pm}(\lambda) := \lim_{\epsilon \downarrow 0} R(\lambda \pm \ii \epsilon),\quad \lambda\in\sigma_{\mathrm{B}}(q).
\end{equation}
We denote by $+R(\lambda)$ the function that's defined for all $\lambda\in\mathbb{C}$ which coincides with $R(\lambda)$ for $\lambda\in\mathbb{C}\setminus \sigma_\mathrm{B}(q_0)$ and also satisfies $+R(\lambda) = R^+(\lambda)$ for $\lambda\in \sigma_\mathrm{B}(q_0)$; and define
the sheets $\RS_\pm$ of $\RS$ by
\begin{equation}
\RS_\pm := \left\{ (\lambda, \pm R(\lambda)) \colon \lambda \in \mathbb{C} \right\}.
\end{equation}
The Riemann surface $\RS$ is topologically equivalent to a sphere with $g$ handles, which is obtained by gluing the two sheets $\RS_{\pm}$ along the edges of the cuts (the bands) $[\alpha_j, \beta_j]$, $j=1,2,\ldots, g$, and $[\alpha_{g+1},\infty]$. We define the cycles $\{\mathfrak{a}_j, \mathfrak{b}_j\}_{j=1}^g$, which constitute a homology basis for $\RS$, as depicted in Figure~\ref{f:Riemann-Surface-Cycles}; and we denote by $\{\nu_1, \nu_2,\ldots, \nu_g \}$ the basis of normalized holomorphic differentials on $\RS$ that satisfy
\begin{equation}
\oint_{\mathfrak{a}_k} \nu_j = 2\pi \ii \delta_{jk},\qquad 1\leq j,k \leq g.
\end{equation}
The $g \times g$ matrix $\mathbf{B}$ defined by
\begin{equation}
{B}_{j k}:= \oint_{\mathfrak{b}_k} \nu_j,\qquad 1\leq j,k \leq g
\label{Riemann-matrix}
\end{equation}
is the Riemann matrix for $\Sigma$. $\mathbf{B}$ is symmetric with $\Re(\mathbf{B})<0$. Although finite-gap solutions of \eqref{KdV} have, in principle, representations given in terms Riemann theta functions $\Theta$ which are based on the Riemann matrix $\mathbf{B}$, our method does not require at all any explicit knowledge of the Riemann matrix $\mathbf{B}$.
\begin{figure}[h]
\includegraphics{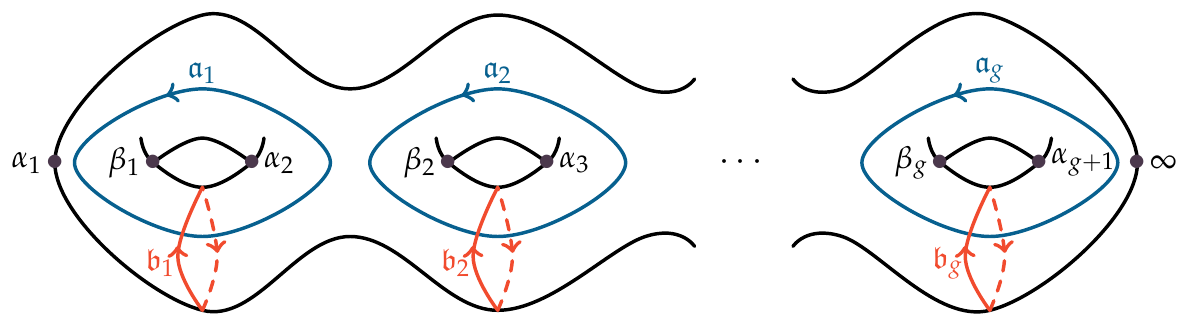}
\caption{An illustration of the hyperelliptic Riemann surface $\RS$ and the choices for the $\mathfrak{a}$- and the $\mathfrak{b}$-cycles.}
\label{f:Riemann-Surface-Cycles}
\end{figure}

We will use a particular basis of differentials that is ideal for our computational purposes, so some observations are in order. Let $\{\omega_1, \omega_2,\ldots, \omega_g \}$ be an arbitrary basis of holomorphic differentials on $\RS$ and define the $g\times g$ matrices $\mathbf{A}$ and $\tilde{\mathbf{B}}$ by
\begin{equation}
A_{j k}:= \oint_{\mathfrak{a}_k} \omega_j, \qquad \tilde{B}_{j k}:= \oint_{\mathfrak{b}_k} \omega_j,\qquad 1\leq j,k \leq g.
\label{A-B-period-matrices}
\end{equation}
It is well-known and easy to see that $\mathbf{A}$ is nonsingular since otherwise one can find a nontrivial linear combination of $\{\omega_j \}_{j=1}^g$ that has vanishing $\mathfrak{a}$-cycles, implying that the resulting holomorphic differential is identically zero, and hence contradicting the independence of the differentials $\omega_j$.  Note that there exists scalars $C_{l k}$, $1\leq l,k \leq g$, such that $\nu_l = \sum_{j=1}^g C_{lj} \omega_j$. Let $\mathbf{C}$ be the matrix of these scalars. Then we have
\begin{equation}
2\pi \ii \delta_{lk} = \oint_{\mathfrak{a}_k} \nu_l = \sum_{j=1}^g C_{lj} \oint_{\mathfrak{a}_k} \omega_j = \sum_{j=1}^g C_{lj} A_{jk} = [\mathbf{C}\mathbf{A}]_{lk},
\end{equation}
 implying that $\mathbf{C}\mathbf{A} = {2\pi\ii}\mathbb{I}$, and hence
$
 \mathbf{C}= 2\pi \ii \mathbf{A}^{-1},
$
 which yields the relation
 \begin{equation}
 \begin{bmatrix}
 \nu_1 \\ \nu_2 \\ \vdots \\ \nu_g 
 \end{bmatrix}
 = 2\pi \ii \mathbf{A}^{-1}
  \begin{bmatrix}
 \omega_1 \\  \omega_2 \\ \vdots  \\ \omega_g
 \end{bmatrix}.
 \end{equation}
 
 A classical choice for the basis $\{\omega_1,\ldots,\omega_g\}$ is
 \begin{align*}
     \omega_j = \frac{\lambda^{j-1}}{R(\lambda)} \dd \lambda.
 \end{align*}
 But from a computational point-of-view, when $g$ is large, this basis is ill-conditioned.  It is better to chose
 \begin{align}\label{eq:stable-dif}
     \omega_j = \frac{\prod_{k=1, k \neq j}^{g} (\lambda - \alpha_k)}{R(\lambda)} \dd \lambda = \left[\prod_{k=1, k \neq j}^{g} \sqrt{\frac{\lambda - \alpha_k}{\lambda-\beta_k}} \right]\frac{\dd \lambda}{\sqrt{(\lambda-\alpha_{g+1})(\lambda-\alpha_j)(\lambda - \beta_j)}}.
 \end{align}
 
 \subsection{The Baker-Akhiezer function} 
 Our approach to compute solutions of \eqref{KdV} is based on numerical solution of a RH problem satisfied by a suitable renormalization of a Baker-Akhiezer function. We now give a series of definitions and then give construction of the relevant Baker-Akhiezer function built from certain solutions of the spectral problem
  \begin{align}
   \mathcal L(0) \psi = \lambda \psi.
\label{Schrodinger-t-0}
 \end{align}
 
  \begin{definition}
   For the hyperelliptic Riemann surface $\RS$ defined by $w^2 = P(\lambda)$, a divisor $D$ on $\RS$ is a formal sum
   \begin{align}
     D = n_1 P_1 + n_2 P_2 + \cdots + n_m P_m,
   \end{align}
   where $n_j \in \mathbb Z$ and $P_j \in \RS$ for $j =1,2,\ldots,m$.  A divisor is called positive if $n_j > 0$ for all $j$, and the degree of a divisor is the number $\sum_{j=1}^m n_j$.
 \end{definition}

 \begin{definition}[Abel map]
   Fix an arbitrary point $P_0$ on the Riemann surface $\RS$ defined by $w^2 = P(\lambda)$ and let $D = n_1 P_1 + n_2 P_2 + \cdots + n_m P_m$ be a divisor on $\RS$.  The Abel map $\mathcal A(D)$ is given by
   \begin{align}
     \mathcal A(D) = \left( \sum_{j = 1}^m n_j \int_{P_0}^{P_j} \nu_\ell \right)_{1 \leq \ell \leq g},
   \end{align}
  where the path of integration from $P_0$ to $P_j$ is chosen to be the same for each $\ell =1,2,\ldots,g$.
 \end{definition}

 \begin{definition}\label{def:BA}
   For the hyperelliptic Riemann surface $\RS$ defined by $w^2 = P(\lambda)$, let $Q_1,\ldots,Q_n$ be points on $\RS$ with local parameters $\zeta_j$, $j=1,\ldots,n$, with $\zeta_j(Q_j) = 0$, as in \eqref{local-regular}--\eqref{local-infty}.  To each point $Q_j$ associate an arbitrary polynomial $q_j(\zeta_j^{-1})$ of the reciprocal of the associated local parameter.  Next, let $D = P_1 + P_2 + \cdots P_g$ be an arbitrary positive divisor on $\RS \setminus \{Q_1,\ldots,Q_n\}$ of degree $g$. Then $\mathcal V(D; Q_1,\ldots,Q_n; q_1,\ldots,q_n)$ is the linear space of functions $\Psi(P)$ on $\RS$ satisfying the following properties:
   \begin{enumerate}
   \item The function $\Psi(P)$ is meromorphic on $\RS \setminus \{Q_1,\ldots,Q_n\}$ and has poles at the points of $D$.
   \item There exists a neighborhood of every point $Q_j$, $j=1,\ldots,n$, such that the product\newline $\Psi(P) \exp \left( - q_j\left(\zeta_j(P)^{-1} \right) \right)$ is analytic in this neighborhood.
%   and
%     \begin{align}
%       \Psi(P) \exp \left( - q_j\left(\zeta_j(P)^{-1} \right) \right) = O(1)
%     \end{align}
%     as $P\to Q_j$ for each $j$.
   \end{enumerate}
   Such a function $\Psi(P)$ is called a Baker-Akhiezer function. 
 \end{definition}
 
 The following theorem from \cite[Theorem 2.24]{Algebro} is quite useful in this work.  We do not define all the quantities that arise in its statement but only highlight the components that are crucial for us to proceed.  

 \begin{theorem}\label{t:one_dim}
   The space $\mathcal V(D; Q_1,\ldots,Q_n; q_1,\ldots,q_n)$ is one-dimensional for a non-special divisor\footnote{The divisors $D$ that we encounter in this work will always be  of the form $D = P_1+ P_2 + \ldots + P_g$ for distinct $P_j$.  Such divisors are non-special \cite{DubrovinNotes}.} $D$ and polynomials $q_j$ with sufficiently small coefficients.  Its basis is described explicitly by
   \begin{align}\label{eq:ratio-theta}
     \Psi_0(P) = \frac{\Theta \left( \mathcal A(P) + \mathbf{v} - \mathbf{d}; \mathbf{B}\right)}{\Theta \left( \mathcal A(P) -  \mathbf{d}; \mathbf{B}\right)} \ee^{\Omega(P)}
   \end{align}
   where $\Omega(P)$ is a normalized Abelian integral of the second kind\footnote{A normalized Abelian differential is a meromorphic differential that is normalized to integrate to zero over all the $\mathfrak a$-cycles.} with poles at the points $Q_1,\ldots,Q_n$,  the principle parts of which coincide with the polynomials $q_j(z_j)$, $j=1,\ldots,n$, $\Theta$ is Riemann's theta function, and $\mathbf{v}$ is a vector of the $b$-periods of the integrals of $\Omega(P)$:
   \begin{align}\label{eq:Omega-cycles}
     v_j = \oint_{\mathfrak{b}_j} \dd \Omega, \quad j =1,\ldots, g.
   \end{align}
   Further, $\mathbf{d} =  \mathcal A(D) + \mathbf{k}$ where $\mathcal{A}(D)$ is the Abel map and $\mathbf{k}$ is a vector of Riemann constants, and the integration path for the integrals
   \begin{align}
     \Omega(P) = \int_{P_0}^P \dd \Omega \quad \text{and} \quad \mathcal A(P)
   \end{align}
   is chosen to be the same.
 \end{theorem}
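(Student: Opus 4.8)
The plan is to split the argument into two independent parts: first to show $\dim \mathcal{V} \le 1$ by a dimension count, and then to exhibit the explicit function \eqref{eq:ratio-theta} as a nonzero element of $\mathcal{V}$, which together force $\dim \mathcal{V} = 1$. The key reduction in both parts is to strip off the exponential essential singularities by multiplying by $\ee^{-\Omega(P)}$, where $\Omega$ is the normalized Abelian integral of the second kind whose principal parts at $Q_1,\ldots,Q_n$ match the prescribed polynomials $q_j$.

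For the upper bound, I would observe that the map $\Psi \mapsto \Psi\, \ee^{-\Omega}$ sends any $\Psi \in \mathcal{V}$ to a function whose only singularities are poles bounded by the degree-$g$ divisor $D$, since the matched principal parts cancel the essential singularities. Because $\Omega$ carries nonzero $\mathfrak{b}$-periods $\mathbf{v}$ (and vanishing $\mathfrak{a}$-periods by normalization), the factor $\ee^{-\Omega}$ is multiplicatively multivalued, so $\Psi\,\ee^{-\Omega}$ is a global section of $\mathcal{O}(D)$ twisted by the flat degree-zero line bundle $L_{\mathbf{v}}$ determined by $\mathbf{v}$. The Riemann--Roch theorem gives index $\deg D - g + 1 = 1$, and for a non-special $D$ whose twisted divisor class stays off the theta divisor the speciality index vanishes, so $\dim \mathcal{V} = h^0(\mathcal{O}(D)\otimes L_{\mathbf{v}}) \le 1$. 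The hypothesis of small coefficients enters precisely here: it keeps $\mathbf{v}$ small so that the twisted class avoids the theta divisor, where the dimension would jump.

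For the construction I would verify that \eqref{eq:ratio-theta} actually defines a single-valued meromorphic function on $\RS$ with the required structure. Single-valuedness across the $\mathfrak{a}$-cycles is immediate: $\ee^{\Omega}$ has trivial $\mathfrak{a}$-monodromy by normalization, and an $\mathfrak{a}$-period shift of $\mathcal{A}(P)$ alters the numerator and denominator theta factors identically, leaving the ratio unchanged. The decisive check is across the $\mathfrak{b}$-cycles: here I would match the theta quasi-periodicity factor against the $\mathfrak{b}$-monodromy $\ee^{v_k}$ of $\ee^{\Omega}$. The shift by $\mathbf{v}$ in the numerator argument produces exactly the compensating factor $\ee^{-v_k}$, and these cancel by the defining relation \eqref{eq:Omega-cycles}. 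Finally, I would invoke Riemann's vanishing theorem together with the choice $\mathbf{d} = \mathcal{A}(D) + \mathbf{k}$ to conclude that the denominator theta vanishes precisely on $D$, so that \eqref{eq:ratio-theta} has poles exactly at $D$ and the prescribed essential singularities at the $Q_j$, placing it in $\mathcal{V}$.

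I expect the main obstacle to be the $\mathfrak{b}$-cycle single-valuedness computation, which demands careful bookkeeping of the theta quasi-periodicity law against both the numerator shift $\mathbf{v}$ and the monodromy of $\ee^{\Omega}$, with all signs and factors of $2\pi\ii$ consistent with the normalization $\oint_{\mathfrak{a}_k}\nu_j = 2\pi\ii\,\delta_{jk}$ adopted here. A secondary subtlety is to make the non-speciality hypothesis precise enough to guarantee that the denominator theta does not vanish identically along $P \mapsto \mathcal{A}(P)$; this is what simultaneously licenses the Riemann--Roch count and the well-definedness of the quotient in \eqref{eq:ratio-theta}.
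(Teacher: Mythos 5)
You should know at the outset that the paper contains no proof of this statement: Theorem~\ref{t:one_dim} is imported verbatim from the reference cited as \cite{Algebro} (its Theorem 2.24), with the authors explicitly declining to define all the quantities involved, so there is no internal argument to compare against. Judged on its own merits, your sketch is the standard proof found in that literature and is correct in outline. Multiplying by $\ee^{-\Omega}$ identifies $\mathcal{V}$ with the space of holomorphic sections of $\mathcal{O}(D)$ twisted by the flat degree-zero bundle with $\mathfrak{b}_k$-monodromy $\ee^{-v_k}$; Riemann--Roch together with non-speciality (stable under the small perturbation by $\mathbf{v}$, which is exactly where the small-coefficients hypothesis enters) gives dimension at most one. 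For existence, the theta quotient \eqref{eq:ratio-theta} times $\ee^{\Omega}$ is single-valued: the $2\pi\ii$-periodicity of $\Theta$ and the vanishing $\mathfrak{a}$-periods of $\dd\Omega$ handle the $\mathfrak{a}$-cycles, while around $\mathfrak{b}_k$ the quasi-periodicity factor $\ee^{-v_k}$ produced by the shift $\mathbf{v}$ in the numerator cancels the monodromy $\ee^{v_k}$ of $\ee^{\Omega}$ (your bookkeeping matches the conventions $\oint_{\mathfrak{a}_k}\nu_j = 2\pi\ii\delta_{jk}$ and the quasi-periodicity law as stated in Section~\ref{s:move}); Riemann's vanishing theorem with $\mathbf{d}=\mathcal{A}(D)+\mathbf{k}$ then places the poles exactly on $D$. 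The two refinements you flag at the end---that the denominator theta not vanish identically along the Abel image, and that the twisted class stay off the theta divisor---are precisely the points a complete write-up must supply, and both follow from non-speciality of $D$ together with smallness of $\mathbf{v}$, so your reading of the hypotheses is the right one.
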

 
 We now focus our attention to solutions of $\eqref{Schrodinger-t-0}$. First, fix $x_0\in \mathbb{R}$ and define a set of fundamental solutions $\{ c(x;\lambda), s(x;\lambda)\}$ of
\eqref{Schrodinger-t-0} that are determined by
 \begin{align}
   c(x_0;\lambda) = 1, \quad c_x(x_0;\lambda) = 0,\label{eq:c}\\
   s(x_0;\lambda) = 0, \quad s_x(x_0;\lambda) = 1.\label{eq:s}
 \end{align}
 It is easy to verify that these solutions solve the following Volterra integral equations
 \begin{align}
   c(x;\lambda) = \cos\left( \sqrt{\lambda}(x-x_0) \right) - \int_{x_0}^x \frac{\sin \left( \sqrt{\lambda} (x - y) \right)}{\sqrt{\lambda}} q_0(y) c(y;\lambda) \dd y,\\
    s(x;\lambda) = \frac{\sin\left( \sqrt{\lambda}(x-x_0) \right)}{\sqrt{\lambda}} - \int_{x_0}^x \frac{\sin \left( \sqrt{\lambda} (x - y) \right)}{\sqrt{\lambda}} q_0(y) s(y;\lambda) \dd y.
 \end{align}
This demonstrates that these two solutions are entire functions of $\lambda$ for given $x$ because cosine and sine are even and odd functions, respectively, and the paths of integration are finite. We omit the parametric dependence of these solutions on $x_0$ in our notation. Next, we define the \emph{monodromy operator} $\mathcal{T}$ for \eqref{Schrodinger-t-0} by
$
(\mathcal{T}\psi)(x):=\psi(x+L)
$
and represent the action of $\mathcal{T}$ on the set of fundamental solutions constructed above. Consider the first-order system equivalent to \eqref{Schrodinger-t-0}
\begin{equation}
\label{first-order}
\frac{\dd}{\dd x}
\begin{bmatrix} 
\psi(x;\lambda) \\ \psi_x(x;\lambda) 
\end{bmatrix}
=
\begin{bmatrix}
0 & 1 \\ -(\lambda+q_0(x)) & 0
\end{bmatrix}
\begin{bmatrix} 
\psi(x;\lambda) \\ \psi_x(x;\lambda) 
\end{bmatrix}
\end{equation}
along with its fundamental solution matrix
\begin{equation}
\mathbf{F}(x;\lambda):=\begin{bmatrix} c(x;\lambda) & s(x;\lambda) \\ c_x(x;\lambda) & s_x(x;\lambda) \end{bmatrix},
\end{equation}
which is unimodular (see \cite[\S 1.6]{DubrovinNotes}) and satisfies $\mathbf{F}(x_0;\lambda)=\mathbb{I}$.
As $x\mapsto c(x+L;\lambda)$ and $x\mapsto s(x+L;\lambda)$ also define solutions of \eqref{Schrodinger-t-0} thanks to the periodicity of $q_0$,
\begin{equation}
\mathbf{F}(x+L;\lambda):=\mathbf{F}(x;\lambda)\mathbf{T}(\lambda)
\end{equation}
for some $x$-independent $2\times 2$ matrix $\mathbf{T}(\lambda)$. Evaluating both sides at $x=x_0$ yields
\begin{equation}
\mathbf{T}(\lambda) = \mathbf{F}(x_0+L;\lambda) = \begin{bmatrix} c(x_0+L;\lambda) & s(x_0+L;\lambda) \\ c_x(x_0+L;\lambda) & s_x(x_0+L;\lambda) \end{bmatrix},
\end{equation}
which called the \emph{monodromy matrix}. It is an entire matrix-valued function of $\lambda$ and $\det(\mathbf{T}(\lambda,x_0,L) ) = 1$. Then, for a solution $\psi(\diamond;\lambda)$ of \eqref{Schrodinger-t-0}, we have
\begin{equation}
\begin{bmatrix} 
\psi(x;\lambda) \\ \psi_x(x;\lambda) 
\end{bmatrix}
=\mathbf{F}(x,\lambda)\mathbf{a}(\lambda),
\label{psi-from-F}
\end{equation}
for some $\mathbf{a}(\lambda)\in\mathbb{C}^{2\times 1}$, and hence
\begin{equation}
\mathcal{T}^n
\left(\begin{bmatrix} 
\psi(\diamond;\lambda) \\ \psi_x(\diamond;\lambda) 
\end{bmatrix}\right)(x)
=\mathbf{F}(x,\lambda)\mathbf{T}(\lambda)^n\mathbf{a}(\lambda).
\end{equation}
for any positive integer $n$.
This and the unimodularity of $\mathbf{F}(x;\lambda)$ imply that for given $\lambda$, $\mu(\lambda)$ is an eigenvalue of $\mathcal{T}$ with an eigenfunction $\psi(\diamond;\lambda)$ in the (two dimensional) solution space of \eqref{Schrodinger-t-0} if and only if $\mu(\lambda)$ is an eigenvalue of $\mathbf{T}(\lambda)$ with an eigenvector $\mathbf{a}(\lambda)$. It can be shown that $\lambda\in\sigma_\mathrm{B}(q_0)$ if and only if $\mathcal{T}$ has an eigenvalue $\mu(\lambda)$ (in the solution space of \eqref{Schrodinger-t-0}) with $|\mu(\lambda)|=1$, and that $|\mu(\lambda)|=1$ implies $\lambda\in\mathbb{R}$ (see \cite[Lemma 1.6.4, Lemma 1.6.7]{DubrovinNotes}). To find $\mu(\lambda)$, set
$\Delta(\lambda):=\tfrac{1}{2}\tr(\mathbf{T}(\lambda))$ and note that $\Delta(\lambda)$ is entire and does not depend on $x_0$. Then the eigenvalues of $\mathbf{T}(\lambda)$ are $\mu_{\pm}(\lambda) := \Delta(\lambda) \pm \ii \sqrt{1-\Delta(\lambda)^2}$, where the square root is taken as the principal branch (positive for $-1< \Delta(\lambda)<1$), and the Bloch spectrum $\sigma_{\mathrm{B}}(q_0)$ consists of $\lambda\in\mathbb{R}$ such that $|\Delta(\lambda)|\leq 1$. 
The Bloch eigenfunctions $\psi_{\pm}(\diamond;\lambda)$ are bounded solutions of \eqref{Schrodinger-t-0} that are eigenfunctions of $\mathcal{T}$ for $\lambda\in\sigma_{\mathrm{B}}(q_0)$ with the normalization $\psi_{\pm}(x_0;\lambda)=1$, hence they are obtained by choosing the first row of $\mathbf{a}(\lambda)$ in \eqref{psi-from-F} to be equal to $1$
in solving
\begin{equation}
\mathbf{T}(\lambda) \mathbf{a}_{\pm}(\lambda) = \mu_{\pm}(\lambda)\mathbf{a}_{\pm}(\lambda)
\end{equation}
to get
\begin{equation}
\begin{split}
\psi_{\pm}(x;\lambda):=& c(x;\lambda) + \frac{\pm \ii \sqrt{1-\Delta(\lambda)^2} + \frac{1}{2}\left(T_{22}(\lambda) - T_{11}(\lambda)\right)}{T_{12}(\lambda)}s(x;\lambda)\\
=&c(x;\lambda) + \frac{\mu_{\pm}(\lambda) - c(x_0+L;\lambda)}{s(x_0+L;\lambda)}s(x;\lambda).
\end{split}
\end{equation}
Setting $\chi_{\pm}(x;\lambda):=-\ii (\partial_x \psi_{\pm}(x;\lambda))/\psi_{\pm}(x;\lambda)$, it follows that $\chi_{\pm}(x;\lambda)$ are periodic functions of $x$ with period $L$ and are independent of the choice of $x_0$ (see \cite[Lemma 1.1]{Dubrovin75-fan2}). Using the independence of the Wronskian $\wron(\psi_{\pm}(x;\lambda), c(x;\lambda))$ from $x$ yields the formula
\begin{equation}
\psi_{\pm}(x;\lambda)=c(x;\lambda) +\ii \chi_{\pm}(x_0;\lambda)  s(x;\lambda).
\label{psi-in-chi}
\end{equation}
For the finite-gap case (i.e., $g<\infty$) we are considering, $\Delta(\lambda)^2-1=0$ has finite (and odd) number of simple roots, which are the band endpoints \cite{Dubrovin75-fan2}. From the representation \cite[Eqn.\@ (1.8)]{Dubrovin75-fan2} of $\psi_{\pm}(\lambda)$ in terms of $\Re(\chi_{\pm}(x;\lambda))$ and the known asymptotic behavior (\cite[Lemma 1.1]{Dubrovin75-fan2}) of $\Re(\chi_{\pm}(x;\lambda))$ as $\lambda\to\infty$ within $\sigma_{\mathrm{B}}(q_0)$ we have that
  \begin{equation}
    \psi_{\pm}(x;\lambda) = \ee^{\pm \ii \sqrt{\lambda}^{+} (x - x_0)}(1 + o(1)), \quad \lambda  \to + \infty,
    \label{Bloch-asymptotics}
 \end{equation}
 where the branch cut for the square root is taken to be $[0,+\infty)$ and the branch is chosen so that $\sqrt{\lambda} = \ii |\lambda|^{1/2} + o (1)$ as $\lambda\to - \infty$. Here $ \sqrt{\lambda}^{+} $ for $\lambda>0$ denotes the boundary value of this branch of square root from the upper half-plane. Moreover, $\chi(x;\lambda)$ extends as a single-valued algebraic function on the Riemann surface $\Sigma$ defined in \eqref{RS-def}, with
\begin{equation}
\Re(\chi_{\pm}(x;\lambda)) = \frac{\pm R(\lambda)}{\prod_{j=1}^g (\lambda- \gamma_j(x))}
\end{equation}
where $\gamma_j(x)$ are located in gaps or their endpoints: $\beta_j \leq \gamma_j(x) \leq {\alpha_{j+1}}$. One also has the identity
\begin{equation}
{\psi_+(x;\lambda)\psi_-(x;\lambda)} = \frac{\prod_{j=1}^g (\lambda- \gamma_j(x))}{\prod_{j=1}^g (\lambda- \gamma_j(x_0))},
\label{psi-product-identity}
\end{equation}
see \cite[Theorem 2.1]{Dubrovin75-fan2}, and also \cite{DubrovinN74}.
It then follows (see \cite[Theorem 2.3]{Dubrovin75-fan1}) that the function $\Psi(x; P)$ defined on the Riemann surface $\Sigma$ by
 \begin{align} \label{eq:def-Psi}
   \Psi(x;P) = \begin{cases} 
     {\psi_+}(x;\lambda), \quad P = (\lambda, + R(\lambda)),\\
     {\psi_-}(x;\lambda), \quad P = (\lambda, - R(\lambda)).\end{cases}
 \end{align}
extends as a single-valued meromorphic (for $x\neq x_0$) function on $\Sigma\setminus \{\infty\}$ with poles at locations where $\chi(x_0;\lambda)$ has its simple poles (see \eqref{psi-in-chi}), namely, at $\lambda=\gamma_j(x_0)$, $j=1,2,\ldots, g$. The identity \eqref{psi-product-identity} implies that $\Psi(x;P)$ has a pole only on one of the sheets: $P_j:=(\gamma_j(x_0), \sigma_j R(\gamma_j(x_0)))$, one in each of the gaps, where $\sigma_j$ is either $1$ or $-1$, $j=1,2,\ldots,g$. $\Psi(x;P)$ also has an essential singularity at $P=\infty$ and its behavior for $P$ near $\infty$ is given by
\begin{equation}
\Psi(x;P)\ee^{-\ii z(P)(x-x_0)} = 1+ o(1),
\end{equation} 
where $z(P)$ denotes the reciprocal of the local coordinate \eqref{local-infty} near $\infty$: $z(P)^2 = \lambda$. Recalling Definition~\ref{def:BA}, these facts show that $\Psi(x;P)$ is a Baker-Akhiezer function on $\Sigma$ with $n=1$, $Q_1=\infty$ with the associated polynomial $q_1(z):=z$, and with the non-special divisor $D = P_1 + P_2 + \cdots + P_g$. Moreover, these conditions uniquely determine $\Psi(x;P)$ by Theorem~\ref{t:one_dim}.

\begin{remark}
The zeros of $\Psi(x;P)$ are at the points where $\lambda=\gamma_j(x)$, and they lie also in the gaps. It's well-known that the potential $q_0(x)$ can be recovered via the formula
\begin{equation}\label{eq:dubrovin-recover}
q_0(x) = 2\sum_{j=1}^g \gamma_j(x) - \sum_{j=1}^{g}(\alpha_j+\beta_j) - \alpha_{g+1},
\end{equation}
see, for example, \cite{DubrovinN74}. Our method for obtaining $q_0$ from $\Psi$ makes no reference to this formula, and hence avoids root-finding.
\end{remark}

 \subsubsection{Time dependence}
{The Bloch solutions $\psi_{\pm}$ of \eqref{Lax-x} can be constructed at a given fixed time $t$ as $q(\diamond,t)$ evolves according to the KdV equation. 
Let $\psi^{[t]}_{\pm}(x;\lambda)$ denote these solutions and we have $\psi^{[0]}_{\pm}(x;\lambda)=\psi_{\pm}(x;\lambda)$ which were studied in the previous subsection. 
While $\psi^{[t]}_{\pm}(x;\lambda)$ solve \eqref{Lax-x} with \eqref{Bloch-asymptotics} and the normalization $\psi^{[t]}_\pm(x_0;\lambda)=1$, they do not provide a set of simultaneous solutions of \eqref{Lax-x}--\eqref{Lax-t} as they do not satisfy \eqref{Lax-t}. 
A calculation identical to \cite[Proposition 6.2]{McLaughlin2018} shows that 
\begin{equation}
(\mathcal{L}(t) - \lambda)\left(\frac{\partial}{\partial t}\psi^{[t]}_{\pm}(x;\lambda) - \mathcal{P}(t)\psi^{[t]}_{\pm}(x;\lambda)\right) = 0
\end{equation}
which implies that
\begin{equation}
\begin{split}
    \frac{\partial}{\partial t} \psi^{[t]}_{\pm}(x;\lambda) + d_{\pm}(t;\lambda)\psi^{[t]}_{\pm}(x;\lambda) &= \mathcal{P}(t)\psi^{[t]}_{\pm}(x;\lambda)\\
    &=  (4\lambda-2q(x,t))\frac{\partial}{\partial x}\psi^{[t]}_{\pm}(x;\lambda) + q_x(x,t)\psi^{[t]}_{\pm}(x;\lambda)
\end{split}
\label{psi-time}
\end{equation}
for $x$-independent coefficients $d_{\pm}(t;\lambda)$ that are given by
\begin{equation}
    d_{\pm}(t;\lambda)= (4\lambda-2q(x_0,t))\frac{\mu_{\pm}(\lambda) - c(x_0+L;\lambda)}{s(x_0+L;\lambda)} +  q_x(x_0,t).
\end{equation}
These are obtained by evaluating \eqref{psi-time} at $x=x_0$. Again from \cite[Proposition 6.2]{McLaughlin2018} (see also \cite[Proposition 3.3]{McLaughlin2018}) we have the asymptotic behavior
\begin{equation}
    d_{\pm}(t;\lambda) = \pm 4\ii(\sqrt{\lambda})^3 + O\left(\frac{1}{\sqrt{\lambda}}\right),\quad \lambda\to\infty.
\end{equation}
Following \cite{McLaughlin2018}, one uses the solution $\phi_{\pm}(t;\lambda)$ of
\begin{equation}
    \frac{\partial}{\partial t}\phi_{\pm}(t;\lambda) = d_{\pm}(t;\lambda)\phi_{\pm}(t;\lambda)
\end{equation}
satisfying $\phi_{\pm}(0;\lambda)=1$. Then 
\begin{equation}
    \psi_{\pm}(x,t;\lambda) := \psi^{[t]}_{\pm}(x;\lambda)\phi_{\pm}(t;\lambda)
    \label{psi-simultaneous}
\end{equation}
define a set of simultaneous solutions of \eqref{Lax-x}--\eqref{Lax-t}. As proved in \cite[Proposition 6.3]{McLaughlin2018}, $\phi_{\pm}(t;\lambda)$ satisfy
\begin{equation}
    \phi_{\pm}(t;\lambda) = \ee^{\pm 4 \ii (\sqrt{\lambda})^{3} t}\left(1+O\left(\frac{1}{\sqrt{\lambda}}\right)\right)\quad \lambda \rightarrow \infty.
\end{equation}
Moreover, the product in \eqref{psi-simultaneous} fixes the poles of $\psi_\pm$ in time, see \cite[Proposition 6.3]{McLaughlin2018}.
Thus, with $\psi_{\pm}(x,t;\lambda)$ we introduce a Baker-Akhiezer function $\Psi(x,t;P)$ on the Riemann surface with all the same properties as \eqref{eq:def-Psi} with the exception of the replacement of the asymptotics with
 \begin{align}
\Psi(x,t;P) = \ee^{\ii z(P) (x - x_0) + 4 \ii z(P)^3 t}( 1 + o(1)), \quad z(P) \to \infty.
 \end{align}
}

%  Theorem~\ref{t:one_dim} then guarantees the existence and uniqueness of a Baker-Akhiezer function $\Psi(x,t;P)$ with all the same properties as \eqref{eq:def-Psi} with the exception of the replacement of the asymptotics with
%  \begin{align}
% \Psi(x,t;P) = \ee^{\ii z(P) (x - x_0) + 4 \ii z(P)^3 t}( 1 + o(1)), \quad z(P) \to \infty.
%  \end{align}
%  This function $\Psi$ can then be mapped to its planar form $\psi_{\pm}(x,t;\lambda)$ by inverting the relation \eqref{eq:def-Psi}.  The so-called dressing method (see \cite[Theorem~2.1]{TrogdonDressing}) applies to show that this is a simultaneous solution of the Lax pair.  Further discussion of this is given in Remark~\ref{r:dress} below.
 
\subsection{Moving poles to the band endpoints}\label{s:move} The procedure described here resembles what was employed in the earlier works \cite{TrogdonFiniteGenus, Trogdon2013a} (see also \cite[Chapter 11]{TrogdonSOBook}). However, we make an observation that enables the treatment of the case when the genus $g$ is large.  For $z \in \mathbb C \setminus [\alpha_1,\infty)$ define
\begin{align*}
  \boldsymbol \Delta(\lambda; \mathbf d, \mathbf v) = \begin{bmatrix} \displaystyle \frac{\Theta(\mathcal A(\lambda) + \mathbf v - \mathbf d; \mathbf B)}{\Theta(\mathcal A(\lambda) - \mathbf d; \mathbf B)} & \displaystyle \frac{\Theta(-\mathcal A(\lambda) + \mathbf v - \mathbf d; \mathbf B)}{\Theta(-\mathcal A(\lambda) - \mathbf d; \mathbf B)} \end{bmatrix},
\end{align*}  
where $\mathcal A(\lambda) = \mathcal A ( \lambda, R(\lambda))$ is the Abel map restricted to the first sheet.  Note that $- \mathcal A(\lambda) = \mathcal A ( \lambda, -R(\lambda))$ is then the Abel map restricted to the second sheet.  The following properties of the theta function are now needed
\begin{align*}
  \Theta(\mathbf z + 2 \pi \ii \mathbf e_j; \mathbf B) &=  \Theta(\mathbf z; \mathbf B),\\
  \Theta(\mathbf z + \mathbf B \mathbf e_j; \mathbf B) &= \exp \left( - \frac 1 2 B_{jj} - z_j \right)   \Theta(\mathbf z; \mathbf B),
\end{align*}
where $\mathbf e_j$ is the $j$th column of the $g\times g$ identity matrix and $\mathbf B$ is the Riemann matrix.  Then note that
\begin{align*}
  \mathcal A^+(\lambda) + \mathcal A^-(\lambda) = \left( 2 \sum_{k = 1}^{j-1} \int_{\beta_k}^{\alpha_{k+1}} \nu_\ell\right)_{\ell=1}^g = \left( \sum_{k = 1}^{j-1} \oint_{\mathfrak a_k}  \nu_\ell\right)_{\ell=1}^g = 2 \pi \ii \mathbf n, \quad \lambda \in (\alpha_j,\beta_j),
\end{align*}
for a vector $\mathbf n$ of ones and zeros.  Then we compute
\begin{align*}
  \mathcal A^+(\lambda) - \mathcal A^-(\lambda) = \left( 2 \sum_{k = 1}^{j} \int_{\alpha_k}^{\beta_{k}} \nu_\ell\right)_{\ell=1}^g = \left(  \oint_{\mathfrak b_{j}} \nu_\ell\right)_{\ell=1}^g = \mathbf B \mathbf e_{j}, \quad \lambda \in (\beta_j,\alpha_{j+1}).
\end{align*}
Note that from \eqref{eq:def-Psi}, for a non-special divisor $D = \sum_{j=1}^g P_j$,
 \begin{align}
   \Theta(\mathcal A(P) - \mathcal A(D) - \mathbf k; \mathbf B) = 0 \quad \text{if and only if} \quad P \in \{P_1,\ldots,P_g\},
 \end{align}
 where, as before, $\mathbf k$ is the vector of Riemann constants with base point $\alpha_1$.  So, for two non-special divisors $D= \sum_{j=1}^g P_j$ and $D'= \sum_{j=1}^g P_j'$ we choose
 {$\mathbf{v}=\mathbf{v}(D,D')$ and $\mathbf{d}=\mathbf{d}(D,D')$ by}
 \begin{align}
   \mathbf v - \mathbf d &= - \mathcal A(D) - \mathbf k,\label{v-minus-d}\\
   - \mathbf d &= - \mathcal A(D') - \mathbf k.\label{just-d}
 \end{align}
 For $P = (\lambda,w) \in \Sigma$, define $\pi(\lambda,w) = \lambda$.  Then it follows that if $P_j$ is on the first (second) sheet of $\Sigma$ then $\boldsymbol \Delta(\lambda;\mathbf d,\mathbf v)$ has a zero at $\pi(P_j)$ in its first (second) column.  Similarly, if $P_j'$ is on the first (second) sheet of $\Sigma$ then $\boldsymbol \Delta(\lambda;\mathbf d,\mathbf v)$ has a pole at $\pi(P_j')$ in its first (second) column.

 Now suppose that $\lambda$ is not a pole of either column of $\boldsymbol \Delta$.  Then
 \begin{align*}
  \boldsymbol \Delta^+(\lambda; \mathbf d, \mathbf v) = \begin{cases} \boldsymbol\Delta^-(\lambda; \mathbf d, \mathbf v) 
  %\begin{bmatrix} 0 & 1 \\ 1 & 0 \end{bmatrix} 
  {\sigma_1}
&  {\lambda \in\left(\alpha_{g+1},+\infty\right)\cup\left(\bigcup_{j=1}^{g}\left(\alpha_{j}, \beta_{j}\right)\right)},\\ 
    \boldsymbol\Delta^-(\lambda; \mathbf d, \mathbf v) \begin{bmatrix} \ee^{-v_j} & 0 \\ 0 & \ee^{v_j} \end{bmatrix} & \lambda \in (\beta_j,\alpha_{j+1}),\\
     \boldsymbol \Delta^-(\lambda; \mathbf d, \mathbf v) & \lambda \in (-\infty,\alpha_1).   \end{cases}
\end{align*}

 Choose the divisor
 \begin{align}\label{eq:Dprime}
   D' = (\alpha_2,0) + (\alpha_3,0) + \cdots + (\alpha_{g+1},0),
 \end{align}
 and let $D$ be the divisor of the poles of the Baker-Akhiezer function $\Psi(P,t)$. Then {consider $\mathbf{v}=\mathbf{v}(D,D')$ and $\mathbf{d}=\mathbf{d}(D,D')$ with these choices as in \eqref{v-minus-d} and \eqref{just-d}}. The function
 \begin{align}
   {\Xi}(P;x,t) := \Psi(P,t) \frac{\Delta(\infty;\mathbf d, \mathbf v)}{\Delta(P;\mathbf d, \mathbf v)}
   \label{Xi-def}
 \end{align}
 {now has poles at the right endpoints of the gaps, namely  the points where $\lambda=\alpha_2,\alpha_3,\ldots,\alpha_{g+1}$.}
We arrive at the following proposition.
 \begin{proposition}\label{p:xi}
   The sectionally analytic vector-valued function
   \begin{align}
     \boldsymbol{\Xi}(\lambda;x,t) =: \begin{bmatrix} \xi_{+}(\lambda;x,t) & \xi_{-}(\lambda;x,t) \end{bmatrix}
   \end{align}
   satisfies the following jump conditions away from poles
   \begin{align}
     \boldsymbol{\Xi}^{+}(\lambda;x,t) = \boldsymbol{\Xi}^{-}(\lambda;x,t) \sigma_1, \quad \lambda \in (\alpha_{g+1}, +\infty) \cup \left( \bigcup_{j=1}^g (\alpha_j, \beta_j) \right),\\
     \boldsymbol{\Xi}^{+}(\lambda;x,t) = \boldsymbol{\Xi}^{-}(\lambda;x,t) \begin{bmatrix} \ee^{v_j} & 0 \\ 0 & \ee^{-v_j} \end{bmatrix}, \quad \lambda \in (\beta_j,\alpha_{j+1})
   \end{align}
   where $\boldsymbol{\Xi}^{\pm}(\lambda;x,t) = \lim_{\epsilon \downarrow 0} \boldsymbol{\Xi}^{\pm}(\lambda \pm \ii \epsilon;x,t)$.  The asymptotics
   \begin{align}
      \boldsymbol{\Xi}(\lambda;x,t)  \ee^{-(\ii \sqrt{\lambda} x + 4\ii \lambda^{3/2} t)\sigma_3} = \begin{bmatrix} 1 & 1 \end{bmatrix} + O\left( \frac{1}{\sqrt{\lambda}}\right), \quad |\lambda| \to \infty,
   \end{align}
   also hold.
 \end{proposition}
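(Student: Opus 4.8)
The plan is to reduce the jump and asymptotic statements for $\boldsymbol{\Xi}$ to the corresponding statements for the two factors appearing in its definition \eqref{Xi-def}: the planar Baker--Akhiezer data $\boldsymbol{\Psi} = \begin{bmatrix}\psi_+ & \psi_-\end{bmatrix}$ from \eqref{eq:def-Psi}, and the theta-ratio row vector $\boldsymbol{\Delta}$, whose jumps across the bands and gaps were recorded just above. Writing the planar representation of \eqref{Xi-def} entrywise, I have $\xi_+(\lambda) = \Delta(\infty)\,\psi_+(\lambda)/\Delta_1(\lambda)$ and $\xi_-(\lambda) = \Delta(\infty)\,\psi_-(\lambda)/\Delta_2(\lambda)$, where $\Delta_1,\Delta_2$ are the two entries of $\boldsymbol{\Delta}$; the indices match because $\mathcal A$ restricted to the second sheet equals $-\mathcal A$, so the second component of $\boldsymbol{\Delta}$ is evaluated on the same sheet as $\psi_-$. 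Thus each entry of $\boldsymbol{\Xi}$ is a ratio of the like-indexed entries of $\boldsymbol{\Psi}$ and $\boldsymbol{\Delta}$, and its jumps are dictated by the simultaneous jumps of these two row vectors. The poles of $\Psi$ in the gaps and the zeros/poles of $\boldsymbol{\Delta}$ at $D$ and $D'$ play no role here, since the assertion is made only away from poles.

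First I would establish the planar jumps of $\boldsymbol{\Psi}$ directly from the single-valuedness of $\Psi(x;\cdot)$ on $\Sigma$ together with the branch structure of $R$. Because the bands are exactly the cuts of $R$ and $R^- = -R^+$ there, approaching a band from below on the first sheet lands on the second-sheet point $(\lambda,-R^+(\lambda))$; single-valuedness of $\Psi$ then gives $\psi_+^- = \psi_-^+$ and $\psi_-^- = \psi_+^+$, that is $\boldsymbol{\Psi}^+ = \boldsymbol{\Psi}^-\sigma_1$ on $\sigma_{\mathrm B}(q_0)$. On the open gaps and on $(-\infty,\alpha_1)$ the function $R$ is analytic, so $\psi_\pm$ continue analytically across and $\boldsymbol{\Psi}$ has no jump (away from the isolated poles $\gamma_j(x_0)$). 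Combining these with the already-computed jumps of $\boldsymbol{\Delta}$ through the entrywise ratio yields the result: on a band both numerator and denominator entries swap, so the ratio entries swap and $\boldsymbol{\Xi}^+ = \boldsymbol{\Xi}^-\sigma_1$; on a gap $\boldsymbol{\Psi}$ is continuous while $\boldsymbol{\Delta}$ contributes $\diag(\ee^{-v_j},\ee^{v_j})$, and carrying this factor into the denominator reciprocates it, producing $\boldsymbol{\Xi}^+ = \boldsymbol{\Xi}^-\diag(\ee^{v_j},\ee^{-v_j})$. These are precisely the two stated jump conditions.

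For the asymptotics I would isolate the normalizing factor $\Delta(\infty)/\Delta_\ell(\lambda)$. Since $\infty$ is a single (branch) point of $\Sigma$ lying in the support of neither $D$ nor $D'$, the function $\Delta(P)$ is holomorphic and nonvanishing in a full neighborhood of $\infty$; hence the first- and second-sheet limits $\Delta_1(\lambda)$ and $\Delta_2(\lambda)$ both approach the common value $\Delta(\infty)$ as $\lambda\to\infty$, and expanding the theta-ratio in the local coordinate $\zeta = z(P)^{-1} = \lambda^{-1/2}$ refines this to $\Delta(\infty)/\Delta_\ell(\lambda) = 1 + O(\lambda^{-1/2})$. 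Multiplying by the Baker--Akhiezer asymptotics, which in the same coordinate read $\Psi(x,t;P)\,\ee^{-(\ii z(P)(x-x_0)+4\ii z(P)^3 t)} = 1 + O(\zeta)$ by the holomorphy of $\Psi\,\ee^{-q_1(\zeta^{-1})}$ near $Q_1=\infty$ in Definition~\ref{def:BA}, and setting $z(P) = \pm\sqrt\lambda$ on the two sheets, gives $\xi_\pm\,\ee^{\mp(\ii\sqrt\lambda\,x + 4\ii\lambda^{3/2}t)} = 1 + O(\lambda^{-1/2})$ after fixing $x_0=0$, which is exactly $\boldsymbol{\Xi}\,\ee^{-(\ii\sqrt\lambda\,x + 4\ii\lambda^{3/2}t)\sigma_3} = \begin{bmatrix}1 & 1\end{bmatrix} + O(\lambda^{-1/2})$.

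I expect the main obstacle to be the first step: carefully converting single-valuedness of $\Psi$ on the surface into the precise planar swap $\boldsymbol{\Psi}^+ = \boldsymbol{\Psi}^-\sigma_1$ while keeping the band orientation and the convention $R^\pm = \pm R$ consistent, and confirming that on the gaps the entire jump of $\boldsymbol{\Xi}$ originates from $\boldsymbol{\Delta}$ rather than from $\boldsymbol{\Psi}$. A secondary technical point is upgrading the Baker--Akhiezer estimate from the $1+o(1)$ form of \eqref{Bloch-asymptotics} to the uniform $1+O(\lambda^{-1/2})$ rate, which rests on the local holomorphy of the normalized product at $\infty$ rather than on that weaker bound.
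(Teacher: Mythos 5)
Your proposal is correct and follows essentially the same route as the paper: the paper gives no separate proof, since the proposition is meant to follow immediately from the preceding computation of the jumps of $\boldsymbol{\Delta}$ (via theta-function periodicity and the Abel-map relations on bands and gaps), combined entrywise with the twist jump $\boldsymbol{\Psi}^+ = \boldsymbol{\Psi}^-\sigma_1$ on the bands and the Baker--Akhiezer normalization at $\infty$ --- which is exactly the combination you carry out. Your two added details (deriving the $\sigma_1$ jump of $\boldsymbol{\Psi}$ from single-valuedness on $\RS$, and upgrading $1+o(1)$ to $1+O(\lambda^{-1/2})$ via holomorphy of $\Psi\ee^{-q_1(\zeta^{-1})}$ in the local coordinate at $\infty$) are faithful fillings-in of steps the paper leaves implicit.
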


\begin{figure}[h]
\includegraphics{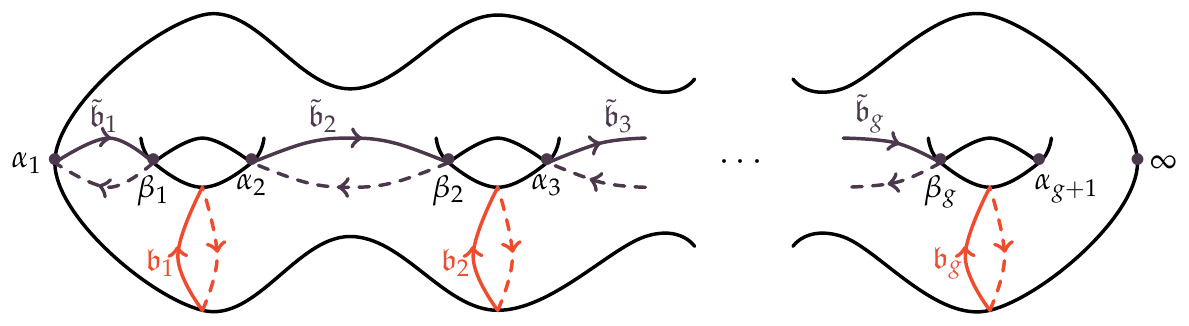}
\caption{An illustration of the choices for the $\mathfrak{b}$- and the $\tilde{\mathfrak{b}}$-cycles on the hyperelliptic Riemann surface $\RS$.}
\label{f:Riemann-Surface-Cycles-tilde}
\end{figure}

\subsection{The Riemann--Hilbert problem} \label{s:form_RHP}
Let ${\boldsymbol\Psi}(\diamond;x,t) \colon \mathbb{C}^{1\times 2}\setminus \sigma_\mathrm{B}(q) \to \mathbb{C}$ denote the row vector planar representation of the Baker-Akhiezer function associated with $q(x,t)$:
\begin{equation}
{\boldsymbol\Psi}(\lambda;x,t) := \begin{bmatrix} {\psi_+}(\lambda;x,t) & {\psi_-}(\lambda;x,t) \end{bmatrix},
\label{bold-Psi}
\end{equation}
which satisfies the ``twist'' jump condition
\begin{equation}
{\boldsymbol\Psi}^+(\lambda;x,t)={\boldsymbol\Psi}^-(\lambda;x,t) \sigma_1
%\begin{bmatrix} 0 & 1 \\ 1 & 0 \end{bmatrix}
,\quad \lambda \in\Sigma
\end{equation}
and has the asymptotic behavior
\begin{equation}
{\boldsymbol\Psi}(\lambda;x,t) = \begin{bmatrix} \ee^{\ii \lambda^{\frac{1}{2}}(x + 4 \lambda t)}& \ee^{-\ii \lambda^{\frac{1}{2}}(x + 4 \lambda t)}\end{bmatrix}\left(\mathbb{I} + o(1) \right),\quad \lambda\to\infty.
\end{equation}
Here the power function $\lambda\mapsto \lambda^{\frac{1}{2}}$ is defined to be analytic on $\mathbb{C}\setminus [ 0, +\infty)$, satisfying $\lambda^{\frac{1}{2}} = \ii |\lambda|^{\frac{1}{2}} + o(1)$ as $\lambda\to -\infty$. Set $\theta(\lambda;x,t):= \lambda^{\frac{1}{2}}(x + 4 \lambda t)$ and observe that $\theta(\lambda;x,t)$ has a jump discontinuity across the half-line $(0, +\infty)$, which we orient from $\lambda=0$ to $\lambda=+\infty$.
Based on the considerations in the previous section, we assume that the poles of $\mathbf \Psi$ occur at the points in the divisor \eqref{eq:Dprime}.

Define the renormalized row-vector-valued function
\begin{equation}\label{eq:Mdef}
\mathbf{M}(\lambda;x,t) := {\boldsymbol\Psi}(\lambda;x,t) \ee^{-\ii \theta(\lambda;x,t)\sigma_3}.
\end{equation}
As $\theta^+(\lambda;x,t) + \theta^-(\lambda;x,t) = 0$ for $\lambda\in [0,+\infty)$, the jump conditions satisfied by $\mathbf{M}(\lambda;x,t)$ take the form %\denizcomment{[decide on the open-close intervals depending on the location of the poles!]}
\begin{alignat}{2}
\mathbf{M}^+(\lambda;x,t) &= \mathbf{M}^-(\lambda;x,t)\sigma_1,&&\quad \lambda\in (\alpha_{g+1}, +\infty) \cup \left( \bigcup_{j=1}^g (\alpha_j, \beta_j)\right),\\
\mathbf{M}^+(\lambda;x,t) &= \mathbf{M}^-(\lambda;x,t) \ee^{-2\ii  \theta^+(\lambda;x,t)\sigma_3}
%\begin{bmatrix} \ee^{-2\ii  \theta^+(\lambda;x,t)} & 0 \\ 0 & \ee^{2\ii  \theta^+(\lambda;x,t)} \end{bmatrix}
,&&\quad \lambda\in\bigcup_{j=1}^g (\beta_j, \alpha_{j+1}),
\end{alignat}
and $\mathbf{M}(\lambda;x,t)$ satisfies
\begin{equation}
\mathbf{M}(\lambda;x,t) = \begin{bmatrix} 1 & 1 \end{bmatrix}(\mathbb{I} + o(1)), \quad \lambda\to\infty.
\end{equation}

\begin{remark}\label{r:dress}
An important calculation to make here is to define
\begin{align*}
    \mathbf{K}(z;x,t) = \begin{cases} 
    \mathbf{M}(z^2;x,t) & \Im z > 0,\\
    \mathbf{M}(z^2;x,t) \sigma_1 & \Im z < 0. \end{cases}
\end{align*}
Then apply \cite[Theorem~2.1]{TrogdonDressing} to see that $\mathbf K$ (and therefore $\mathbf M$, and hence $\Psi(x,t;P)$) is a simultaneous solution of an appropriate version of the Lax pair for the KdV equation.
\end{remark}

To control the oscillatory factors in the jump matrices above, we seek a function $G(\lambda;x,t)$ that is analytic for $\lambda\in \mathbb{C}\setminus [0, +\infty)$ satisfying
\begin{alignat}{2}
G^+(\lambda;x,t) + G^-(\lambda;x,t) &= 0, &&\quad \lambda\in (\alpha_{g+1}, +\infty) \cup \left( \bigcup_{j=1}^g (\alpha_j, \beta_j)\right)\label{G-jump-bands}\\
G^+(\lambda;x,t) - G^-(\lambda;x,t) + 2 \theta^+(\lambda;x,t)&= \Omega_j, &&\quad \lambda\in (\beta_{j}, \alpha_{j+1}), \quad j=1,2,\ldots,g, \label{G-jump-gaps}
\end{alignat}
for some constants $\Omega_j$, and normalized to satisfy $G(\lambda) \to 0$ as $\lambda\to \infty$. It is easy to see that
\begin{equation}
G(\lambda;x,t) := \frac{R(\lambda)}{2\pi \ii}\sum_{j=1}^{g}\int_{\beta_{j}}^{\alpha_{j+1}}\frac{\Omega_j -2\theta^+(\zeta;x,t) }{R(\zeta)(\zeta-\lambda)}\dd \zeta
\end{equation}
is analytic for $\lambda\in \mathbb{C}\setminus [0, +\infty)$, admits continuous boundary values on $[0, +\infty)$ which satisfy the jump conditions \eqref{G-jump-bands}--\eqref{G-jump-gaps}. Observe that
\begin{equation}
\frac{G(\lambda)}{R(\lambda)} = \sum_{k=1}^{g} m_k(x,t) \lambda^{-k} + O(\lambda^{-g-1}), \quad \lambda\to \infty, 
\label{G-over-R-expand}
\end{equation}
where
\begin{equation}\label{eq:mk}
m_k(x,t) :=  -\frac{1}{2\pi \ii} \sum_{j=1}^{g}\int_{\beta_{j}}^{\alpha_{j+1}} \left( \Omega_j  - 2\theta^+(\zeta;x,t) \right) \frac{\zeta^{k-1}}{R(\zeta)} \dd \zeta, \quad k=1,2,\ldots,g.
\end{equation}
Thus, in order to have $G(\lambda) = o(1)$ as $\lambda\to \infty$, we need to have $m_k\equiv 0$ for $k=1,2,\ldots,g$, which yields the conditions
\begin{equation}
\sum_{j=1}^{g}  \Omega_j  \int_{\beta_{j}}^{\alpha_{j+1}} \frac{\zeta^{k-1}}{R(\zeta)} \dd \zeta = \sum_{j=1}^{g}\int_{\beta_{j}}^{\alpha_{j+1}} 2\theta^+(\zeta;x,t) \frac{\zeta^{k-1}}{R(\zeta)} \dd \zeta, \quad k=1,2,\ldots,g.
\label{Omega-linear-system}
\end{equation}
This is a linear system of $g$ equations for the constants $\Omega_j=\Omega_j(x,t)$, $j=1,2,\ldots, g$.  Taking a linear combination of these equations we can instead consider
\begin{equation}
\sum_{j=1}^{g}  \Omega_j  \int_{\beta_{j}}^{\alpha_{j+1}} \frac{p_{k-1}(\zeta)}{R(\zeta)} \dd \zeta = \sum_{j=1}^{g}\int_{\beta_{j}}^{\alpha_{j+1}} 2\theta^+(\zeta;x,t) \frac{p_{k-1}(\zeta)}{R(\zeta)} \dd \zeta, \quad k=1,2,\ldots,g,
\label{Omega-linear-system-general}
\end{equation}
for any basis $\{p_0,\ldots,p_{g-1}\}$ for polynomials of degree at most $g-1$.
Taking into account the orientation of the $\mathfrak{a}$-cycles depicted in Figure~\ref{f:Riemann-Surface-Cycles} and the sign change that occurs from passing from one sheet to the other, we have
\begin{equation}
 \int_{\beta_{j}}^{\alpha_{j+1}} \frac{p_{k-1}(\zeta)}{R(\zeta)} \dd \zeta =  \int_{\beta_{j}}^{\alpha_{j+1}} \frac{p_{k-1}(\zeta)}{w} \dd \zeta = -\frac{1}{2}\oint_{\mathfrak{a}_j} \omega_k = -\frac{1}{2} A_{kj},\quad k=1,2,\ldots,g,
\end{equation}
from \eqref{A-B-period-matrices}, choosing $p_{j-1}$ so that $p_j(\zeta)/R(\zeta) \dd \zeta = \omega_j$, and hence \eqref{Omega-linear-system} reads
\begin{equation}
\sum_{j=1}^{g} A_{kj} \Omega_j  =  - 4  \sum_{j=1}^{g}\int_{\beta_{j}}^{\alpha_{j+1}} \theta^+(\zeta;x,t) \frac{p_{k-1}(\zeta)}{R(\zeta)} \dd \zeta, \quad k=1,2,\ldots,g.
\end{equation}
Thus, the coefficient matrix for the linear system \eqref{Omega-linear-system} is nothing but a constant multiple of the matrix $\mathbf{A}$ of $\mathfrak{a}$-cycles of the basis of differentials $\{\omega_k \}_{k=1}^g$, which is nonsingular. Therefore, the system \eqref{Omega-linear-system} is uniquely solvable. Moreover, because $\theta^+(\zeta;x,t)$ is real-valued and $R(\zeta)$ is purely imaginary for $\zeta \in (\beta_j, \alpha_{j+1})$, $j=1,2,\ldots,g$, it follows that $\Omega_j$, $j=1,2,\ldots,g$, are all real valued.
Using the basis of differentials in \eqref{eq:stable-dif} results in a linear system which can be solved in a numerically stable fashion as $g$ becomes large.  

\begin{remark}
The computation of the integrals that appear in this section and the computation of the Abel map is discussed in \cite{TrogdonFiniteGenus}.  There is a numerical subtlety here.  If one computes the Abel map $\mathcal A(\lambda)$ for $\lambda$ near a branch point and $\lambda$ is known to within an error $\epsilon$, that error may be amplified to be on the order of $\sqrt{\epsilon}$.  So, if $q_0$ is such that $\gamma_j(x)$ is near a branch point, the computation of the of $v_j$ in Proposition~\ref{p:xi} may suffer increased errors.  In practice, one can choose $x_0$ in the initial scattering theory to move this away from the branch point.  Various schemes can be employed to find a good choice of $x_0$.  Choosing $x_0$ randomly is often sufficient.
\end{remark}

Now define
\begin{equation}\label{eq:N}
\mathbf{N}(\lambda;x,t) :=\mathbf{M}(\lambda;x,t) \ee^{-\ii G(\lambda;x,t)\sigma_3},
\end{equation}
and observe that $\mathbf{N}(\lambda;x,t)$ satisfies the following jump conditions:
\begin{alignat}{2}
\mathbf{N}^+(\lambda;x,t) &= \mathbf{N}^-(\lambda;x,t)\sigma_1,&&\quad \lambda\in (\alpha_{g+1}, +\infty) \cup \left( \bigcup_{j=1}^g (\alpha_j, \beta_j)\right),\\
\mathbf{N}^+(\lambda;x,t) &= \mathbf{N}^-(\lambda;x,t) \ee^{-\ii  \Omega_j(x,t)\sigma_3}
%\begin{bmatrix} \ee^{-2\ii  \theta^+(\lambda;x,t)} & 0 \\ 0 & \ee^{2\ii  \theta^+(\lambda;x,t)} \end{bmatrix}
,&&\quad \lambda\in(\beta_j, \alpha_{j+1}),\quad j =1,2,\ldots,g.
\end{alignat}

\section{A singular integral equation for the solution of the a Riemann-Hilbert problem on cuts}
\label{s:adaptive-Cauchy}

In this section we describe a numerical method to approximate \eqref{eq:N}. But before we do that, we need to make one more transformation to remove the non-trivial jump that has infinite extent.  Recall that from the discussion following Remark~\ref{r:symmetry} that we take $\alpha_1 = 0$ without loss of generality.  Define
\begin{align}
  \mathbf{S}(z;x,t) := \begin{cases} \mathbf{N}(z^2;x,t) & \Im z > 0,\\
    \mathbf{N}(z^2;x,t) \sigma_1 & \Im z < 0. \end{cases}
\end{align}
For convenience we write 
\begin{align}
\alpha_j &=: a_j^2,\quad a_j\geq 0, \quad j=1,2,\ldots,g+1,\\
\beta_j &=: b_j^2,\quad b_j>0,\quad j=1,2,\ldots,g.
\end{align}
Note that for $\Re (z) > 0$ if $\Im (z) > 0$ then $\Im (z^2) > 0$.  And similarly, if $\Re (z) > 0$,  $\Im (z) < 0$ implies $\Im (z^2) < 0$.  And if $\Re (z) < 0$, then these implications are flipped.  From this we find that $\mathbf{S}(z;x,t)$ only has jumps on the (symmetric) collection of intervals
\begin{align}
  (b_j, a_{j+1}), \quad (-a_{j+1},-b_j), \quad j =1,2,\ldots, g,
  \label{intervals}
\end{align}
where it satisfies:
\begin{alignat}{2}
  \mathbf{S}^+(z;x,t) &=  \mathbf{N}^+(z^2;x,t) = \mathbf{N}^-(z^2;x,t) \ee^{- \ii \Omega_j(x,t) \sigma_3} = \mathbf{S}^-(z;x,t) \sigma_1 \ee^{- \ii \Omega_j(x,t) \sigma_3}, &&\quad z \in (b_j, a_{j+1}), \\
  \mathbf{S}^+(z;x,t) &=  \mathbf{N}^-(z^2;x,t) = \mathbf{N}^+(z^2;x,t) \ee^{\ii \Omega_j(x,t) \sigma_3} = \mathbf{S}^-(z;x,t) \sigma_1 \ee^{ \ii \Omega_j(x,t) \sigma_3}, &&\quad z \in (-a_{j+1},-b_j),
\end{alignat}
{where we have reoriented the intervals $(-a_{j+1},-b_j)$ so that all of the intervals in \eqref{intervals} are oriented from their left endpoint to the right endpoint.}
{Moreover}, $\mathbf{S}(z;x,t)$ is normalized so that $\mathbf{S}(z;x,t) = \begin{bmatrix} 1 & 1 \end{bmatrix} + O(z^{-1})$ as $|z| \to \infty$.  

Next, we will want a formula to recover $q(x,t)$, the solution of the KdV equation \eqref{KdV}, directly from a representation of $\mathbf{S}(\lambda;x,t)$ as function of $\lambda$.  As $|z| \to \infty$, write
\begin{align}\label{eq:S}
\mathbf{S}(z;x,t) = \begin{bmatrix} 1 & 1 \end{bmatrix} + \frac{1}{z}\begin{bmatrix} s_1(x,t)& s_2(x,t) \end{bmatrix} + O(z^{-2}).
\end{align}
Supposing this limit is taken in the upper-half of the $z$-plane, this then implies that
\begin{align}
  \mathbf{N}(\lambda;x,t) = \begin{bmatrix} 1 & 1 \end{bmatrix} + \frac{1}{\sqrt{\lambda}}\begin{bmatrix} s_1(x,t)& s_2(x,t) \end{bmatrix} + O(\lambda^{-1}),\quad \lambda\to\infty.
\end{align}
Then we recall that
\begin{align}
  \mathbf{N}(\lambda;x,t) = \boldsymbol{\Psi}(\lambda;x,t) \ee^{-\ii(G(\lambda;x,t) + \theta(\lambda;x,t)) \sigma_3},
\end{align}
where the entries of the $1\times 2$ vector $\boldsymbol{\Psi}(\lambda,x,t)$ are solutions of $\mathcal L(t) \psi = \lambda \psi$, see \eqref{bold-Psi}.  So, consider the function $m(\lambda;x,t) = \psi^+(\lambda;x,t)e^{-\ii \theta(\lambda;x,t)}$:
\begin{align}
  \partial_x m(\lambda;x,t) &= \partial_x\psi^+(\lambda;x,t)e^{-\ii \theta(\lambda;x,t)} - \ii \sqrt{\lambda} \psi^+(\lambda;x,t)e^{-\ii \theta(\lambda;x,t)},\\
   \partial_{xx} m(\lambda;x,t) &= \partial_{xx}\psi^+(\lambda;x,t)e^{-\ii \theta(\lambda;x,t)} - 2\ii \sqrt{\lambda} \partial_x\psi^+(\lambda;x,t)e^{-\ii \theta(\lambda;x,t)} - \lambda \psi^+(\lambda;x,t)e^{-\ii \theta(\lambda;x,t)}.
\end{align}
Adding these so as to eliminate the $\partial_x \psi^+$ term, we find
\begin{equation}
\begin{aligned}
   \partial_{xx} m(\lambda;x,t) + 2 \ii \sqrt{\lambda} \partial_x m(\lambda;x,t) &= 
   \partial_{xx}\psi^+(\lambda;x,t)e^{-\ii \theta(\lambda;x,t)} + \lambda \psi_{+}(\lambda; x,t) \ee^{-\ii \theta(\lambda;x,t)} \\
   &= - q(x,t) m(\lambda;x,t).
  \end{aligned}
\end{equation}
It follows from \eqref{eq:ratio-theta} that both $\partial_{xx} m(\lambda;x,t)$ and $\partial_{x} m(\lambda;x,t)$ decay at infinity, giving the recovery formula
\begin{align}
  -\lim_{\lambda \to \infty} 2 \ii \sqrt{\lambda} \partial_x m(\lambda;x,t) =  q(x,t).
\end{align}
In other words, we have {as $\lambda\to\infty$}
\begin{align}
  \mathbf{N}(\lambda,x,t) = \begin{bmatrix} 1 & 1 \end{bmatrix} + \frac{1}{2 \ii \sqrt{\lambda}} \begin{bmatrix} -\int^x q(s,t) \dd s  & \int^x q(s,t) \dd s \end{bmatrix} - \frac{\ii}{\sqrt{\lambda}} \begin{bmatrix} m_{g+1}(x,t) & - m_{g+1}(x,t) \end{bmatrix} + O(\lambda^{-1}),
\end{align}
%where $m_{g+1}(x,t)$ is defined in \eqref{eq:mk}.
where $m_{g+1}(x,t)$ {denotes the coefficient of the term proportional to $\lambda^{-g-1}$ in the expansion \eqref{G-over-R-expand}}.
Thus, we arrive at
\begin{align}
  q(x,t) = -2 \ii \partial_x s_1(x,t) + 2\partial_x m_{g+1}(x,t).
  \label{recovery-q}
\end{align}

\subsection{Weighted spaces}

We now formulate a singular integral equation on a direct sum of weighted $L^2$ spaces. 
% We refer to Appendix~\ref{a:tech} for notation and background.  
Define
\begin{align}
  I_j := \begin{cases}  (b_j, a_{j+1}) & j \in \{1,2,\ldots,g\},\\
    (-a_{j+1},-b_j) & j \in \{-1,-2,\ldots,-g\}. \end{cases}
\end{align}
Then set, for ${y} \in I_j$
\begin{equation}
  w_j({y}) :=  \begin{cases}\dfrac 1 \pi \sqrt{\dfrac{y - b_j}{a_{j+1} - y}} & j \in \{1,2,\ldots,g\},\\ \\
   \dfrac 1 \pi \sqrt{\dfrac{ - b_j - y}{y + a_{j+1}}} & j \in \{-1,-2,\ldots,-g\}, \end{cases}
   \label{weights}
\end{equation}
where {each weight} $w_j(y)$ is understood to vanish outside its domain of definition.  %Here $c_j^{-1} = \pi\frac{b_j - a_{j+1}}{2}$ so that the weights integrate to one over their domain of definition.
Then define
\begin{align}
  L^2_w\left( \bigcup_j I_j\right) := \bigoplus_{\substack{j=-g\\ j\neq 0}}^g L^2_{w_j} (I_j), \qquad w :=  \sum_j w_j.
\end{align}
It is convenient to order the component functions (each of which is {$2\times 1$ row-}vector-valued) for $\mathbf{U} \in  L^2_w\left( \bigcup_j I_j\right)$ as $\mathbf U = (\mathbf{U}_1,\mathbf{U}_{-1},\mathbf{U}_2,\mathbf{U}_{-2},\ldots,\mathbf{U}_{g},\mathbf{U}_{-g})$.  Define the operators
\begin{align}
  \mathcal R_j \mathbf{U} := \mathbf{U} J_j, \quad (\mathcal W \mathbf{U})|_{I_j} := w_j^{-1} \mathbf{U}|_{I_j}
\end{align}
i.e., right multiplication by the jump matrix 
{
\begin{align}
\mathbf{J}_j := \sigma_1 \ee^{-\mathrm{sgn}(j) \Omega_{|j|}(x,t) \sigma_3},\label{jump-Jc}
\end{align}
}
and division by the weight on $I_j$, respectively. 

%{We proceed with the following definitions.}
%{
%\begin{definition}
 % A curve $\Gamma$ is a Carleson curve if the \emph{Carleson constant} $C_c=C_c(\Gamma):=\sup_{z\in\Gamma, r>0} \frac{|\Gamma \cap D_r(z)|}{r}$ is finite where $D_r(z)$ denotes the open disk with center $z$ and radius $r$ and $|\diamond|$ denotes the arclength of the curve $(\diamond)$.
%\end{definition}
%}
Suppose\footnote{This suffices for our purposes, but in general one can consider Carleson curves \cite{Bottcher1997}.} $\Gamma$ is a union of line segments.  
{For a weight function $w\colon \Gamma \to [0,\infty)$ supported on $\Gamma$, define the weighted $L^p$ space
\begin{equation}
L_{w}^{p}(\Gamma):=\left\{f: \Gamma \rightarrow \mathbb{C}: \int_{\Gamma}|f(z)|^{p} w(z)|d z|<\infty\right\}
\end{equation}
and the weighted Cauchy transform
\begin{equation}
\mathcal{C}_{\Gamma, w} u(z)=\frac{1}{2 \pi \ii} \int_{\Gamma} \frac{u\left(z^{\prime}\right)}{z^{\prime}-z} w\left(z^{\prime}\right) d z^{\prime}, \quad z \in \mathbb{C} \backslash \Gamma .
\label{weighted-Cauchy}
\end{equation}
We define the boundary values of \eqref{weighted-Cauchy} whenever the following limits exist:
\begin{equation}
    \mathcal{C}_{\Gamma, w}^{\pm} u(z)=\lim _{\epsilon \downarrow 0} \mathcal{C}_{\Gamma, w} u(z \pm \ii \epsilon), \quad z \in \Gamma.
\end{equation}
When the domain of the weight $w$ is clear from context we write $\mathcal{C}_{w}, \mathcal{C}_{w}^{\pm}$. When $w \equiv 1$ we write $\mathcal{C}_{\Gamma}$, $\mathcal{C}_{\Gamma}^{\pm}$.
These operators are understood to apply to vectors component-wise.}

\begin{definition}\label{d:precise}
  A function $\mathbf{S}(z;x,t)$ is a solution of the Riemann-Hilbert problem
  \begin{align}
    \mathbf{S}^+(z;x,t) &= \mathbf{S}^-(z;x,t) \mathbf{J}_j, \quad z\in I_j,\label{eq:jc}\\
    \mathbf{S}(\infty;x,t) &= \mathbf C \in \mathbb C^{m \times 2},\notag
  \end{align}
  if
  \begin{align}
    \mathbf{S}(z;x,t) = \mathbf C + \sum_{\substack{j=-g\\ j \neq 0}}^g \mathcal C_{w_j} u_j(z),
%     \\
%  \mathcal C_{w_j}\mathbf{U}_j(z) : = \frac{1}{2 \pi \ii} \int_{I_j} \frac{u_j(x') w_j(x')}{x' - z} \dd x',
  \end{align}
  for $\mathbf{U} \in L^2_w\left( \bigcup_j I_j\right)$ and the jump condition \eqref{eq:jc} is satisfied for a.e. $z \in I_j$ for each $j$.   Further, for $j \neq k$ we use the notation
\begin{align}
  \mathcal C_{w_j}\Big\rvert_{I_k} \mathbf{U} = (\mathcal C_{w_j}\mathbf{U} )\Big\rvert_{I_k}.
\end{align}
\end{definition}

\begin{theorem}\label{t:sufficient}
  Suppose $\mathbf{S}(z;x,t)$ satisfies the following\footnote{Here $\|\cdot\|$ is any norm on $\mathbb C^{m \times 2}$.}
  \begin{enumerate}
  \item For some $1 < p < 2$
    \begin{align}
      \sup_{\rho > 0} \int_{-\infty}^\infty \| \mathbf{S}(z \pm \ii \rho ;x,t) - \mathbf C\|^p \dd z < \infty.
    \end{align}
    
  \item The jump condition \eqref{eq:jc} is satisfied for a.e. $z \in I_j$ for each $j$.
  \item $\mathbf{U} = \mathcal W(\mathbf{S}^+ - \mathbf{S}^-) \in L^2_w\left( \bigcup_j I_j\right)$.
  \end{enumerate}
  Then $\mathbf{S}(z;x,t)$ is a solution of the Riemann-Hilbert problem in the sense of Definition~\ref{d:precise} with $u = \mathcal W(\mathbf{S}^+ - \mathbf{S}^-)$.
\end{theorem}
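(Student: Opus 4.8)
The plan is to show that the difference between $\mathbf{S}(z;x,t)-\mathbf C$ and the candidate Cauchy representation is an entire function that decays at infinity, hence vanishes identically. Set $\mathbf{U}=\mathcal W(\mathbf{S}^+-\mathbf{S}^-)$ with components $u_j = w_j^{-1}(\mathbf{S}^+-\mathbf{S}^-)\big\rvert_{I_j}$, which lie in $L^2_w\left(\bigcup_j I_j\right)$ by hypothesis (3); thus each $\mathcal C_{w_j}u_j$ is well defined and analytic off $\overline{I_j}$. The starting observation is the algebraic identity $w_j u_j = (\mathbf{S}^+-\mathbf{S}^-)\big\rvert_{I_j}$, so the weighted transform of $u_j$ coincides with the unweighted Cauchy transform of the jump on $I_j$. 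Define
\begin{align}
  \mathbf{T}(z;x,t) := \mathbf{S}(z;x,t) - \mathbf C - \sum_{\substack{j=-g\\ j\neq 0}}^{g} \mathcal C_{w_j} u_j(z),
\end{align}
so that proving $\mathbf{T}\equiv 0$ establishes the representation of Definition~\ref{d:precise}.

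First I would compute the jump of $\mathbf{T}$. Away from the endpoints, the Plemelj formula for the weighted Cauchy transform gives $\mathcal C_{w_k}^+ u_k - \mathcal C_{w_k}^- u_k = w_k u_k = (\mathbf{S}^+-\mathbf{S}^-)\big\rvert_{I_k}$, while $\mathcal C_{w_j}u_j$ for $j\neq k$ continues analytically across $I_k$. Using hypothesis (2) the jumps cancel, so $\mathbf{T}^+-\mathbf{T}^- = 0$ a.e.\ on each $I_k$; across $\mathbb R\setminus\bigcup_j \overline{I_j}$ all terms are already analytic. Hence $\mathbf{T}$ is analytic on $\mathbb C$ except at the finitely many interval endpoints $\{b_j,a_{j+1}\}$ and their reflections, where it has single-valued isolated singularities.

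Next I would upgrade $\mathbf{T}$ to an entire function and record its decay. The uniform $L^p$ slice bound of hypothesis (1) holds for $\mathbf{S}-\mathbf C$, and the mapping properties of the weighted Cauchy transform (as developed in the theory underlying \cite{TrogdonSOBook}) give the same bound for each $\mathcal C_{w_j}u_j$: the boundary values inherit an inverse-square-root singularity from the weights $w_j$ in \eqref{weights}, which is $p$-integrable precisely for $p<2$, while $L^p$-boundedness of the Hilbert transform for $p>1$ controls the remainder --- this is exactly where the range $1<p<2$ is used. Consequently $\mathbf{T}$ satisfies $\sup_{\rho>0}\int_{-\infty}^{\infty}\|\mathbf{T}(z\pm\ii\rho;x,t)\|^p\dd z<\infty$. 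A function with a finite such bound can have neither a pole nor an essential singularity at a boundary point, since either would force the slice integrals to diverge as $\rho\downarrow 0$; the endpoint singularities are therefore removable and $\mathbf{T}$ is entire.

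The final step is a Liouville-type estimate. Each holomorphic component of $\mathbf{T}$ is harmonic, so by the area mean-value property over a disk of radius $r$ centered at $z_0$ together with Hölder's inequality with $1/p+1/q=1$,
\begin{align}
  |\mathbf{T}(z_0;x,t)| \le \frac{1}{\pi r^2}\left(\int_{|z-z_0|<r}\|\mathbf{T}\|^p\dd A\right)^{1/p}(\pi r^2)^{1/q} \le C\,r^{-1/p},
\end{align}
the area integral being controlled by $2r$ times the finite supremum over all heights of the $L^p$ slice integrals of $\mathbf{T}$. Letting $r\to\infty$ forces $\mathbf{T}(z_0;x,t)=0$ for every $z_0$, giving $\mathbf{S}(z;x,t)=\mathbf C+\sum_j \mathcal C_{w_j}u_j(z)$ as required. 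I expect the main obstacle to be the third step: verifying that the weighted Cauchy transforms genuinely inherit the $L^p$ slice bound and, in particular, that the endpoint singularities of $\mathbf{T}$ are removable rather than simple poles. Both rely on the precise $H^p$ mapping properties of the Chebyshev-type weighted transforms for $1<p<2$, which is the technical heart of the argument.
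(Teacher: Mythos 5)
Your strategy---subtract the candidate Cauchy representation, show the difference $\mathbf T$ is entire, and kill it with an $L^p$ Liouville estimate---is a genuinely different route from the paper's. The paper's proof is a single application of Hardy space theory: hypothesis (1) says precisely that $\mathbf S - \mathbf C$ lies in the Hardy space $H^p$ of the upper and of the lower half-plane, and the standard theory (cited to Duren) then yields both the a.e.\ existence of the boundary values $\mathbf S^\pm$ and the fact that such a function \emph{is} the Cauchy integral of its boundary jump; since that jump equals $w_j u_j$ on $I_j$ and vanishes elsewhere, the Cauchy integral is by definition $\sum_j \mathcal C_{w_j} u_j$, and Definition~\ref{d:precise} is met with no Plemelj formulas, no endpoint analysis, and no mapping properties of the weighted transforms. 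Your route forces you to establish properties of $\mathcal C_{w_j}u_j$ that the paper never needs. (A small mis-attribution along the way: the cancellation $\mathbf T^+ - \mathbf T^- = 0$ on $I_k$ follows from the Plemelj formula together with the definition $u_k = w_k^{-1}(\mathbf S^+ - \mathbf S^-)\vert_{I_k}$ alone; hypothesis (2) plays no role there, and is needed only so that the represented function satisfies the jump condition demanded by Definition~\ref{d:precise}.)

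The genuine gap is in your third step, exactly where you located the ``technical heart,'' and the justification you give for it is false. You assert that the boundary values of $\mathcal C_{w_j}u_j$ have at worst inverse-square-root singularities inherited from $w_j$, hence are $p$-integrable for all $p<2$. But the density actually being transformed is $w_j u_j$ with $u_j$ only in $L^2_{w_j}(I_j)$: near the endpoint where $w_j(y)\sim c\,(a_{j+1}-y)^{-1/2}$, membership in $L^2_{w_j}$ permits $u_j(y)\sim (a_{j+1}-y)^{-1/4+\epsilon}$, so $w_j u_j$ can behave like $(a_{j+1}-y)^{-3/4+\epsilon}$, and by the classical endpoint analysis of Cauchy integrals $\mathcal C_{w_j}u_j$ inherits this $-3/4$ rate. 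The local slice integrals $\int_{|x-a_{j+1}|<1} \bigl((a_{j+1}-x)^2+\rho^2\bigr)^{-3p/8}\dd x \sim \rho^{1-3p/4}$ then blow up as $\rho \downarrow 0$ once $p \geq 4/3$, so the uniform slice bound you need is guaranteed by hypothesis (3) only for $p<4/3$; if hypothesis (1) holds with $p\in[4/3,2)$, both your removability step and your Liouville step fail as written. The step can be rescued, but only by a different idea: either (a) observe that hypothesis (1) forces, via Hardy theory, the boundary values of $\mathbf S - \mathbf C$ to lie in $L^p(\mathbb R)$, so the jump $w_j u_j$ is itself in $L^p(I_j)$ and the M.~Riesz/Hardy theory of Cauchy transforms of $L^p$ densities gives the slice bound at the correct exponent; or (b) run the argument with a split exponent, treating $\mathbf S - \mathbf C$ at exponent $p$ and each $\mathcal C_{w_j}u_j$ at some $p_1 < \min(p,4/3)$, which still excludes poles (since $\rho^{1-p_1}\to\infty$) and still gives $O(r^{-1/p}) + O(r^{-1/p_1}) \to 0$ in the Liouville bound. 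Note that fix (a) invokes exactly the Hardy-space facts from which the paper extracts the full representation in one step, which is why the paper's proof is shorter.
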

\begin{proof}
  The first condition imposes that $\mathbf{S} - \mathbf C$ is an element of the Hardy space of the upper-half and lower-half planes \cite{Duren}.  This implies that the boundary values from above and below exist a.e.  Furthermore, it also implies that $\mathbf{S}(z;x,t)$ is given by the Cauchy integral of its boundary values:
  \begin{align}
    \mathbf{S}(z;x,t)  = \mathbf C +  \sum_{\substack{j=-g\\ j \neq 0}}^g \mathcal C_{w_j} \mathbf{U}_j(z).
  \end{align}
\end{proof}

Imposing the jump condition $\mathbf S^+(z;x,t) = \mathbf S^-(z;x,t) \mathbf{J}_j$, $z \in I_j$ for each $j$ results in the following system of singular integral equations that are satisfied by $\mathbf{U}_k$, $k \in \{ \pm 1, \pm 2, \ldots, \pm g\}$,
\begin{align}\label{eq:SIE}
  \mathcal C_{w_k}^+ \mathbf{U}_k(z) + \sum_{\substack{j=-g\\ j \neq 0,k}}^g \mathcal C_{w_j} \mathbf{U}_j(z) -  \left[\mathcal C_{w_k}^- \mathbf{U}_k(z) + \sum_{\substack{j=-g\\ j \neq 0,k}}^g \mathcal C_{w_j} \mathbf{U}_j(z)\right] J_k = \begin{bmatrix} 1 & 1 \end{bmatrix} (\mathbf{J}_k - \mathbb{I}), \quad z \in I_k.
\end{align}
It is important to note that $\mathcal C_{w_j} \mathbf{U}_j(z) = \mathcal C_{w_j}^- \mathbf{U}_j(z) = \mathcal C_{w_j}^+ \mathbf{U}_j(z)$ if $z \not\in I_j$.

Then we consider the following block operator on $L^2_w\left( \bigcup_j I_j\right)$

\resizebox{.99\hsize}{!}{
\begin{minipage}{\linewidth}
  \begin{align*}
  &\mathcal S := \\
& \begin{bmatrix} \mathcal C_{w_1}^+ - \mathcal R_1 \circ  \mathcal C_{w_1}^- &  (I - \mathcal R_1) \circ  \mathcal C_{w_{-1}}|_{I_1} & (I - \mathcal R_1)  \circ \mathcal C_{w_{2}}|_{I_1} & (I - \mathcal R_1)  \circ \mathcal C_{w_{-2}}|_{I_1} & \cdots & (I - \mathcal R_1)  \circ \mathcal C_{w_{-g}}|_{I_1}\\
   (I - \mathcal R_{-1})  \circ \mathcal C_{w_{1}} |_{I_{-1}} & \mathcal C_{w_{-1}}^+ - \mathcal R_{-1} \circ  \mathcal C_{w_{-1}}^- &  (I - \mathcal R_{-1})  \circ \mathcal C_{w_{2}}|_{I_{-1}} & (I - \mathcal R_{-1})  \circ \mathcal C_{w_{-2}}|_{I_{-1}} & \cdots & (I - \mathcal R_{-1})  \circ \mathcal C_{w_{-g}}|_{I_{-1}} \\
   (I - \mathcal R_{2})  \circ \mathcal C_{w_{1}} |_{I_{2}} & (I - \mathcal R_{2})  \circ \mathcal C_{w_{-1}}|_{I_{2}} & \mathcal C_{w_2}^+ - \mathcal R_2 \circ  \mathcal C_{w_2}^- & (I - \mathcal R_{2})  \circ \mathcal C_{w_{-2}}|_{I_{2}} & \cdots & (I - \mathcal R_{2})  \circ \mathcal C_{w_{-g}}|_{I_{2}} \\
   (I - \mathcal R_{-2})  \circ \mathcal C_{w_{1}}|_{I_{-2}} & (I - \mathcal R_{-2})  \circ \mathcal C_{w_{-1}}|_{I_{-2}} & (I - \mathcal R_{-2})  \circ \mathcal C_{w_{2}}|_{I_{-2}} & \mathcal C_{w_{-2}}^+ - \mathcal R_{-2} \circ  \mathcal C_{w_{-2}}^- & \cdots & (I - \mathcal R_{-2})  \circ \mathcal C_{w_{-g}}|_{I_{-2}}\\
   \vdots & \vdots & \vdots & \vdots & \ddots & \vdots\\
    (I - \mathcal R_{-g})  \circ \mathcal C_{w_{1}}|_{I_{-g}} & (I - \mathcal R_{-g})  \circ \mathcal C_{w_{-1}}|_{I_{-g}} & (I - \mathcal R_{-g})  \circ \mathcal C_{w_{2}}|_{I_{-g}} & (I - \mathcal R_{-g})  \circ \mathcal C_{w_{-2}}|_{I_{-g}}  & \cdots & \mathcal C_{w_{-g}}^+ - \mathcal R_{-g} \circ  \mathcal C_{w_{-g}}^-
  \end{bmatrix}. %\begin{bmatrix} u_1 \\ u_{-1} \\ u_2 \\ u_{-2} \\ \vdots \\ u_{-g} \end{bmatrix}
  \end{align*}
\end{minipage}}\\

\noindent Note that $\mathcal S$ as an operator is completely described by $I_j$, $1 \leq j \leq g$ and $\Omega_j(x,t)$ for $1 \leq j \leq g$.  So, we write
\begin{align}
  \mathcal S = \mathcal S(I_1,\ldots, I_g; \Omega_1,\ldots, \Omega_g).
\end{align}

{We now state some observations that motivate the preconditioning we employ in solving \eqref{eq:SIE} numerically, which is described in Section~\ref{s:precond}. The linear system obtained from discretization of \eqref{eq:SIE} upon preconditioning ends up being extremely well-conditioned; see Figure~\ref{fig:precond}. First, one can prove the following lemma.
\begin{lemma}\label{l:diag_inv}
The operator $\mathcal W\mathrm{diag}(\mathcal S)$ is boundedly invertible on $L^2_w\left( \bigcup_j I_j\right)$.
\end{lemma}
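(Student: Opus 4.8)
Here is how I would prove Lemma~\ref{l:diag_inv}.

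The plan is to exploit that $\mathcal{W}\mathrm{diag}(\mathcal S)$ is block diagonal: it acts on $L^2_w\bigl(\bigcup_j I_j\bigr)=\bigoplus_j L^2_{w_j}(I_j)$ one interval at a time, the $k$-th block being
\begin{align*}
  \mathbf{U}_k \longmapsto w_k^{-1}\Bigl[\mathcal C_{w_k}^{+}\mathbf{U}_k-\bigl(\mathcal C_{w_k}^{-}\mathbf{U}_k\bigr)\mathbf{J}_k\Bigr],\qquad \mathbf{U}_k\in L^2_{w_k}(I_k)^{1\times2},
\end{align*}
with $\mathbf{J}_k=\sigma_1\ee^{-\mathrm{sgn}(k)\Omega_{|k|}(x,t)\sigma_3}$ from \eqref{jump-Jc} and the boundedness of each $\mathcal C_{w_k}^{\pm}$ coming from the weighted framework of \cite{TrogdonSOBook}. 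Since $g<\infty$, there are only finitely many blocks, so it suffices to show each block is boundedly invertible on $L^2_{w_k}(I_k)^{1\times 2}$; the inverse of $\mathcal{W}\mathrm{diag}(\mathcal S)$ is then the finite direct sum of the block inverses, whose norm is the maximum of the block-inverse norms. I would first note that $\mathbf{J}_k$ is a real involution, $\mathbf{J}_k^2=\mathbb{I}$, hence diagonalizable as $\mathbf{J}_k=\mathbf V_k\,\sigma_3\,\mathbf V_k^{-1}$ with eigenvalues $\pm1$. Because the weighted Cauchy transforms act entrywise, they commute with right multiplication by the constant matrix $\mathbf V_k$, so the substitution $\mathbf{W}_k=\mathbf{U}_k\mathbf V_k$ conjugates the block into $\mathbf{W}_k\mapsto w_k^{-1}\bigl[\mathcal C_{w_k}^{+}\mathbf{W}_k-(\mathcal C_{w_k}^{-}\mathbf{W}_k)\sigma_3\bigr]$. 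As right multiplication by $\mathbf V_k$ is a bounded invertible map on the two-dimensional target, this conjugation preserves bounded invertibility.

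In these coordinates the block decouples into two \emph{scalar} operators on $L^2_{w_k}(I_k)$. On the $+1$-eigenspace it is $w_k^{-1}\bigl(\mathcal C_{w_k}^{+}-\mathcal C_{w_k}^{-}\bigr)$, which by the Plemelj formula $\mathcal C_{w_k}^{+}u-\mathcal C_{w_k}^{-}u=w_k u$ is precisely the identity. On the $-1$-eigenspace it is
\begin{align*}
  \mathcal H_k\,u\defeq w_k^{-1}\bigl(\mathcal C_{w_k}^{+}u+\mathcal C_{w_k}^{-}u\bigr)
  = -\ii\,w_k^{-1}\,\frac{1}{\pi}\,\mathrm{p.v.}\!\int_{I_k}\frac{w_k(t)\,u(t)}{t-y}\,\dd t,
\end{align*}
the weighted finite Hilbert transform. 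Thus everything reduces to showing that $\mathcal H_k$ is boundedly invertible on $L^2_{w_k}(I_k)$.

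For this crux I would pass to $[-1,1]$ by the affine map carrying the left endpoint of $I_k$ to $-1$ and the right endpoint to $+1$; by \eqref{weights}, $w_k$ becomes a constant multiple of the third-kind Chebyshev weight $\sqrt{(1+x)/(1-x)}$, and the (suitably scaled) Chebyshev polynomials of the third kind $\{V_n\}$ form an orthonormal basis of $L^2_{w_k}(I_k)$. The classical finite Hilbert transform identity $\frac{1}{\pi}\,\mathrm{p.v.}\!\int_{-1}^1\sqrt{\tfrac{1+t}{1-t}}\,\tfrac{V_n(t)}{t-x}\,\dd t=W_n(x)$ for all $n\ge 0$ then shows, after undoing the affine scaling, that $\mathcal H_k$ maps $V_n$ to $-\ii\,\sqrt{(1-x)/(1+x)}\,W_n$, i.e.\ to a unimodular constant times the fourth-kind weight against $W_n$. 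Using $\sqrt{(1+x)/(1-x)}\cdot\sqrt{(1-x)/(1+x)}=1$ one checks directly that $\bigl\{-\ii\sqrt{(1-x)/(1+x)}\,W_n\bigr\}_n$ is again orthonormal in $L^2_{w_k}(I_k)$ and complete: any $f\in L^2_{w_k}(I_k)$ orthogonal to all of them satisfies $\int_{-1}^1 f W_n\,\dd x=0$ for every $n$, and since $\int_{-1}^1|f|^2\sqrt{(1+x)/(1-x)}\,\dd x<\infty$ this forces $f=0$ by completeness of $\{W_n\}$ in the fourth-kind weighted space. Hence $\mathcal H_k$ carries one orthonormal basis onto another, i.e.\ it is unitary and in particular boundedly invertible with $\|\mathcal H_k^{-1}\|=1$. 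The one-sided square-root behavior of $w_k$ (vanishing at the left endpoint, blowing up at the right) is exactly what makes the identity hold for all $n\ge 0$ and thus renders $\mathcal H_k$ a bijection of index zero; a symmetric weight would leave a nontrivial kernel or cokernel.

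Assembling the pieces: each diagonal block is, after the bounded invertible $\mathbf V_k$-conjugation, the direct sum of the identity and the unitary $\mathcal H_k$, hence boundedly invertible; the finite direct sum over $k\in\{\pm1,\ldots,\pm g\}$ then gives bounded invertibility of $\mathcal{W}\mathrm{diag}(\mathcal S)$. The main obstacle is the crux step, namely verifying the third-/fourth-kind finite-Hilbert-transform identity and the resulting unitarity, including confirming that the chosen endpoint weights produce index zero so that no kernel or cokernel survives; the remaining steps (block-diagonal reduction, diagonalization of the involution $\mathbf J_k$, and the Plemelj cancellation) are routine.
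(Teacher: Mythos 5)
The paper never actually proves Lemma~\ref{l:diag_inv}: immediately after Lemma~\ref{l:compact} the authors state that the proof is deferred to a forthcoming paper, so there is no in-paper argument to compare yours against. Judged on its own merits, your strategy is sound, and it is essentially the argument that the paper's numerical construction implicitly points to. The diagonal block on $I_k$ is $\mathbf{U}_k\mapsto w_k^{-1}\bigl[\mathcal C^+_{w_k}\mathbf{U}_k-(\mathcal C^-_{w_k}\mathbf{U}_k)\mathbf{J}_k\bigr]$; since $\mathbf{J}_k$ in \eqref{jump-Jc} squares to $\mathbb{I}$ with trace $0$ and determinant $-1$, it is conjugate to $\sigma_3$, and entrywise Cauchy transforms commute with constant right factors, so the block splits into the identity (Plemelj, on the $+1$ eigendirection) and the weighted finite Hilbert transform $\mathcal H_k$ (on the $-1$ eigendirection). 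Your key identity $\frac1\pi\,\mathrm{p.v.}\!\int_{-1}^1\sqrt{\tfrac{1+t}{1-t}}\,\tfrac{V_n(t)}{t-x}\,\dd t=W_n(x)$ is correct (it follows from Glauert's integral $\mathrm{p.v.}\!\int_0^\pi\frac{\cos n\phi}{\cos\phi-\cos\theta}\,\dd\phi=\pi\frac{\sin n\theta}{\sin\theta}$ together with $2\cos\bigl((n+\tfrac12)\phi\bigr)\cos(\phi/2)=\cos((n+1)\phi)+\cos(n\phi)$), and your orthonormality-plus-completeness computation showing that $V_n\mapsto-\ii\sqrt{\tfrac{1-x}{1+x}}\,W_n(x)$ carries one orthonormal basis of $L^2_{w_k}$ onto another is also correct; this unitarity is precisely the structural fact that makes the paper's Chebyshev-V/W discretization natural.

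Two caveats should be patched. First, your crux is stated only for $k>0$: for $k<0$ the weight in \eqref{weights} maps to the fourth-kind weight $\sqrt{(1-x)/(1+x)}$, not the third-kind one, and you need the mirrored identity $\frac1\pi\,\mathrm{p.v.}\!\int_{-1}^1\sqrt{\tfrac{1-t}{1+t}}\,\tfrac{W_n(t)}{t-x}\,\dd t=-V_n(x)$ (or the reflection $y\mapsto-y$, which intertwines $w_{-k}$ with $w_k$ up to a sign of the kernel); the unimodular constant does not disturb unitarity, so this is routine, but as written your claim about \eqref{weights} is false for half the intervals. Second, mapping an orthonormal basis to an orthonormal system only identifies $\mathcal H_k$ with a unitary operator once you know a priori that the principal-value operator $\mathcal H_k$ is bounded (or at least closable) on $L^2_{w_k}$, so that it agrees with the continuous extension off the polynomial span. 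Your appeal to \cite{TrogdonSOBook} is the right instinct, but the cleanest justification is elementary: after the substitution $v=u\,w_k$, boundedness of $\mathcal H_k$ on $L^2_{w_k}$ is equivalent to boundedness of the unweighted finite Hilbert transform on $L^2$ with the reciprocal weight $w_k^{-1}$, which is again a Jacobi-type weight with exponents $\pm\tfrac12$ and hence a Muckenhoupt $A_2$ weight, so the Hunt--Muckenhoupt--Wheeden theorem applies. With those two points added, your argument constitutes a complete proof of the lemma.
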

The following is then immediate and is the heuristic that motivates the use of the aforementioned preconditioner in the numerical procedure.
\begin{lemma}\label{l:compact}
    $(\mathcal W\mathrm{diag}(\mathcal S))^{-1} \mathcal W \mathcal S - \mathcal{I}
  $, where $\mathcal{I}$ is the identity operator,
  is a compact operator  on $L^2_w\left( \bigcup_j I_j\right)$.
\end{lemma}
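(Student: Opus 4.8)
The plan is to exploit the block structure of $\mathcal S$ by splitting it into its diagonal and off-diagonal parts. Write $\mathcal S = \mathrm{diag}(\mathcal S) + \mathcal S_{\mathrm{off}}$, where $\mathcal S_{\mathrm{off}}$ collects exactly the off-diagonal blocks $(\mathbb{I} - \mathcal R_j)\circ \mathcal C_{w_k}|_{I_j}$ with $j \neq k$. Since Lemma~\ref{l:diag_inv} guarantees that $\mathcal W\,\mathrm{diag}(\mathcal S)$ is boundedly invertible, a one-line manipulation gives
\[
(\mathcal W\,\mathrm{diag}(\mathcal S))^{-1}\mathcal W \mathcal S - \mathcal{I} = (\mathcal W\,\mathrm{diag}(\mathcal S))^{-1}\,\mathcal W \mathcal S_{\mathrm{off}},
\]
so the task reduces to showing that $\mathcal W \mathcal S_{\mathrm{off}}$ is compact; compactness is then preserved under left multiplication by the bounded operator $(\mathcal W\,\mathrm{diag}(\mathcal S))^{-1}$.

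Next I would prove that each off-diagonal block of $\mathcal W \mathcal S_{\mathrm{off}}$, namely $\mathcal W(\mathbb{I} - \mathcal R_j)\,\mathcal C_{w_k}|_{I_j}\colon L^2_{w_k}(I_k) \to L^2_{w_j}(I_j)$ with $j \neq k$, is Hilbert--Schmidt, hence compact. The crucial structural fact is that the intervals $I_j$ in \eqref{intervals} are pairwise disjoint and separated by a positive distance $d>0$: consecutive gap-images are separated by band-images of the form $(a_{j+1}, b_{j+1})$, which are nondegenerate since $\alpha_{j+1} < \beta_{j+1}$, and the positive- and negative-axis families are separated by the interval about the origin (recall $a_1 = 0$, $b_1 > 0$). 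Consequently, for $z \in I_j$ and $z' \in I_k$ the Cauchy kernel $1/(z'-z)$ is bounded, and after absorbing the bounded constant-matrix factor $\mathbb{I} - \mathcal R_j$ the block is an integral operator with kernel $\kappa(z,z') = c\,w_j(z)^{-1}/(z'-z)$ relative to the measures $w_k(z')\,dz'$ on $I_k$ and $w_j(z)\,dz$ on $I_j$.

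The Hilbert--Schmidt norm is then controlled by
\[
\iint_{I_j\times I_k}\frac{w_j(z)^{-1}\,w_k(z')}{|z'-z|^2}\,dz'\,dz \;\le\; \frac{|c|^2}{d^2}\left(\int_{I_k} w_k(z')\,dz'\right)\left(\int_{I_j} w_j(z)^{-1}\,dz\right).
\]
The first factor is finite because each Chebyshev-type weight $w_k$ is integrable, and the second is finite because $w_j(z)^{-1} = \pi\sqrt{(a_{j+1}-z)/(z-b_j)}$ has only integrable (inverse-)square-root singularities at the endpoints of $I_j$. Since there are finitely many ($2g\times 2g$) blocks and a finite sum of Hilbert--Schmidt operators is Hilbert--Schmidt, $\mathcal W \mathcal S_{\mathrm{off}}$ is compact, which completes the argument.

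The one point that genuinely deserves care is that $\mathcal W$ (division by the weight) is itself \emph{unbounded} on $L^2_w\!\left(\bigcup_j I_j\right)$, so one cannot simply invoke ``bounded $\circ$ compact'' to deduce compactness of $\mathcal W \mathcal S_{\mathrm{off}}$ from smoothing of the off-diagonal Cauchy transforms; the compactness of each block must instead be established for the composite operator directly, which is precisely what the Hilbert--Schmidt estimate above accomplishes. Everything else follows routinely from the block decomposition together with the bounded invertibility supplied by Lemma~\ref{l:diag_inv}.
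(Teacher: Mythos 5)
Your proposal is correct and follows exactly the route the paper intends: the paper declares this lemma ``immediate'' from Lemma~\ref{l:diag_inv} (deferring all analytical detail to a forthcoming paper), and that immediacy is precisely your algebraic reduction to $(\mathcal W\,\mathrm{diag}(\mathcal S))^{-1}\mathcal W \mathcal S_{\mathrm{off}}$ followed by compactness of the off-diagonal weighted Cauchy blocks, which are smoothing because the closures of the finitely many intervals $I_j$ are pairwise disjoint. Your Hilbert--Schmidt estimate --- and especially the observation that $\mathcal W$ is unbounded on $L^2_w\bigl( \bigcup_j I_j\bigr)$, so compactness must be established for the composite blocks directly rather than by composing $\mathcal W$ with a compact operator --- supplies details the paper omits entirely.
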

We will present the proof of Lemma~\ref{l:diag_inv} along with an analytical justification of the preconditioning and the convergence of the numerical method proposed in this work to solve \eqref{eq:SIE} in a forthcoming paper.
}

\section{Numerical Inverse Scattering}
\label{s:NIST}

In this section we develop a numerical method to solve the Riemann-Hilbert problem in Definition~\ref{d:precise}.  We consider the Chebyshev-V and Chebyshev-W polynomials which are also known as the Chebyshev polynomials of the third and fourth kind, respectively.  The polynomials $(V_n(y))_{n \geq 0}$ satisfy $0 < \lim_{y \to \infty} y^{-n} V_n(y) < \infty$ as well as
\begin{align}
  \int_{-1}^1 V_n(y) V_m(y)\sqrt{\frac{1+y}{1-y}} \frac{\dd y}{\pi} = \delta_{n,m},
\end{align}
for the Kronecker delta, $\delta_{n,m}$.  Similarly, the polynomials $(W_n(y))_{n \geq 0}$ satisfy $0 < \lim_{y \to \infty} y^{-n} W_n(y) < \infty$ as well as
\begin{align}
  \int_{-1}^1 W_n(y) W_m(y)  \sqrt{\frac{1-y}{1+y}} \frac{\dd y}{\pi}  = \delta_{n,m}.
\end{align}

For general $a < b$ we wish to find a basis of polynomials on $[a,b]$ using the transformation $T_{a,b}(y) = \frac{b -a}{2} y + \frac{b+a}{2}$, $T_{a,b} : [-1,1] \to [a,b]$.  Taking into account the singularity structure of the weights $w_j$ as defined {in \eqref{weights}}, for $a > 0$ define
\begin{align}
P_n(y;[a,b]) :=  V_n(T^{-1}_{a,b}(y)),\quad n \geq 0, %\sqrt{\frac{b-a}{2}}
\end{align}
which are orthogonal (but not normalized) polynomials on $[a,b]$ with respect to ${w_{a,b}(y) = \frac{1}{\pi} \sqrt{ \frac{y-a}{b-y}}}$. Similarly, for $b < 0$ define
\begin{align}
P_n(y;[a,b]) := W_n(T^{-1}_{a,b}(y)),\quad n \geq 0,
\end{align}
which are orthogonal polynomials on $[a,b]$ with respect to ${w_{a,b}(y) = \frac{1}{\pi} \sqrt{ \frac{b-y}{y-a}}}$.

This construction has the convenient benefit that for $w(y) = \frac{1}{\pi} \sqrt{ \frac{1-y}{y + 1}}$ defined on $[-1,1]$ and $\tilde w(y) = \frac{1}{\pi} \sqrt{ \frac{b-y}{y - a}}$ defined on $[a,b]$ {(for the case $a<b<0$)} we have
\begin{align}
  \mathcal C_{\tilde w} u (z) = C_{w} (u \circ T_{a,b}) ( T^{-1}_{a,b}(z)), \quad z \not \in [a,b].
\end{align}
{The same identity also holds for the case $b>a>0$.}
In other words, Cauchy integrals over general intervals with these weights can be computed by first mapping a function to the interval $[-1,1]$, computing the Cauchy integral for the mapped function and then mapping back.  We do note that
\begin{align}
  \int_a^b P_n(y;[a,b])^2 w_{a,b}(y) \dd y = \frac{b-a}{2}, \quad n \geq 0.
\end{align}

\subsection{Computing Cauchy integrals}

As is well-known, real orthonormal polynomials $(p_n)_{n \geq 0}$ (with positive leading coefficients), on the real axis, with respect to a probability measure $\mu$ satisfy a three-term recurrence relation
\begin{align}
  {y} p_n(y) = {A}_n p_n(y) + {B}_n p_{n+1}(y) + {B}_{n-1} p_{n-1},\\
  p_{-1}(y) \equiv 0, \quad p_0(y) \equiv 1, \quad {B}_{-1} = -1,
\end{align}
for recurrence coefficients ${(A_n)_{n \geq 0}}$, ${(B_n)_{n \geq 0}}$.  What is maybe less well-known is the weighted Cauchy transforms
\begin{align}
  c_n(z) = \mathcal C_{\mu} p_n(z) := \frac{1}{2 \pi \ii} \int_{\mathbb R} \frac{p_n(y)}{y - z} \mu(\dd y), \quad n \geq 0,
\end{align}
satisfy the same recurrence with different initial conditions, 
%namely
{and in particular}
\begin{align}
  c_{-1}(z) = \frac{1}{2 \pi \ii}, \quad c_0(z) = \frac{1}{2 \pi \ii} \int_{\mathbb R} \frac{\mu(\dd y)}{y - z}.
\end{align}

\begin{remark}
  For Chebyshev-V and Chebyshev-W polynomials we have, respectively,
  \begin{align}
    {A}_0 &= 1/2, \quad {A}_n = 0, \quad n \geq 1, \quad {B}_n = 1/2, \quad n \geq 0,\\
    {A}_0 &= -1/2, \quad {A}_n = 0, \quad n \geq 1, \quad {B}_n = 1/2, \quad n \geq 0.
  \end{align}
\end{remark}

There are some subtleties in solving the recurrence for the Cauchy transforms.  For $z$ in the complex plane, away from the support of $\mu$, $(p_n(z))_{n \geq 0}$ represents an exponentially growing solution of the three-term recurrence while $(c_n(z))_{n \geq 0}$  is an exponentially decreasing solution.  Thus, evaluating $c_n(z)$ by forward recurrence is inherently unstable.  Consider the case where $\mu(\dd y)$ has its support on $[-1,1]$.  In practice, the following is effective \cite{Olver2020}:
\begin{enumerate}
\item For $z$ inside a Bernstein ellipse with minor axis $O(1/n)$, solve for $c_n(z)$ by forward recurrence allowing one to easily compute the boundary values of $c_n$ on $[-1,1]$ from above and below.
\item For $z$ ouside a Bernstein ellipse with minor axis $O(1/n)$, solve the boundary value problem
  {\begin{align}
    \begin{bmatrix} A_0 -z & B_0 \\
      B_0 & A_1 -z  & B_1 \\
      & B_1 & A_2 -z & \ddots \\
      && \ddots & \ddots
   \end{bmatrix} \begin{bmatrix} c_0(z) \\ c_1(z) \\ \vdots \end{bmatrix} = \begin{bmatrix} \frac{1}{2 \pi \ii} \\ 0 \\ \vdots \end{bmatrix},
  \end{align}
  }
  with the adaptive QR algorithm \cite{Olver2013a}.
\end{enumerate}
When $\mu$ has a density $w$ and the support of $\mu$ is clear from context, we write $c_n(z;w) = c_n(z)$.

\begin{remark}
  It turns out that the recurrence for $c_n(z)$ in the case of Chebyshev-V and Chebyshev-W polynomials can be solve explicitly and this general procedure can be avoided, if necessary.
\end{remark}

\subsection{Discretizing \eqref{eq:SIE}}

Define the Chebyshev points of the first kind
\begin{equation}
  \mathbb C_n := \left\{x_j = \cos \left( \frac{j + 1/2}{n+1} \pi \right) :\quad 0 \leq j \leq n\right\}{\subset(-1,1)}, \quad n \geq 1.
\end{equation}
We also define projection operator of evaluation of a function at points
\begin{align}
\mathcal E_{S} f := (f(x))_{x \in S},
\end{align}
where to be truly precise, $S$ should be an ordered set.

Now suppose $f : [a,b] \to \mathbb C$ can be written as
\begin{align}
  f({y}) = \sum_{n=0}^m \gamma_n P_n({y};[a,b]),
\end{align}
{where the choice of $m$ for our purposes is discussed in Section~\ref{sec:adaptive}.}
The discretized versions of $\mathcal C_w^\pm$, $w = w_{a,b}$ are given by
% \begin{align}
%   \mathcal E_{T_{a,b}(\mathbb C_n)} \mathcal C_w^\pm f = \underbrace{\begin{bmatrix} c_0^\pm( T_{a,b}(x_0);w ) & c_1^\pm( T_{a,b}(x_0);w ) & c_2^\pm( T_{a,b}(x_0);w ) & \hdots & c_m^\pm( T_{a,b}(x_0);w ) \\
%     c_0^\pm( T_{a,b}(x_1);w ) & c_1^\pm( T_{a,b}(x_1);w ) & c_2^\pm( T_{a,b}(x_1);w ) & \hdots & c_m^\pm( T_{a,b}(x_1);w ) \\
%     \vdots & \vdots & \vdots && \vdots \\
%     c_0^\pm( T_{a,b}(x_n);w ) & c_1^\pm( T_{a,b}(x_n);w ) & c_2^\pm( T_{a,b}(x_n);w ) & \hdots & c_m^\pm( T_{a,b}(x_n);w )
%     \end{bmatrix}}_{C_{[a,b]}^\pm(m,n)} \begin{bmatrix} \gamma_0 \\ \gamma_1 \\ \gamma_2 \\ \vdots \\ \gamma_m \end{bmatrix}.
% \end{align}
\begin{equation}
\begin{aligned}
  \mathcal E_{T_{a,b}(\mathbb C_n)} \mathcal C_w^\pm f &= \begin{bmatrix} c_0^\pm( T_{a,b}(x_0);w ) & c_1^\pm( T_{a,b}(x_0);w ) & c_2^\pm( T_{a,b}(x_0);w ) & \hdots & c_m^\pm( T_{a,b}(x_0);w ) \\
    c_0^\pm( T_{a,b}(x_1);w ) & c_1^\pm( T_{a,b}(x_1);w ) & c_2^\pm( T_{a,b}(x_1);w ) & \hdots & c_m^\pm( T_{a,b}(x_1);w ) \\
    \vdots & \vdots & \vdots && \vdots \\
    c_0^\pm( T_{a,b}(x_n);w ) & c_1^\pm( T_{a,b}(x_n);w ) & c_2^\pm( T_{a,b}(x_n);w ) & \hdots & c_m^\pm( T_{a,b}(x_n);w )
    \end{bmatrix} \begin{bmatrix} \gamma_0 \\ \gamma_1 \\ \gamma_2 \\ \vdots \\ \gamma_m \end{bmatrix}\\
    &{=:C_{[a,b]}^\pm(m,n)\begin{bmatrix} \gamma_0 \\ \gamma_1 \\ \gamma_2 \\ \vdots \\ \gamma_m \end{bmatrix}}
    ,
\end{aligned}
\end{equation}
and for $c < d$, $[c,d] \cap [a,b] = \varnothing$, we have
% \begin{align}
%   \mathcal E_{T_{c,d}(\mathbb C_n)} \mathcal C_w f = \underbrace{\begin{bmatrix} c_0( T_{c,d}(x_0);w ) & c_1( T_{c,d}(x_0);w ) & c_2( T_{c,d}(x_0);w ) & \hdots & c_m( T_{c,d}(x_0);w ) \\
%     c_0( T_{c,d}(x_1);w ) & c_1( T_{c,d}(x_1);w ) & c_2( T_{c,d}(x_1); w ) & \hdots & c_m( T_{c,d}(x_1);w ) \\
%     \vdots & \vdots & \vdots && \vdots \\
%     c_0( T_{c,d}(x_n);w ) & c_1( T_{c,d}(x_n);w ) & c_2( T_{c,d}(x_n);w ) & \hdots & c_m( T_{c,d}(x_n);w )
%     \end{bmatrix}}_{C_{[a,b] \to [c,d]}(m,n)} \begin{bmatrix} \gamma_0 \\ \gamma_1 \\ \gamma_2 \\ \vdots \\ \gamma_m \end{bmatrix}.
% \end{align}
\begin{equation}
\begin{aligned}
  \mathcal E_{T_{c,d}(\mathbb C_n)} \mathcal C_w f &= \begin{bmatrix} c_0( T_{c,d}(x_0);w ) & c_1( T_{c,d}(x_0);w ) & c_2( T_{c,d}(x_0);w ) & \hdots & c_m( T_{c,d}(x_0);w ) \\
    c_0( T_{c,d}(x_1);w ) & c_1( T_{c,d}(x_1);w ) & c_2( T_{c,d}(x_1); w ) & \hdots & c_m( T_{c,d}(x_1);w ) \\
    \vdots & \vdots & \vdots && \vdots \\
    c_0( T_{c,d}(x_n);w ) & c_1( T_{c,d}(x_n);w ) & c_2( T_{c,d}(x_n);w ) & \hdots & c_m( T_{c,d}(x_n);w )
    \end{bmatrix} \begin{bmatrix} \gamma_0 \\ \gamma_1 \\ \gamma_2 \\ \vdots \\ \gamma_m \end{bmatrix}\\
    &{=:C_{[a,b] \to [c,d]}(m,n) \begin{bmatrix} \gamma_0 \\ \gamma_1 \\ \gamma_2 \\ \vdots \\ \gamma_m \end{bmatrix}}.
\end{aligned}
\end{equation}
Note that each row of these matrices can be constructed either by forward recurrence or via the back substitution step of the adaptive QR algorithm, depending on where the evaluation points are located in the complex plane.

We now demonstrate the discretization of \eqref{eq:SIE} in the case $g = 1$.

\begin{example}\label{ex:genus_one}
  We seek vector-valued functions $\mathbf{U}_{\pm 1} : I_{\pm 1} \to \mathbb C^{1 \times 2}$.  So, write
  \begin{align}
    \mathbf{U}_{\pm 1} = \begin{bmatrix} u_{\pm 1, 1} & u_{\pm 1, 2} \end{bmatrix}.
  \end{align}
  We write out the full system of equations for scalar-valued functions explicitly:
  %\begin{align}
   % \mathcal C_{w_1}^+ u_1(z) + \mathcal C_{w_{-1}} u_{-1} (z) - \left[\mathcal C_{w_1}^+ u_1(z) + \mathcal C_{w_{-1}} u_{-1} (z)\right] J_1 = \begin{bmatrix} 1 & 1 \end{bmatrix} (J_1 - I),
  %\end{align}
  For $z \in I_1$
  \begin{align}
    \mathcal C_{w_1}^+ u_{1,1}(z) + \mathcal C_{w_{-1}} u_{-1,1}(z) - \ee^{-\ii \Omega_1(x,t)} \left[ \mathcal C_{w_1}^- u_{1,2}(z) + \mathcal C_{w_{-1}} u_{-1,2}(z) \right] &= \ee^{-\ii \Omega_1(x,t)} -1,\\
    \mathcal C_{w_1}^+ u_{1,2}(z) + \mathcal C_{w_{-1}} u_{-1,2}(z) - \ee^{\ii \Omega_1(x,t)} \left[ \mathcal C_{w_1}^- u_{1,1}(z) + \mathcal C_{w_{-1}} u_{-1,1}(z) \right] &= \ee^{\ii \Omega_1(x,t)} -1,
  \end{align}
  and for $z \in I_{-1}$
  \begin{align}
    \mathcal C_{w_{-1}}^+ u_{-1,1}(z) + \mathcal C_{w_{1}} u_{1,1}(z) - \ee^{\ii \Omega_1(x,t)} \left[ \mathcal C_{w_{-1}}^- u_{-1,2}(z) + \mathcal C_{w_{1}} u_{1,2}(z) \right] &= \ee^{\ii \Omega_1(x,t)} -1,\\
    \mathcal C_{w_{-1}}^+ u_{-1,2}(z) + \mathcal C_{w_{1}} u_{1,2}(z) - \ee^{-\ii \Omega_1(x,t)} \left[ \mathcal C_{w_{-1}}^- u_{-1,1}(z) + \mathcal C_{w_{1}} u_{1,1}(z) \right] &= \ee^{-\ii \Omega_1(x,t)} -1.
  \end{align}
  In block-operator form:
  \begin{multline}
    \begin{bmatrix} \mathcal C_{w_1}^+ &  -e^{-\ii \Omega_1(x,t)} \mathcal C_{w_1}^- &  \mathcal C_{w_{-1}}|_{I_1} & -e^{-\ii \Omega_1(x,t)}  \mathcal C_{w_{-1}}|_{I_1} \\
      -e^{\ii \Omega_1(x,t)} \mathcal C_{w_1}^- & \mathcal C_{w_1}^+ &  -e^{\ii \Omega_1(x,t)} \mathcal C_{w_{-1}}|I_1 & \mathcal C_{w_{-1}}|I_1 \\
      \mathcal C_{w_{1}}|I_{-1} & -e^{\ii \Omega_1(x,t)}\mathcal C_{w_{1}}|I_{-1} & \mathcal C_{w_{-1}}^+ &  -e^{\ii \Omega_1(x,t)} \mathcal C_{w_{-1}}^- \\
      -e^{-\ii \Omega_1(x,t)}\mathcal C_{w_{1}}|I_{-1} & \mathcal C_{w_{1}}|I_{-1} & -e^{-\ii \Omega_1(x,t)}\mathcal C_{w_{-1}}^- & \mathcal C_{w_{-1}}^+
    \end{bmatrix} \begin{bmatrix} u_{1,1} \\ u_{1,2} \\ u_{-1,1} \\ u_{-1,2} \end{bmatrix}\\ = \begin{bmatrix} \ee^{-\ii \Omega_1(x,t)} -1 \\ \ee^{\ii \Omega_1(x,t)} -1 \\ \ee^{\ii \Omega_1(x,t)} -1 \\ \ee^{-\ii \Omega_1(x,t)} -1 \end{bmatrix}.
  \end{multline}
  The discretized version is then
  \begin{equation}
    {\mathbf{A}} \boldsymbol \gamma = \boldsymbol \omega,
    \label{discrete-lin-sys}
  \end{equation}
  where 
  \begin{equation}
    \begin{aligned}
    {\mathbf{A}} &= {\mathbf{A}}(x,t,n,m) \\
      &= \begin{bmatrix} C^+_{I_1}(m,n) &  -\ee^{-\ii \Omega_1(x,t)} C^-_{I_1}(m,n) & C_{I_1 \to I_{-1}}(m,n) & -\ee^{-\ii \Omega_1(x,t)} C_{I_1 \to I_{-1}}(m,n) \\
      -\ee^{\ii \Omega_1(x,t)} C^+_{I_1}(m,n) & C^-_{I_1}(m,n) &  -\ee^{\ii \Omega_1(x,t)}  C_{I_1 \to I_{-1}}(m,n) &  C_{I_1 \to I_{-1}}(m,n) \\
       C_{I_{-1} \to I_{1}}(m,n) & -\ee^{\ii \Omega_1(x,t)}C_{I_{-1} \to I_{1}}(m,n) & C^+_{I_{-1}}(m,n) &  -\ee^{\ii \Omega_1(x,t)} C^-_{I_{-1}}(m,n) \\
      -\ee^{-\ii \Omega_1(x,t)}C_{I_{-1} \to I_{1}}(m,n) & C_{I_{-1} \to I_{1}}(m,n) & -\ee^{-\ii \Omega_1(x,t)}C^-_{I_{-1}}(m,n) & C^+_{I_{-1}}(m,n)
    \end{bmatrix}, \label{eq:Amatrix}
    \end{aligned}
    \end{equation}
    and
    \begin{equation}
    \boldsymbol \gamma = \boldsymbol \gamma(x,t) = \begin{bmatrix} \boldsymbol \gamma_{1,1} \\ \boldsymbol \gamma_{1,2} \\ \boldsymbol \gamma_{-1,1} \\ \boldsymbol \gamma_{-1,2} \end{bmatrix},\qquad 
    \boldsymbol \omega = \boldsymbol \omega(x,t) = \begin{bmatrix} \ee^{-\ii \Omega_1(x,t)} -1 \\ \ee^{\ii \Omega_1(x,t)} -1 \\ \ee^{\ii \Omega_1(x,t)} -1 \\ \ee^{-\ii \Omega_1(x,t)} -1 \end{bmatrix},
  \end{equation}
  where each entry in the right-hand side vector in \eqref{discrete-lin-sys} is a constant vector {for given $(x,t)$}. 
\end{example}

Lastly, we need to consider the computation of $\partial_x s_1(x,t)$ from \eqref{eq:S} {in order to use \eqref{recovery-q}}.  First, we note that if $u_{j,1}$ in \eqref{eq:SIE} is given by
\begin{align}\label{eq:expand}
 u_{j,1}({y}) = \sum_{n = 0}^\infty \gamma_{n,j}(x,t) P_n({y};I_j),
\end{align}
then by the orthogonality of the polynomials
\begin{align}\label{eq:recovery}
  s_1(x,t) = - \frac{1}{2 \pi \ii} \sum_{ \substack{j = -g\\ j \neq 0}}^g \frac{a_{j+1} - b_j}{2} \gamma_{0,j}(x,t),
\end{align}
implying that we need to solve for the $x$-derivative of the coefficients in the expansion \eqref{eq:expand}.  To do this, the linear system ${\mathbf{A}} \boldsymbol \gamma = \boldsymbol \omega$ can be differentiated to find
\begin{align}
  {\mathbf{A}} (\partial_x \boldsymbol \gamma) = \partial_x \boldsymbol \omega - (\partial_x {\mathbf{A}})  \boldsymbol \gamma.
\end{align}

\subsection{Preconditioning}
\label{s:precond}
The discretization described in Example~\ref{ex:genus_one} is easily extended to find a discretization of the operators $\mathcal S$ and $\mathrm{diag}({\mathcal{S}})$ where we expect the discretization of $\mathrm{diag}({\mathcal{S}})$ to become a good preconditioner for $\mathcal S$ in light of Lemma~\ref{l:compact}.  In practice, we find it works well to use a discretization of
\begin{align}
  \diag( \mathcal S(I_1;\Omega_1), \mathcal S(I_2;\Omega_2), \ldots, \mathcal S(I_g;\Omega_g)) \label{eq:preconder}
\end{align}
as a block-diagonal preconditioner.  We find that with this preconditioner, for a fixed tolerance, the GMRES algorithm requires a bounded number of iterations, independent of $x$ and $t$.  We explore this more in Figure~\ref{fig:precond}.

\subsection{Adaptivity}\label{sec:adaptive}

In the discretization of $\mathcal S$ following the procedure outline{d} in Example~\ref{ex:genus_one}, {an} important question {is} that of choosing $n,m$.  And, in general, different choices for $n$ and $m$ should be made for each block of $\mathcal S$ under the constraint that the resulting matrix is square.

It can be shown that the solution $\mathbf{S}(z;x,t)$ can, in an appropriate sense, be analytically continued off the interval $[-1,1]$ \cite{TrogdonDressing}. For example, one expects the solution $u_j$ on $I_j$ of \eqref{eq:SIE} to have an analytic continuation to any ellipse with foci at the endpoints of $I_j$ provided that ellipse does not intersect any other $I_\ell$ for $\ell \neq j$.  And, it is well known that rate of exponential convergence of a Chebyshev interpolant can be estimated based on this ellipse \cite{TrefethenATAP}:

\begin{theorem}
  Suppose $f: [-1,1] \to \mathbb C$  can be analytically continued to the open Bernstein ellipse
  \begin{align}
    B_\rho = \{ (z^{-1} + z)/2 : |z| < \rho\}, \quad \rho > 1,
  \end{align}
   Then for
  \begin{align}
    \gamma_k = \int_{-1}^1 f({y}) T_k({y}) \frac{\dd {y}}{\pi \sqrt{1 -{y}^2}},
  \end{align}
  one has
  \begin{align}
    |\gamma_k| \leq 2 \sup_{z \in B_\rho} |f(z)| \rho^{-k},
  \end{align}
  and consequently
  \begin{align}
    \max_{{y\in}[-1,1]} \left|f({y}) - \sum_{j=1}^m \gamma_j T_j({y})\right| \leq \frac{4 M \rho^{-m}}{\rho - 1}.
  \end{align}
\end{theorem}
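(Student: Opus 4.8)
The plan is to reduce everything to a Cauchy estimate for Laurent coefficients via the Joukowski substitution, as in \cite{TrefethenATAP}; since the Bernstein ellipse is by definition here the Joukowski image of a disk, that change of variables is the natural one. First I would set $u + u^{-1} = 2y$ and define $F(u) := f\bigl(\tfrac12(u+u^{-1})\bigr)$, under which $T_k(y) = \tfrac12(u^k + u^{-k})$ and, on $|u|=1$ (where $u = \ee^{\ii\theta}$ and $y=\cos\theta$), the weighted measure $\tfrac{\dd y}{\pi\sqrt{1-y^2}}$ pulls back to $\tfrac{1}{2\pi\ii}\tfrac{\dd u}{u}$. This recasts the coefficient as a contour integral,
\begin{align*}
  \gamma_k = \frac{1}{2\pi\ii}\oint_{|u|=1} F(u)\,\frac{u^k + u^{-k}}{2}\,\frac{\dd u}{u}.
\end{align*}
Writing the Laurent series $F(u) = \sum_n c_n u^n$ on the annulus and using that $u\mapsto\tfrac12(u+u^{-1})$ is invariant under $u\mapsto u^{-1}$ (so $F(u)=F(u^{-1})$ and $c_{-n}=c_n$), one matches the Laurent and Chebyshev expansions and identifies each $\gamma_k$ with a single Laurent coefficient of $F$ up to a factor of at most $2$.

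The second step uses the hypothesis to bound $c_k$. Because $B_\rho$ is the image of $\{|z|<\rho\}$ under $z\mapsto\tfrac12(z+z^{-1})$, the Joukowski map carries the annulus $A_\rho := \{\,1/\rho < |u| < \rho\,\}$ into $B_\rho$, and in particular $\{|u|=r\}$ maps onto the ellipse $E_r \subset B_\rho$ for $1<r<\rho$; since $f$ is analytic on $B_\rho$ and the substitution is holomorphic, $F$ is analytic on $A_\rho$. Its Laurent series therefore converges there, so for any such $r$ the contour in $c_k = \tfrac{1}{2\pi\ii}\oint_{|u|=r} F(u)\,u^{-k-1}\,\dd u$ may be taken on $|u|=r$, where $|F(u)| \le M := \sup_{z\in B_\rho}|f(z)|$. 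The $ML$-inequality then gives $|c_k| \le M r^{-k}$; letting $r \uparrow \rho$ yields $|c_k| \le M\rho^{-k}$ and hence the stated bound $|\gamma_k| \le 2M\rho^{-k}$ (the claim is vacuous when $M=\infty$).

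Finally, the uniform estimate is a geometric summation. Using $|T_k(y)| \le 1$ on $[-1,1]$ together with the coefficient bound,
\begin{align*}
  \max_{y\in[-1,1]}\Bigl|\, f(y) - \sum_{j=0}^m \gamma_j T_j(y) \,\Bigr| \le \sum_{k=m+1}^\infty |\gamma_k| \le 2M\sum_{k=m+1}^\infty \rho^{-k} = \frac{2M\rho^{-m}}{\rho-1} \le \frac{4M\rho^{-m}}{\rho-1},
\end{align*}
which is the claimed inequality (with room to spare). I expect the only genuinely delicate point to be the analyticity transfer in the second step: checking that the Joukowski map sends $A_\rho$ into $B_\rho$ compatibly with the $u\mapsto u^{-1}$ symmetry, so that ``$f$ extends analytically to $B_\rho$'' legitimately becomes ``$F$ is analytic on $A_\rho$'' and the contour deformation defining $c_k$ is justified. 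Once that is in place, the identification of $\gamma_k$ as a Laurent coefficient, the Cauchy estimate, and the geometric series are all routine bookkeeping of constants.
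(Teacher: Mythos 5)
Your overall strategy matches the intended one: the paper offers no proof of this theorem, citing \cite{TrefethenATAP}, and the argument there is precisely your route --- the Joukowski transplant $F = f\circ J$, the symmetry $c_{-n}=c_n$ of its Laurent coefficients, the Cauchy/ML estimate on $|u|=r$ followed by $r\uparrow\rho$, and a geometric tail sum. Your first two steps are sound: with the paper's normalization $\gamma_k=\int_{-1}^1 f T_k\,\dd y/(\pi\sqrt{1-y^2})$ you correctly obtain $\gamma_k=\tfrac12(c_k+c_{-k})=c_k$ and $|\gamma_k|=|c_k|\le M\rho^{-k}$, which gives the stated coefficient bound with room to spare.

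There is, however, a genuine defect in your final display. Having identified $\gamma_k=c_k$, the expansion of $f$ is \emph{not} $\sum_{j\ge0}\gamma_jT_j$ but
\begin{equation*}
f(y)=\gamma_0+2\sum_{j\ge1}\gamma_jT_j(y),
\end{equation*}
because $\int_{-1}^1 T_j(y)^2\,\dd y/(\pi\sqrt{1-y^2})=\tfrac12$ for $j\ge1$ (equivalently, restricting the Laurent series to $|u|=1$ gives $F(\ee^{\ii\theta})=c_0+2\sum_{n\ge1}c_n\cos n\theta$). Consequently your inequality $\max_y\bigl|f-\sum_{j=0}^m\gamma_jT_j\bigr|\le\sum_{k>m}|\gamma_k|$ is not merely unjustified but false in general: the partial sums $\sum_{j=0}^m\gamma_jT_j$ converge to $(f+\gamma_0)/2$, not to $f$, so the left-hand side does not even tend to zero as $m\to\infty$. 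The repair is immediate and stays within your framework: truncate the correct expansion, giving
\begin{equation*}
\max_{y\in[-1,1]}\Bigl|f(y)-\gamma_0-2\sum_{j=1}^m\gamma_jT_j(y)\Bigr|\le 2\sum_{k>m}|c_k|\le \frac{2M\rho^{-m}}{\rho-1}\le\frac{4M\rho^{-m}}{\rho-1}.
\end{equation*}
In fairness, this factor-of-two mismatch originates in the paper's own statement, whose uniform $1/\pi$ normalization of $\gamma_k$ is incompatible with the truncation $\sum_{j=1}^m\gamma_jT_j$ appearing in its conclusion (the intended result is Trefethen's coefficient and truncation theorems under the standard normalization $a_0=\gamma_0$, $a_k=2\gamma_k$ for $k\ge1$). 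But a complete proof needs to notice and repair that inconsistency rather than reproduce it; as written, your last chain of inequalities, like the paper's literal claim, bounds a quantity that does not decay.
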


So, if\footnote{If $j = 1$ we compare with $I_{1}$ with $I_{-1}$ and $I_2$.  Then $j < 0$ is taken care of by symmetry.} $j > 1$, $I_j = [a,b]$, $I_{j-1} = [a',b']$ and $I_{j+1} = [a'',b'']$, we map $[a,b]$ to $[-1,1]$ using $T^{-1}_{a,b}$ which, in turn, maps
\begin{align}
  [a',b'] \to \left[ \frac{2}{b-a} a' + \frac{b+a}{b-a}, \frac{2}{b-a} b' + \frac{b+a}{b-a}  \right],
\end{align}
and similarly for $[a'',b'']$.  So, set
\begin{align}
  \delta = \min\left\{ -1 - \frac{2}{b-a} b' + \frac{b+a}{b-a}, \frac{2}{b-a} a'' + \frac{b+a}{b-a} -1  \right\}{>0}.
\end{align}
So, we expect $u_j \circ T_{a,b}$ to have an analytic extension to $B_\rho$ for any $\rho$ such that $B_\rho \cap \mathbb R \subset [-1-\delta, 1+\delta]$.  To be conservative in our estimates, we use $\delta/2$ instead and find that $\rho$ should be chosen to be:
\begin{align}
  (\rho^{-1} + \rho)/2 = 1 + \delta/2 \Rightarrow \rho(I_j) = \frac{1}{2} \left( 2 + \delta + \sqrt{\delta(4+\delta)} \right) > 1.
\end{align}
So, given an estimate for $M$, and a tolerance $\epsilon>0$, we can choose $m$ so that
\begin{align}
\frac{4 M \rho(I_j)^{-m}}{\rho(I_j) - 1} < \epsilon,
\end{align}
and this provides an \emph{a priori} guide as to how to choose $m$ in the discretizations $C^\pm_{[a,b]}(m,n)$ and $C_{[a,b] \to [c,d]}(m,n)$.

\section{Applications}
\label{s:Applications}
\subsection{Solutions with dressing: Slowly shrinking gaps}
\label{s:numerics-dressing-slow}

With the methodology set out, one can easily specify a finite number of gaps and specify the Dirichlet eigenvalues within each gap, and compute the associated potential and its evolution under the KdV flow.  To demonstrate this we make first make the following choice for the gaps {in the $\lambda$-plane}.

Choosing $\alpha_1 = 0.1$ we then set for $j = 1,2,\ldots,g$
\begin{align}\label{eq:slow_gaps}
  \alpha_{j+1} - \beta_j = \begin{cases} j^{-1}  & j \text{ is odd},\\
   3 j^{-1} & j \text{ is even},
    \end{cases} \quad \beta_{j} = 2(j-1)^2 + \frac{4}{10}.
\end{align}
To fully specify the a solution we set $\gamma_j(x_0 = 0) = \beta_j$ for all $j$. 

While, as we demonstrate in the next section, we can compute the corresponding solution $q(x,t)$ for $g$ large, the solution oscillates wildly and is difficult to visualize.  For this reason we plot the solution for smaller values of $g$ over short time ranges. {See Figure~\ref{fig:dressing-waterfall}}.

\begin{figure}[tbp]
  \centering
  \begin{overpic}[width = 0.45\linewidth]{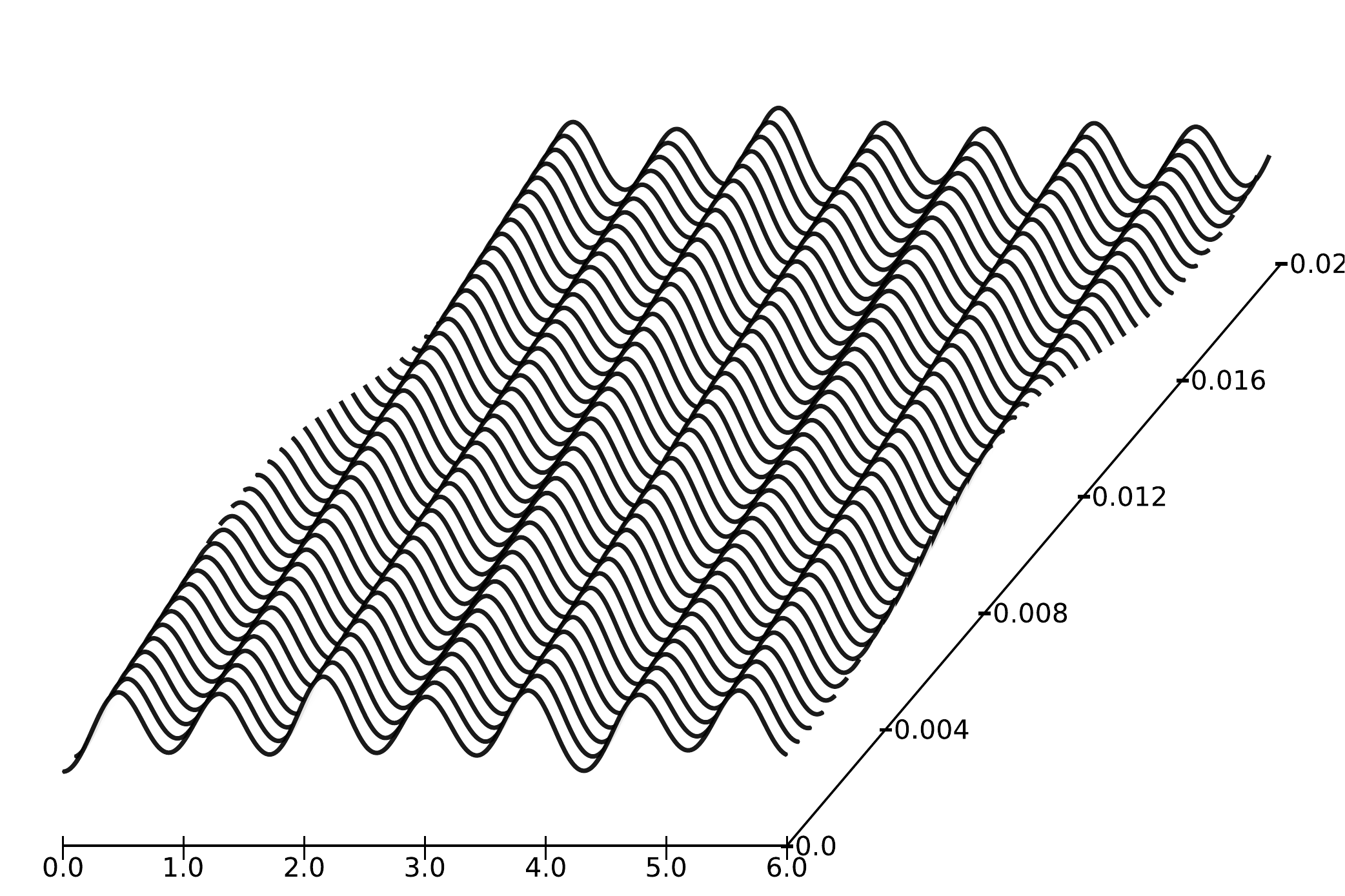}
    \put(28,-3){{$x$}}
    \put(84,22){{$t$}}
    \put(10,40){{$g = 5$}}
  \end{overpic}
  \begin{overpic}[width = 0.45\linewidth]{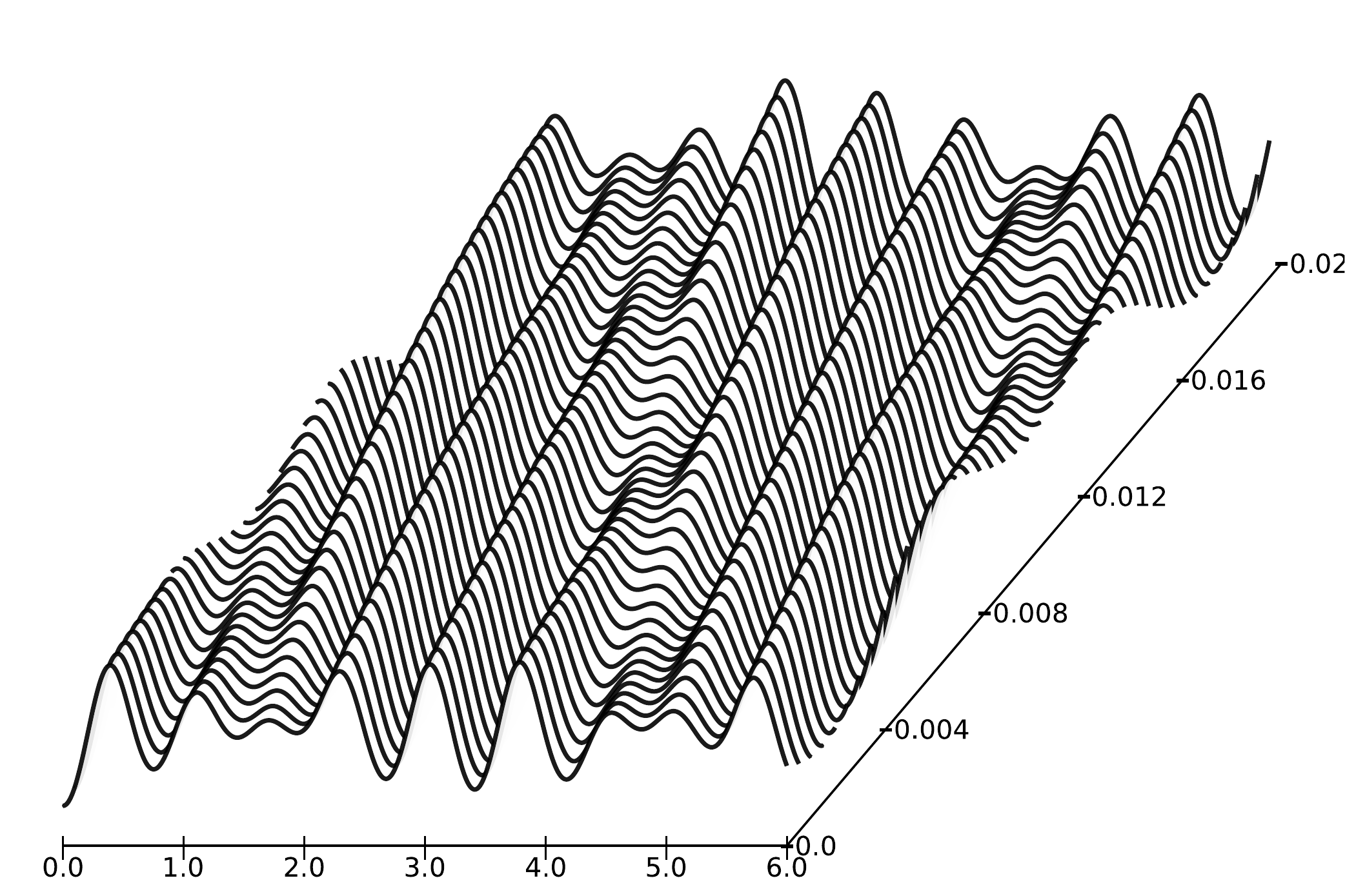}
   \put(28,-3){{$x$}}
   \put(84,22){{$t$}}
    \put(10,40){{$g = 7$}}
  \end{overpic} \\

   \begin{overpic}[width = 0.45\linewidth]{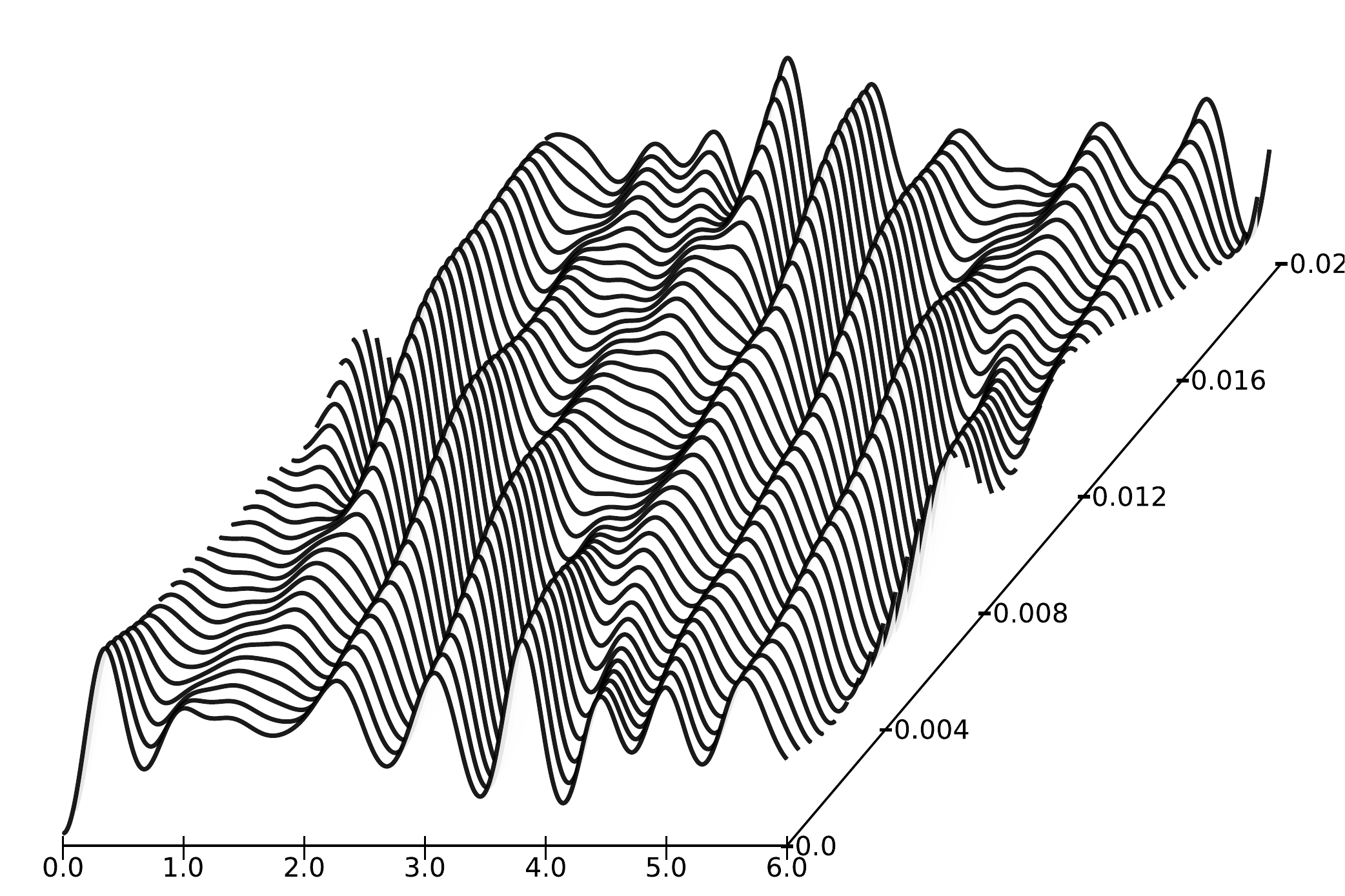}
   \put(28,-3){{$x$}}
   \put(84,22){{$t$}}
    \put(8,40){{$g = 10$}}
  \end{overpic}

  \caption{Numerical solutions of the KdV equation with slowly shrinking gaps, see \eqref{eq:slow_gaps}.  The number of collocation points per contour is chosen adaptively using the methodology in Section~\ref{sec:adaptive}. }
  \label{fig:dressing-waterfall}
\end{figure}

\subsection{High-genus solutions with dressing: Dense gaps and universality}
\label{s:numerics-dressing-univ}

It is also interesting to ask what happens if an increasing number of gaps are put into a fixed interval.  Fix, for convenience $\alpha_1 = 0$, and $\beta _1 > 0$.  Also, suppose that $\alpha_{g+1} \to \alpha > \beta _1$ as $g$ increases. Now suppose $\sigma({y}) = \int_{\beta_1}^t \varrho({y}) \dd {y}$, where $\varrho({y})$ is positive {and continuous} on $[\beta_1,\alpha]$, increases from $0$ to $1$ over the interval $[\beta_1,\alpha]$.  Given a sequence $w_1,\ldots,w_g$ with $0 < w_j < 1$, define $\alpha_2,\ldots, \alpha_{g}$ and $\beta_2,\ldots, \beta_{g}$ through
  \begin{align*}
    \sigma(\alpha_j) = \frac{j-1}{g-1}, \quad \sigma(\beta_j) = \frac{j - 1+ w_j}{g-1}.
  \end{align*}

  The following lemma will be of use.
  \begin{lemma}
    The rational function
    \begin{align*}
     B(\lambda) := \prod_{j=2}^{g} \frac{\lambda - \alpha_j}{\lambda - \beta_j}
    \end{align*}
    satisfies
        \begin{equation}
      \log (B(\lambda)) - \frac{1}{g-1} \sum_{j=2}^{g} \frac{w_j}{\varrho(\beta_j)} \frac{1}{\lambda - \beta_j}  = o(1), \quad g \to \infty,
    \end{equation}
    uniformly on compact subsets of $\mathbb C \setminus [\beta_1,\alpha]$.
  \end{lemma}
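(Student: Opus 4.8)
The plan is to pass to logarithms and reduce the statement to a first-order expansion in the band widths. Write $\epsilon_j := \beta_j - \alpha_j > 0$, so that, taking principal branches,
\[
  \log B(\lambda) = \sum_{j=2}^g \log\frac{\lambda - \alpha_j}{\lambda - \beta_j} = \sum_{j=2}^g \log\left( 1 + \frac{\epsilon_j}{\lambda - \beta_j} \right).
\]
Fix a compact set $K \subset \mathbb C \setminus [\beta_1,\alpha]$ and put $d := \operatorname{dist}(K, [\beta_1,\alpha]) > 0$. Since $\varrho$ is continuous and strictly positive on the compact interval $[\beta_1,\alpha]$, it is bounded below by some $c > 0$; together with the identity $\sigma(\beta_j) - \sigma(\alpha_j) = w_j/(g-1)$ and $0 < w_j < 1$, this gives $\epsilon_j \le 1/((g-1)c)$ uniformly in $j$. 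Hence, for $\lambda \in K$, each argument $\epsilon_j/(\lambda - \beta_j)$ is $O(1/g)$ uniformly and lies well inside the unit disk once $g$ is large, so the principal-branch expansion above is legitimate.

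Next I would linearize each logarithm. Using $|\log(1+z) - z| \le |z|^2$ for $|z| \le \tfrac12$,
\[
  \left| \log B(\lambda) - \sum_{j=2}^g \frac{\epsilon_j}{\lambda - \beta_j} \right| \le \frac{1}{d^2} \sum_{j=2}^g \epsilon_j^2.
\]
Because the bands $[\alpha_j,\beta_j]$ are pairwise disjoint subintervals of $[\beta_1,\alpha]$, their total length satisfies $\sum_j \epsilon_j \le \alpha - \beta_1$, whence $\sum_j \epsilon_j^2 \le (\max_j \epsilon_j)\sum_j \epsilon_j = O(1/g)$. Thus the quadratic error is $O(1/g)$ uniformly on $K$, and it remains only to match the linear sum with the claimed expression.

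For this, I would apply the mean value theorem for integrals to $\sigma' = \varrho$: there is $\xi_j \in (\alpha_j,\beta_j)$ with $w_j/(g-1) = \sigma(\beta_j) - \sigma(\alpha_j) = \varrho(\xi_j)\,\epsilon_j$, so that $\epsilon_j/(\lambda - \beta_j) = w_j / \big((g-1)\,\varrho(\xi_j)(\lambda - \beta_j)\big)$. Subtracting the target term by term, the discrepancy to be controlled is
\[
  \sum_{j=2}^g \frac{w_j}{(g-1)(\lambda - \beta_j)}\left( \frac{1}{\varrho(\xi_j)} - \frac{1}{\varrho(\beta_j)} \right).
\]
Since $1/\varrho$ is uniformly continuous on $[\beta_1,\alpha]$ with some modulus of continuity $\omega$, and $|\xi_j - \beta_j| \le \epsilon_j \le 1/((g-1)c)$, each parenthesised factor is at most $\omega(1/((g-1)c))$; using $w_j < 1$ and $|\lambda - \beta_j| \ge d$ the whole discrepancy is bounded by $\omega(1/((g-1)c))/d \to 0$, uniformly on $K$. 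The main obstacle is precisely this last step: one must recognize the exact cancellation between the prefactor $1/(g-1)$ and the fact that $\sum_j w_j/|\lambda - \beta_j|$ is of size $O(g)$, so that only the modulus of continuity $\omega(O(1/g))$ survives. This is where positivity of $\varrho$ is used twice — once to bound the widths $\epsilon_j$ uniformly, and once to make $1/\varrho$ uniformly continuous so that $\omega(O(1/g)) \to 0$. Combining the three displays then yields the claim, and the earlier linearization and disjointness estimates are routine by comparison.
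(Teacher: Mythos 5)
Your proposal is correct and follows essentially the same route as the paper's proof: expand each logarithm to first order with a quadratic error term, use the mean value theorem to write $\beta_j - \alpha_j = \frac{w_j}{(g-1)\varrho(\xi_j)}$, and then invoke uniform continuity of $\varrho$ (equivalently of $1/\varrho$) to replace $\varrho(\xi_j)$ by $\varrho(\beta_j)$ with a uniformly small discrepancy. Your write-up is if anything slightly more explicit than the paper's (e.g., the disjointness bound $\sum_j \epsilon_j^2 = O(1/g)$ and the counting of the $g-1$ terms against the $1/(g-1)$ prefactor), but the ideas coincide step for step.
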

  \begin{proof}
    Expanding $\log(z)$ at $z = 1$ gives
    \begin{equation}
      \log \left(\frac{\lambda - \alpha_j}{\lambda - \beta_j}\right) = \frac{\beta_j - \alpha_j}{\lambda - \beta_j} + O((\beta_j-\alpha_j)^2).
    \end{equation}
    By the mean value theorem, $\beta_j - \alpha_j = \frac{w_j}{g-1} \varrho(\xi_j)^{-1}$, where $\alpha_j \leq \xi_j \leq \beta_j$.  This gives
    \begin{equation}
      \left|\log (B(\lambda)) - \frac{1}{g-1} \sum_{j=2}^g \frac{w_j}{\varrho(\xi_j)} \frac{1}{\lambda - \beta_j} \right| \leq C  \sum_{j=2}^g \frac{w_j^2}{(g-1)^2} \varrho(\xi_j)^{-2}.
    \end{equation}
    Then because $\varrho$ is uniformly continuous, for any ${\epsilon>0}$ there exists a $g_0> 0$ such that $|\varrho(\xi_j) - \varrho(\beta_j)| < \epsilon$ for all $j$ if $g > g_0$, so that
    \begin{align*}
      \left| \frac{1}{\varrho(\xi_j)} - \frac{1}{\varrho(\beta_j)} \right| \leq \frac{\epsilon}{\varrho(\xi_j)\varrho(\beta_j)},
    \end{align*}
    and the claim follows.
  \end{proof}

  Now, if in the above lemma, $w_j = v(\beta_j)$ for some continuous function $v: [\beta_1,\alpha] \to (0,1)$ we find that  
  \begin{align*}
    \log(B(\lambda)) = \int_{\beta_1}^{\alpha} \frac{v(y)}{\lambda -y} \dd y + o(1), \quad g \to \infty.
  \end{align*}
  Thus, the distribution of individual locations $\alpha_j$, $\beta_j$ do not influence the limiting behavior of {$B(\lambda)$ as $g$ becomes large}. But rather, the distribution of the lengths of the bands is the most important quantity.

  To see how $B(\lambda)$ will arise in a Riemann--Hilbert problem consider the above choice for $\alpha_j$ and $\beta_j$, for given functions ${\sigma}$ and $v$.
  Previously, we have moved poles in the gap $[\beta_j,\alpha_{j+1}]$ on one sheet of the Riemann surface to the point $\lambda=\alpha_j$.  This was for numerical convenience.  Here, for analytical convenience, we put the poles at $\lambda=\beta_j$. {We diagonalize the twist jump matrix for ${\boldsymbol\Psi}(\lambda;x,t)$:}
  {
  \begin{equation}
      \sigma_1 = \mathbf{Q} \sigma_3 \mathbf{Q}^{-1},\qquad \mathbf{Q}:=\frac{1}{\sqrt{2}}\begin{bmatrix} 1 & - 1 \\ 1 & 1 \end{bmatrix},\quad \mathbf{Q}^{-1} = \mathbf{Q}^\top.
  \end{equation}
}
  Then
  \begin{align*}
    \mathbf{W}(\lambda) := \mathbf{Q} \begin{bmatrix} 1 & 0 \\ 0 &  {B(\lambda)^{1/2}}\sqrt{\frac{\lambda - \alpha_1}{\lambda- \beta_1}}\sqrt{\alpha_{g+1} - \lambda}  \end{bmatrix} \mathbf{Q}^{\top}, 
    % \quad \tilde B(\lambda)^2 = B(\lambda) ,
  \end{align*}
  can be used\footnote{The {power} function {$B(\lambda)^{1/2}$} is chosen to have its branch cut on $\cup_{j=1}^g [\alpha_j,\beta_j]$ with $B(\lambda)^{1/2}\to 1$ as $\lambda\to\infty$.} to remove the jumps of $\mathbf{\Psi}$. Define
  \begin{align*}
    \check{\mathbf{\Psi}}(\lambda;x,t) = \begin{cases} \mathbf{\Psi}(\lambda;x,t) \mathbf{W}(\lambda)^{-1} & \lambda \in \Omega,\\
                                       \mathbf{\Psi}(\lambda;x,t)  & \text{otherwise}, \end{cases}
  \end{align*}
  where $\Omega$ is the {disk}\footnote{This domain is taken for concreteness, any other reasonable region containing all finite bands and gaps will suffice.} centered at $\lambda={\alpha_1} = 0$ with radius $\alpha + c$, $c > 0${, and we orient the circle $\partial\Omega$ counter-clockwise}. Then one finds that $ \check{\mathbf{\Psi}}$ satisfies the following jump condition
  \begin{align*}
     \check{\mathbf{\Psi}}^+(\lambda;x,t) = \check{\mathbf{\Psi}}^-(\lambda;x,t)\mathbf{W}(\lambda)^{-1}, \quad \lambda \in \partial \Omega.
  \end{align*}
  For $\lambda \in \partial \Omega$, supposing that $\beta_1 \to \beta$, $0 \leq \beta < \alpha$, one has
  \begin{align*}
    \mathbf{W}(\lambda)^{-1} \overset{g \to \infty}{\longrightarrow} \mathbf{W}_\infty(\lambda)^{-1} : = \mathbf{Q} \begin{bmatrix} 1 & 0 \\ 0 & \frac{1}{\sqrt{\alpha - \lambda}} \sqrt{\frac{\lambda - \beta}{\lambda}}\exp \left( -\frac 1 2 \int_{\beta}^{\alpha} \frac{v(s)}{s - \lambda} \dd s\right)  \end{bmatrix} \mathbf{Q}^{\top}.
  \end{align*}
  Bringing the jump from $\partial\Omega$ back to the real axis, we find the following Riemann--Hilbert problem for a limiting ${\boldsymbol\Psi}_\infty$
\begin{align*}
{\boldsymbol\Psi}_\infty^+(\lambda;x,t)&={\boldsymbol\Psi}_\infty^-(\lambda;x,t) \sigma_1
%\begin{bmatrix} 0 & 1 \\ 1 & 0 \end{bmatrix}
,\quad \lambda \in (0,\beta) \cup (\alpha,\infty), \\
{\boldsymbol\Psi}_\infty^+(\lambda;x,t)&={\boldsymbol\Psi}_\infty^-(\lambda;x,t) \begin{bmatrix} 
\frac{1 + \ee^{ \ii \pi  v(\lambda)} }{2} & \frac{ \ee^{\ii \pi v(\lambda)} -1 }{2}\\
\frac{ \ee^{\ii \pi v(\lambda)} -1 }{2}  & \frac{1 + \ee^{\ii \pi v(\lambda)} }{2} \end{bmatrix}, \quad \lambda \in (\beta,\alpha),
\end{align*}
and has the asymptotic behavior
\begin{equation}\label{eq:psi_inf_inf}
{\boldsymbol\Psi}_\infty(\lambda;x,t) = \begin{bmatrix} \ee^{\ii \lambda^{\frac{1}{2}}(x + 4 \lambda t)}& \ee^{-\ii \lambda^{\frac{1}{2}}(x + 4 \lambda t)}\end{bmatrix}\left(\mathbb{I} + o(1) \right),\quad \lambda\to\infty.
\end{equation}
This construction can be immediately generalized to
\begin{align*}
{\boldsymbol\Psi}_\infty^+(\lambda;x,t)&={\boldsymbol\Psi}_\infty^-(\lambda;x,t) \sigma_1
%\begin{bmatrix} 0 & 1 \\ 1 & 0 \end{bmatrix}
,\quad \lambda \in (\alpha_{g+1},\infty) \cup \bigcup_{j=1}^g (\alpha_j,\beta_j), \\
{\boldsymbol\Psi}_\infty^+(\lambda;x,t)&={\boldsymbol\Psi}_\infty^-(\lambda;x,t) \begin{bmatrix} 
\frac{1 + \ee^{ \ii \pi  v_j(\lambda)} }{2} & \frac{ \ee^{\ii \pi v_j(\lambda)} -1 }{2}\\
\frac{ \ee^{\ii \pi v_j(\lambda)} -1 }{2}  & \frac{1 + \ee^{\ii \pi v_j(\lambda)} }{2} \end{bmatrix}, \quad \lambda \in (\beta_j,\alpha_{j+1}),
\end{align*}
where $v_j: [\beta_j,\alpha_{j+1}] \to (0,1)$ is continuous and ${\boldsymbol\Psi}_\infty$
and has the asymptotic behavior \eqref{eq:psi_inf_inf}. While full exploration of such Riemann--Hilbert problems is beyond the scope of the current paper, potentials for $v(\lambda) \equiv 1/2$ are given in Figure~\ref{fig:uniform} in the case where $\beta_1 \to 0$ as $g \to \infty$ and the evolution is plotted in Figure~\ref{fig:uniform-contour}.

  \begin{figure}[tbp]
  \centering
  \begin{overpic}[width = 0.45\linewidth]{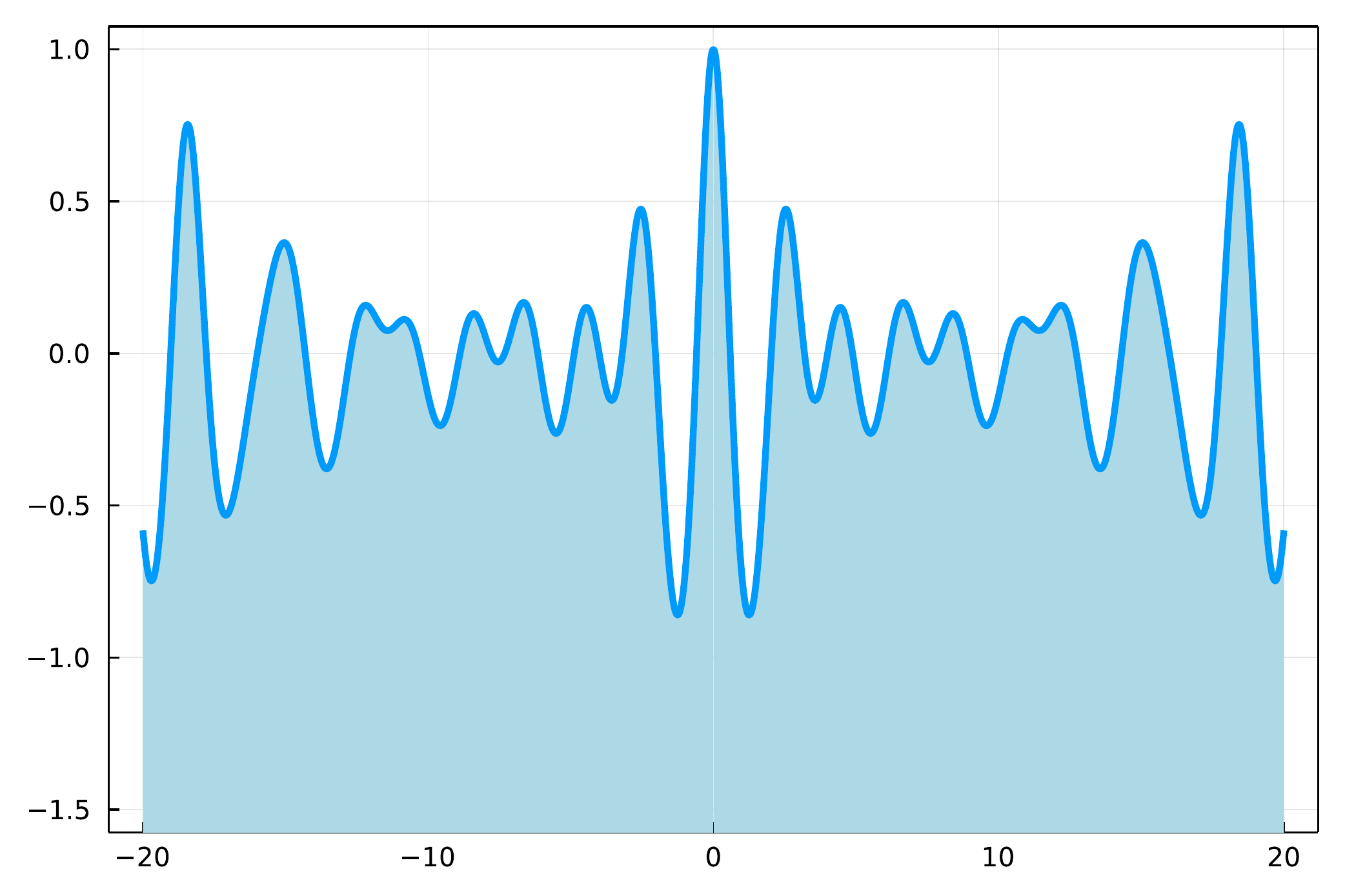}
    \put(80,58){$g = 5$}
    \put(-3,25){\rotatebox{90}{$q(x,0)$}}
    \put(55,-1){{$x$}}
  \end{overpic}\hspace{0.1in}
  \begin{overpic}[width = 0.45\linewidth]{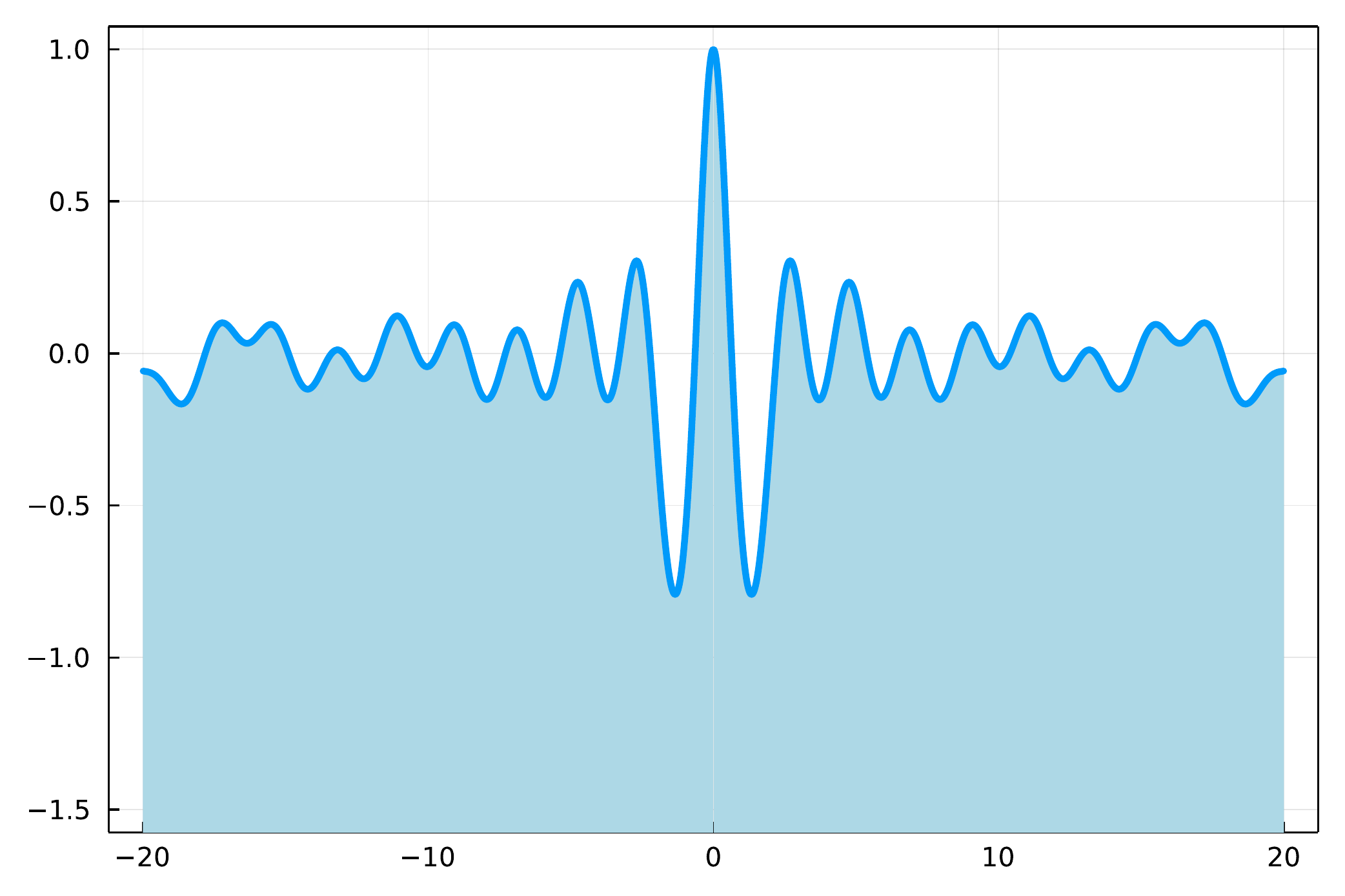}
    \put(80,58){$g = 10$}
    \put(-3,25){\rotatebox{90}{$q(x,0)$}}
    \put(55,-1){{$x$}}  
  \end{overpic}\\  
  \vspace{.1in}
  
  \begin{overpic}[width = 0.45\linewidth]{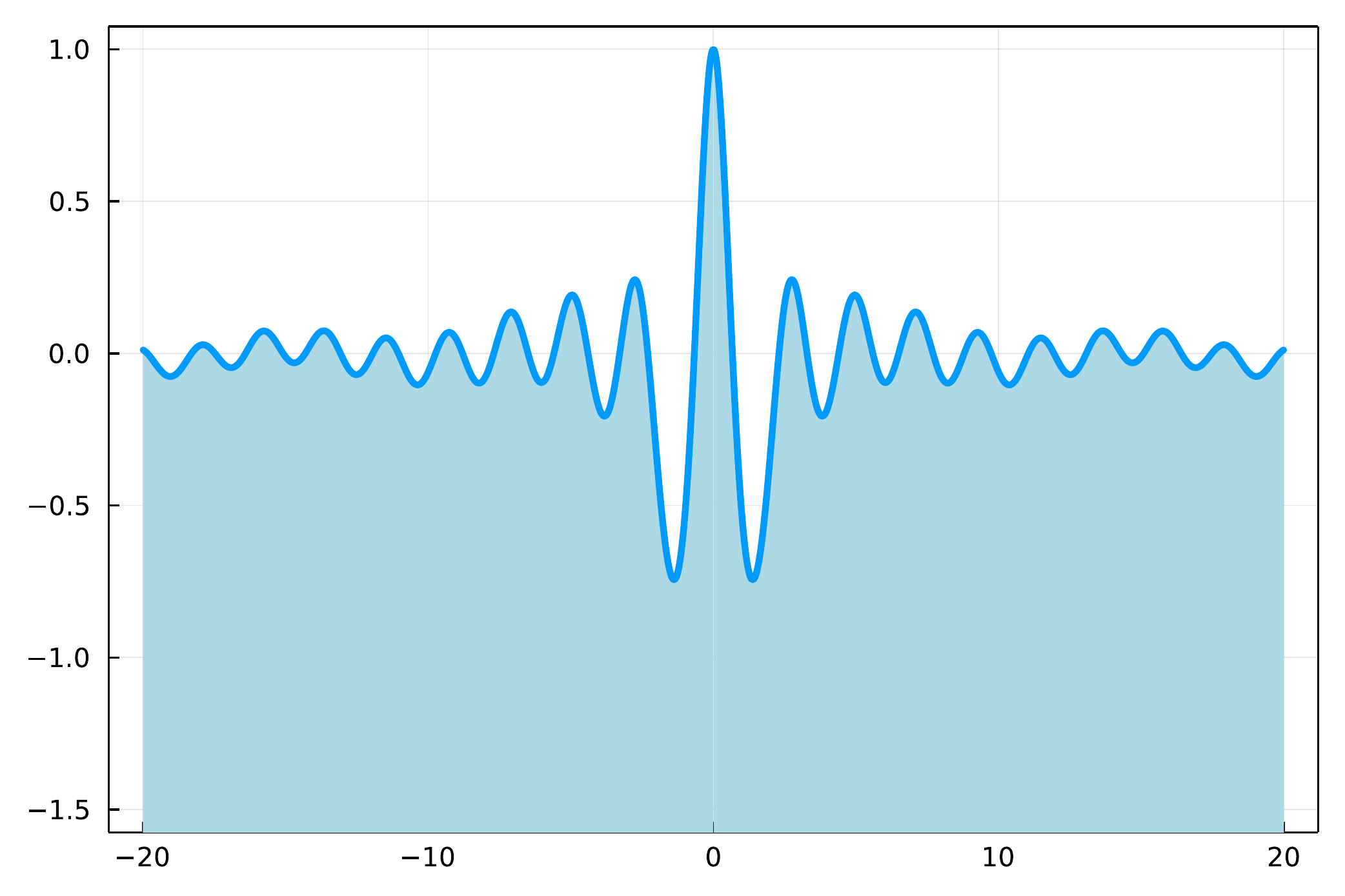}
    \put(80,58){$g = 20$}
    \put(-3,25){\rotatebox{90}{$q(x,0)$}}
    \put(55,-1){{$x$}}
  \end{overpic}\hspace{0.1in}
  \begin{overpic}[width = 0.45\linewidth]{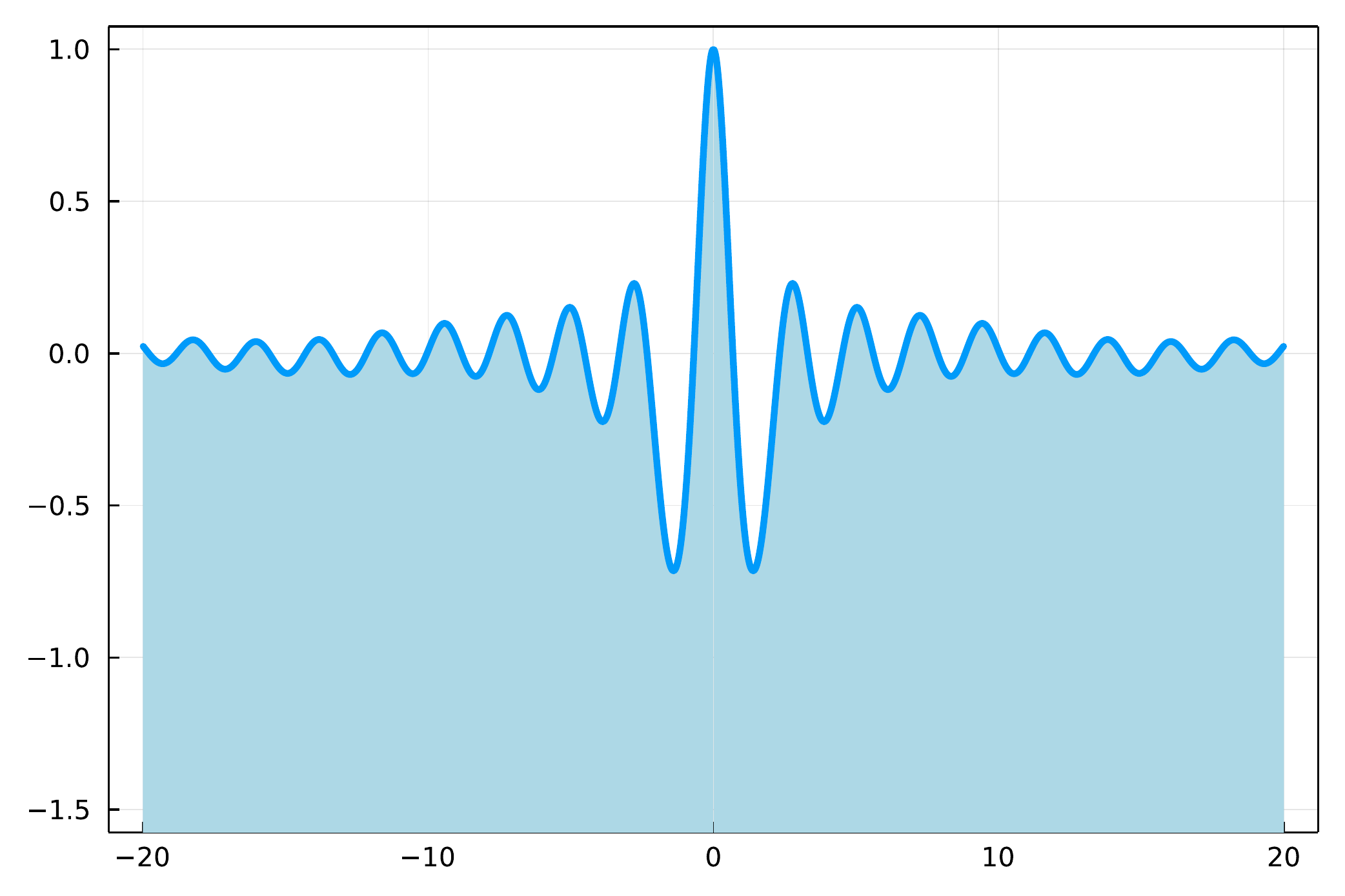}
    \put(80,58){$g = 40$}
    \put(-3,25){\rotatebox{90}{$q(x,0)$}}
    \put(55,-1){{$x$}}
  \end{overpic}
  \caption{\label{fig:uniform} Numerical solutions of the KdV equation as $g$ increases with $v$ in Section~\ref{s:numerics-dressing-univ} being constant, equal to $1/2$ and $\alpha_{g+1} = 2 + 1/g$, $\beta = 3/g$.  Here we also, for simplicity, take $\varrho$ to be uniform on $[\beta_1,\alpha]$.  The number of collocation points on each interval is obtained adaptively using the methodology in Section~\ref{sec:adaptive}.  It is clear from these panels that the solution converges as $g \to \infty$. The limit may be related to the so-called primitive potentials \cite{Dyachenko2016} but the connection is not immediate..  }  
\end{figure}

\begin{figure}[tbp]
  \centering
  \begin{overpic}[width = 0.6\linewidth]{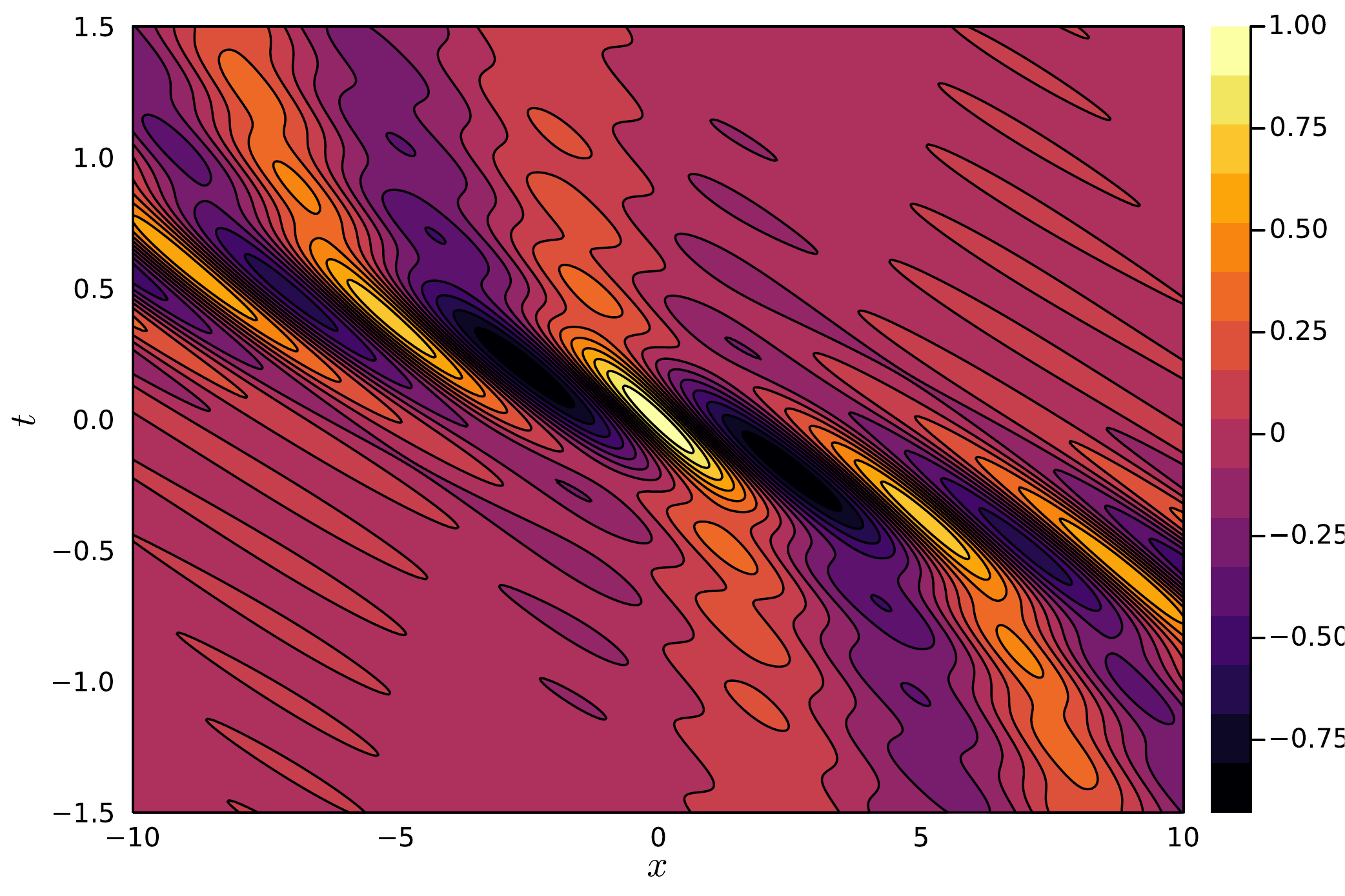}
  \end{overpic}
  \caption{\label{fig:uniform-contour} Evolution of the solution of the KdV equation with $g = 30$ and $v$ in Section~\ref{s:numerics-dressing-univ} being constant, equal to $1/2$ and $\alpha_{g+1} = 2$.}
  \end{figure}

\subsection{Initial-value problem with smooth data}
\label{s:numerics-smooth}

We consider the classical problem of Zabusky--Kruskal \cite{Zabusky1965}
\begin{align}\label{eq:uZK}
  u_t + u u_x + \delta^2 u_{xxx} = 0,\\
  u(x,0) = \cos \pi x,
\end{align}
for $x \in (-2,2]$.  Based on Remark~\ref{r:symmetry}, since we are set to solve $q_x + 6 qq_x + q_{xxx} = 0$, we choose
\begin{align*}
  q(x,0) = a^{-1} u( x/b,0), \quad b = \frac{1}{\sqrt{6} \delta}, \quad a = \frac{6c}{b}, \quad c = b^3 \delta^2,
\end{align*}
and then $u(x,t) = a q(b x,ct)$.

We choose $\delta = 0.08$ and use an error tolerance of $10^{-13}$ (see Section~\ref{sec:adaptive}) to choose the number of collocation points on each interval $I_j$.  We then plot the error in computing $u(x,1)$ as $g$ increases.  To estimate the true error we use the exponential integrator method discussed in \cite{Bilman2018} motivated from the work in \cite{Klein2008} to compute the ``true'' solution.  Exponential convergence is seen in Figure~\ref{fig:ZK-g}.  The evolution of the corresponding solution is given in Figure~\ref{fig:ZK-evolve}.

\begin{remark}
To be able to compute this solution, one needs to be able to compute the spectrum.  We use the Fourier--Floquet--Hill method \cite{Deconinck2006b} to compute the periodic/anti-periodic eigenvalues and use a Chebyshev method to compute the Dirichlet eigenvalues.  This latter method can be found implemented in both {\tt Chebfun} \cite{chebfun} and {\tt ApproxFun} \cite{approxfun}.  
\end{remark}

\begin{figure}[tbp]
  \centering
  \begin{overpic}[width = 0.6\linewidth]{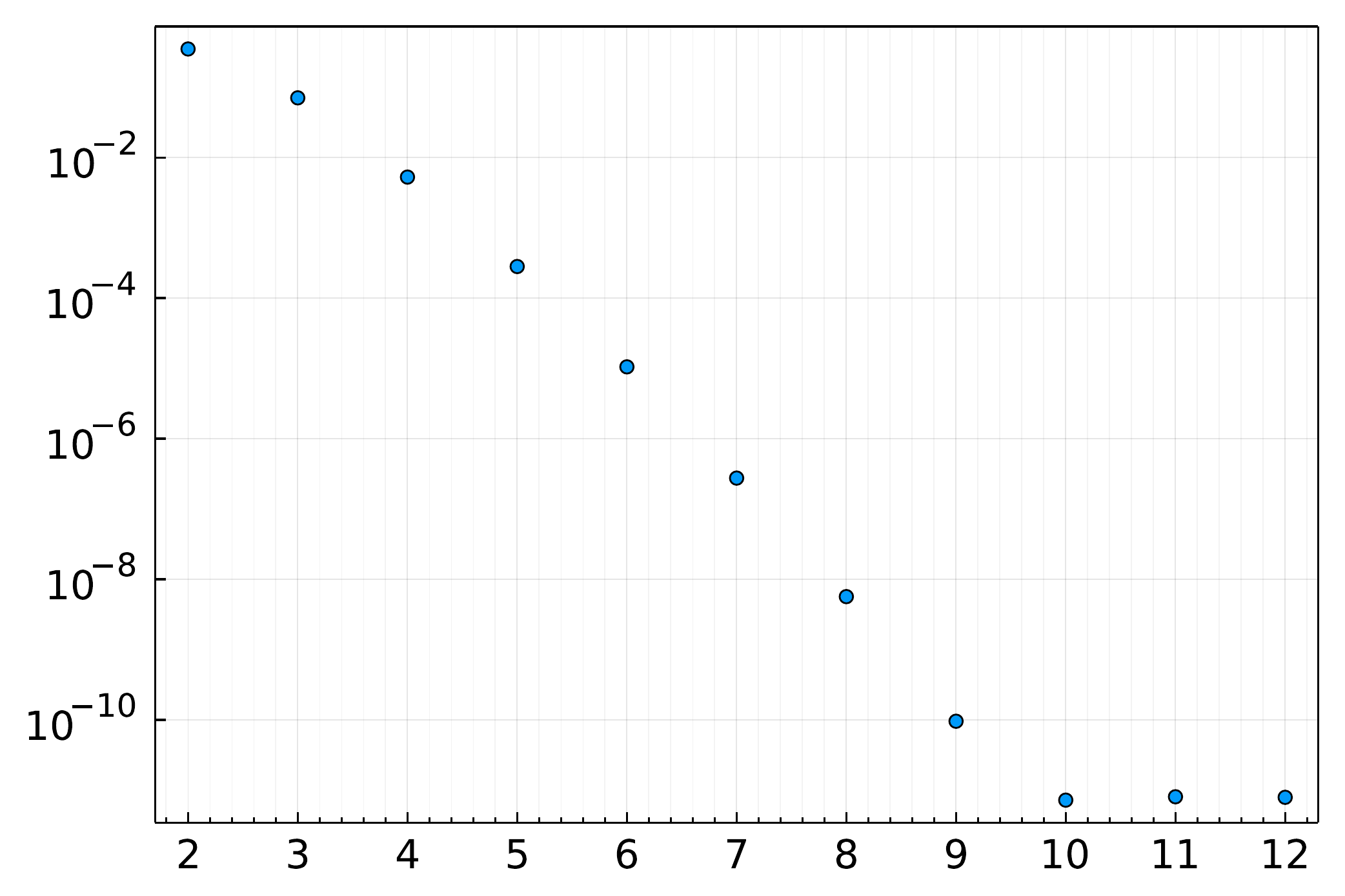}
    \put(-1,30){\rotatebox{90}{Error}}  
    \put(55,-1){{$g$}}
  \end{overpic}
  \caption{\label{fig:ZK-g} The error in the computation of $u(x,1)$ where $u$ is the solution of \eqref{eq:uZK} as measured by comparing the numerical solution with that obtained by direct integration of the KdV equation using an exponential time integration method, see~\cite{Klein2008} for a general reference and \cite{Bilman2018} for the precise method used. The error is on the order of $10^{-14}$ for sufficiently large genus and we are unable to distinguish which numerical solution is giving the dominant contribution to the error.  The number of collocation points per contour is chosen using the methodology in Section~\ref{sec:adaptive}. }
  \end{figure}

  \begin{figure}[tbp]
  \centering
  \begin{overpic}[width = 0.45\linewidth]{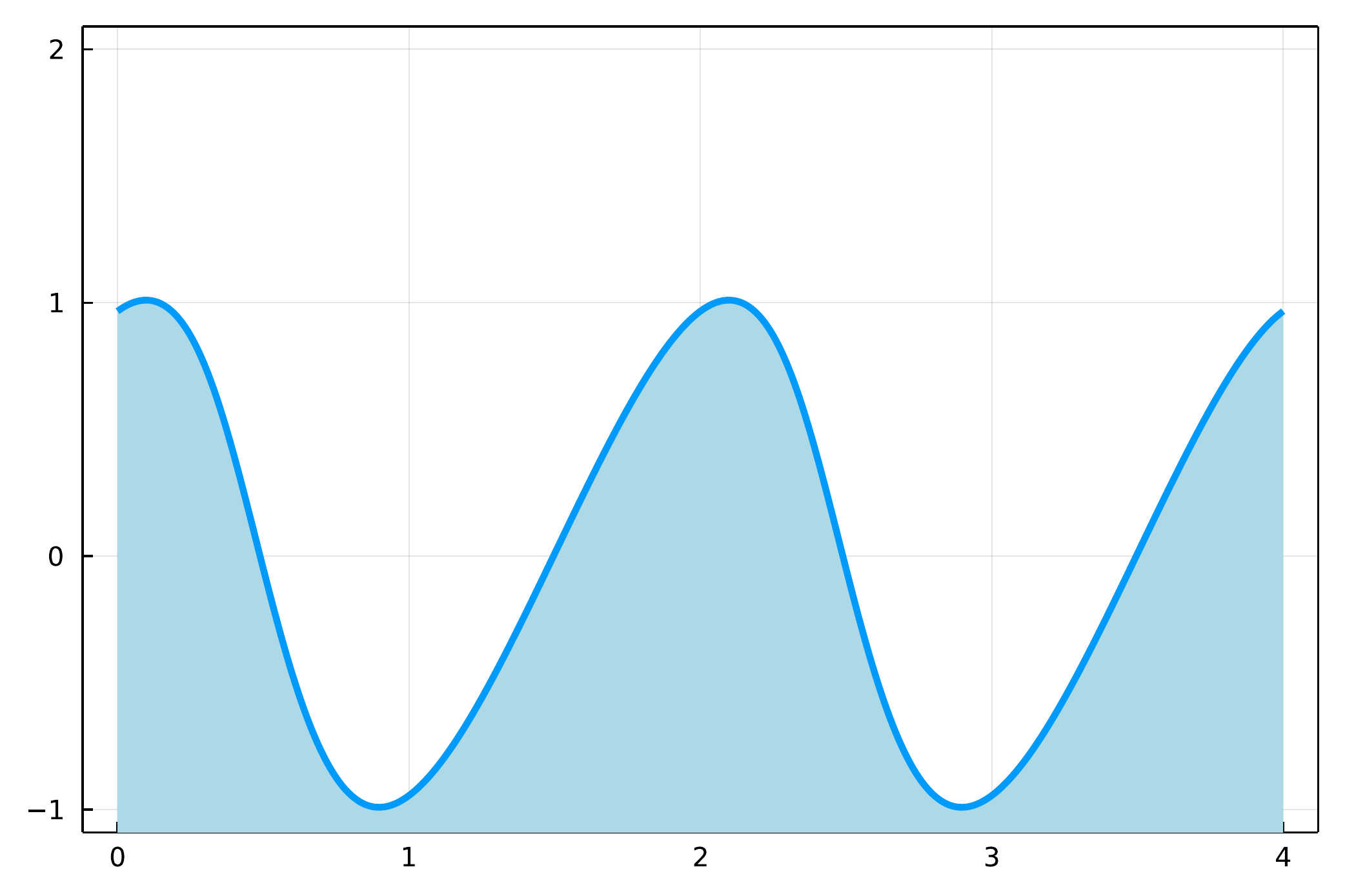}
    \put(-3,25){\rotatebox{90}{$u(x,0.1)$}}
    \put(55,-1){{$x$}}
  \end{overpic}\hspace{0.1in}
  \begin{overpic}[width = 0.45\linewidth]{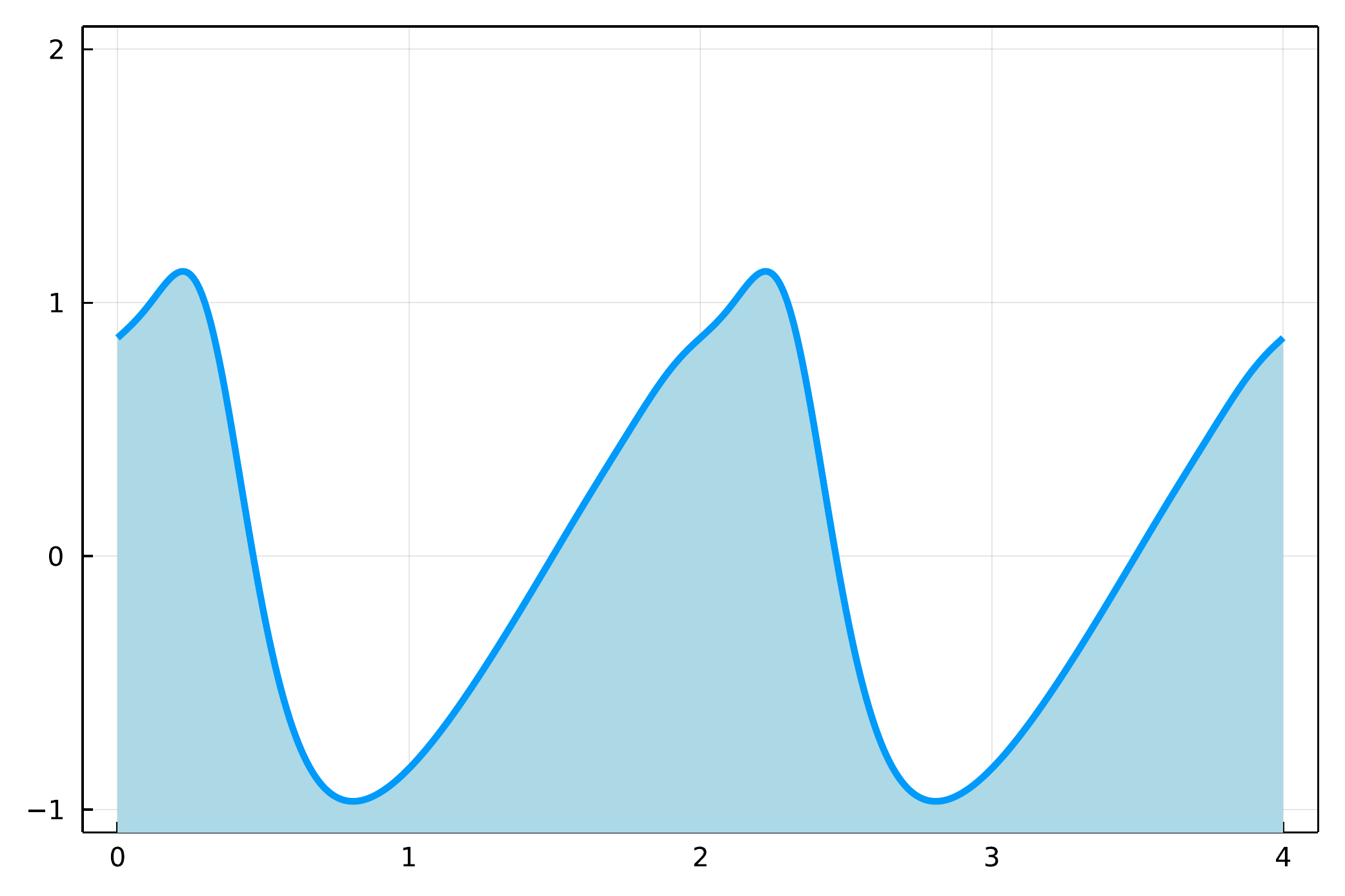}
    \put(-3,25){\rotatebox{90}{$u(x,0.2)$}}
    \put(55,-1){{$x$}}
  \end{overpic}\\
  \vspace{.1in}
  
  \begin{overpic}[width = 0.45\linewidth]{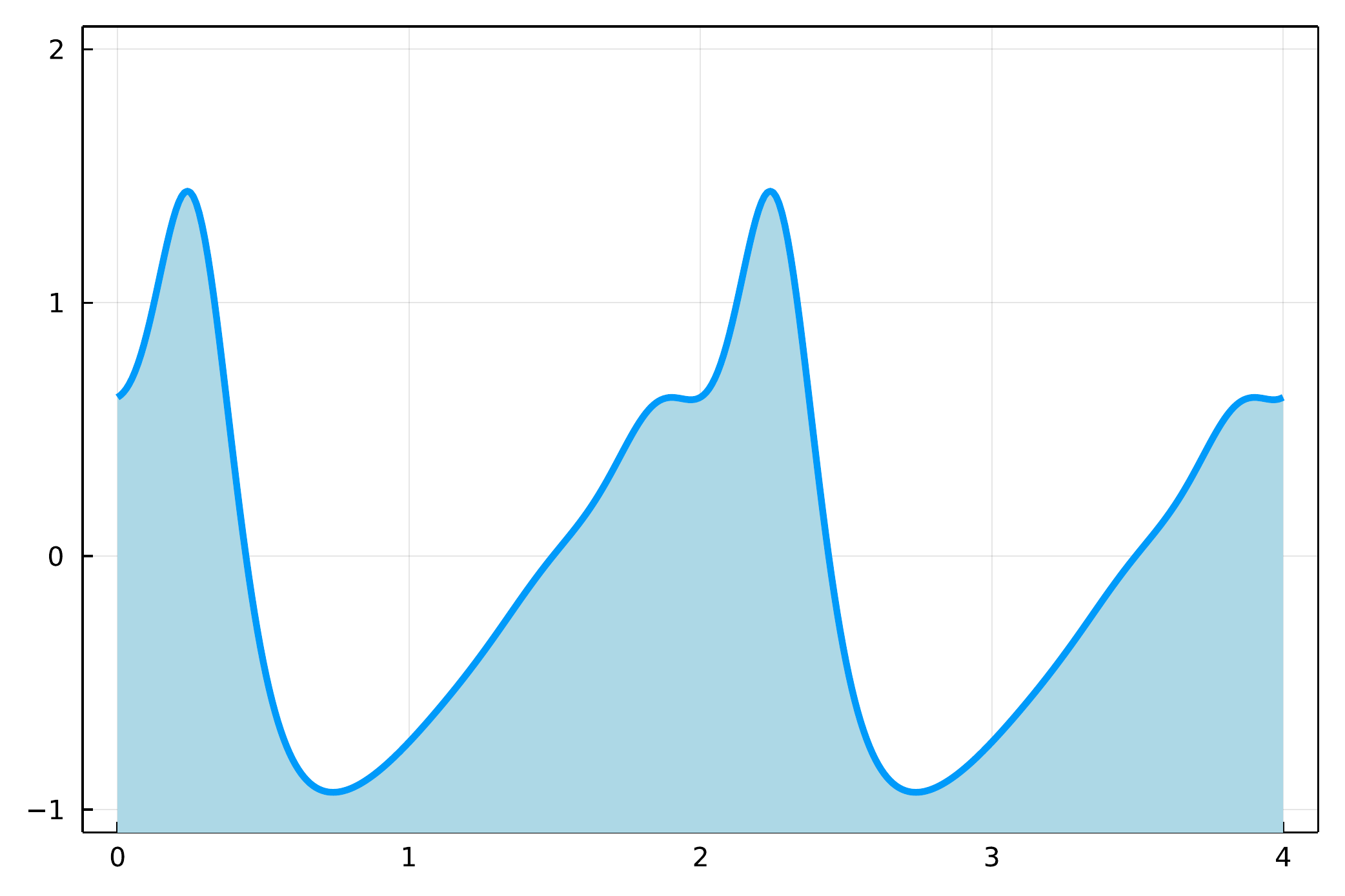}
    \put(-3,25){\rotatebox{90}{$u(x,0.3)$}}
    \put(55,-1){{$x$}}
  \end{overpic}\hspace{0.1in}
   \begin{overpic}[width = 0.45\linewidth]{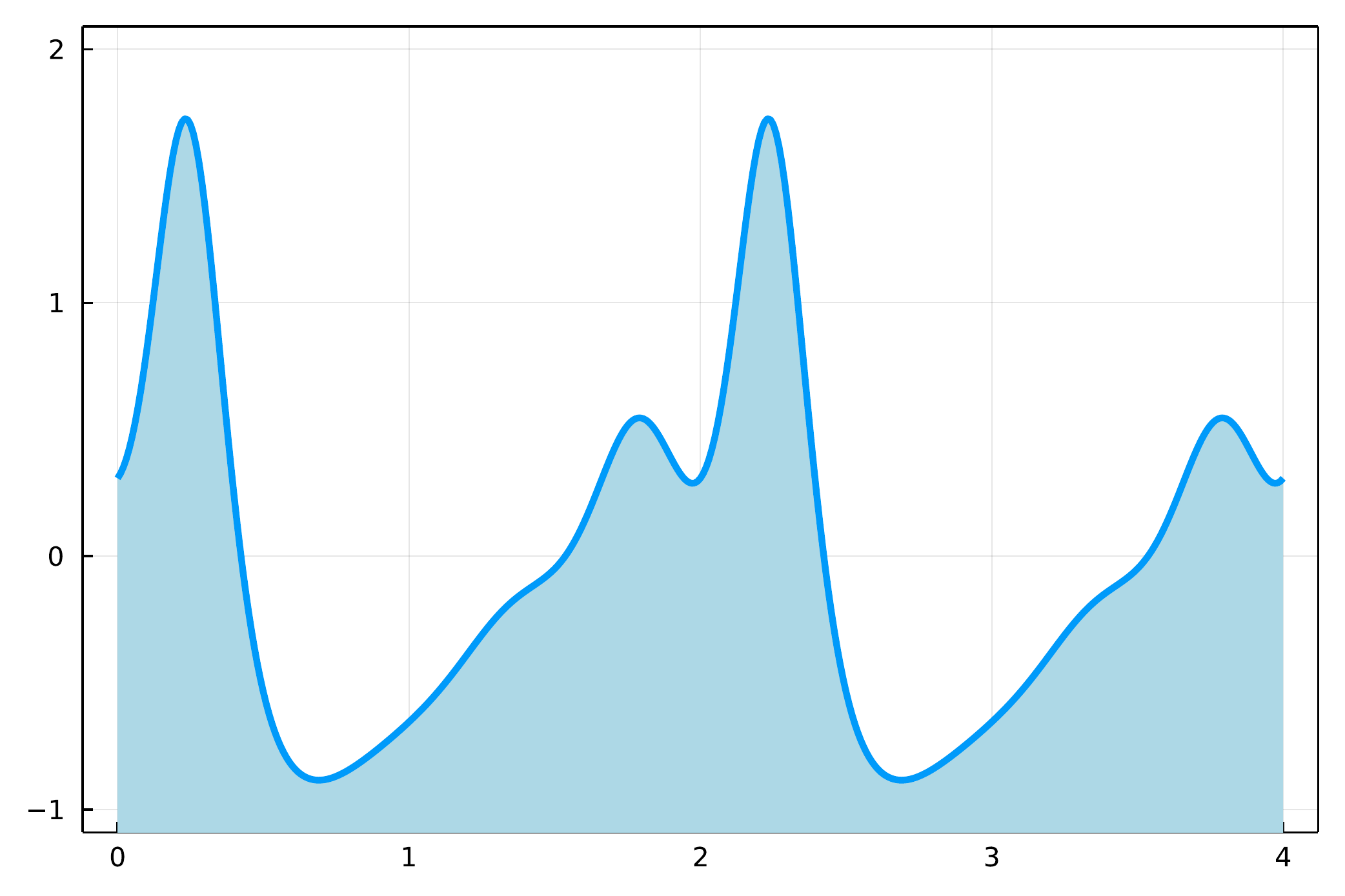}
    \put(-3,25){\rotatebox{90}{$u(x,0.4)$}}
    \put(55,-1){{$x$}}
  \end{overpic}
  \caption{\label{fig:ZK-evolve} The numerically compute evolution of \eqref{eq:uZK} using a genus $g =12$ approximation.  The number of collocation point on each contour is chosen adaptively using the methodology in Section~\ref{sec:adaptive}. }
\end{figure}

\subsection{Initial-value problem with ``box'' data} \label{s:numerics-box}
To be able to solve the initial value problem for the box initial condition, $q_0(x) = q_0(x + L)$,
\begin{equation}
q_0(x) = \begin{cases} 0 & x \in (0,w) \\ -h & x \in (w,L) \end{cases}
\end{equation}
we need to compute the forward spectral theory for \eqref{Schrodinger-op}.
In this case a basis of solutions $c(x;\lambda)$ and $s(x;\lambda)$ to \eqref{Lax-x} normalized as in \eqref{eq:c} and \eqref{eq:s} can be computed explicitly, and these solutions contain all information needed to compute the forward spectral theory.
In particular, we have
\begin{align*}
  \Delta(\lambda) &= 2 \cos (w \sqrt{\lambda}) \cos( (L - w) \sqrt{\lambda - h}) - \frac{2 \lambda - h}{\sqrt{\lambda}} \sqrt{\lambda - h} \sin (w \sqrt{\lambda}) \sin((L- w) \sqrt{\lambda - h}),\\
  s(L;\lambda) &= \frac{1}{\sqrt{\lambda}} \cos((L-w) \sqrt{\lambda - h}) \sin (w \sqrt \lambda)  + \frac{1}{\sqrt{\lambda -h}} \cos(w \sqrt{\lambda}) \sin((L-w)\sqrt{\lambda - h}).
\end{align*}

The Dirichlet eigenvalues $\gamma_1(x_0 = 0)<\gamma_2(x_0 = 0)<\gamma_3(x_0 = 0)<\dots$ of \eqref{Schrodinger-op} are then the zeros of
\begin{equation} c(L;\lambda) = \frac{1}{\sqrt{\lambda}} \sin(\sqrt{\lambda})\cos(\sqrt{\lambda-1}(x-1)) + \frac{1}{\sqrt{\lambda-1}} \cos(\sqrt{\lambda}) \sin(\sqrt{\lambda-1}).\end{equation}
The band ends $\alpha_0 < \beta_0 < \cdots \alpha_g < \beta_g < \cdots$
are the zeros of
\begin{equation} \Delta(\lambda)^2-1. \end{equation}
These are easily computed using standard root-finding techniques.  In practice, we find it convenient to use high-precision arithmetic here so that one is sure where future errors are incurred.  Since this is an infinite genus potential, we specify a finite $g$ to truncate the spectrum, setting $\beta_{g+1} = \infty$, resulting in an approximate solution $u(x,t;g)$.  The convergence of this approximation is slow, but reliable, and this is investigated in Figure~\ref{fig:box-conv}.

\begin{figure}[tbp]
  \centering
  \begin{overpic}[width = 0.75\linewidth]{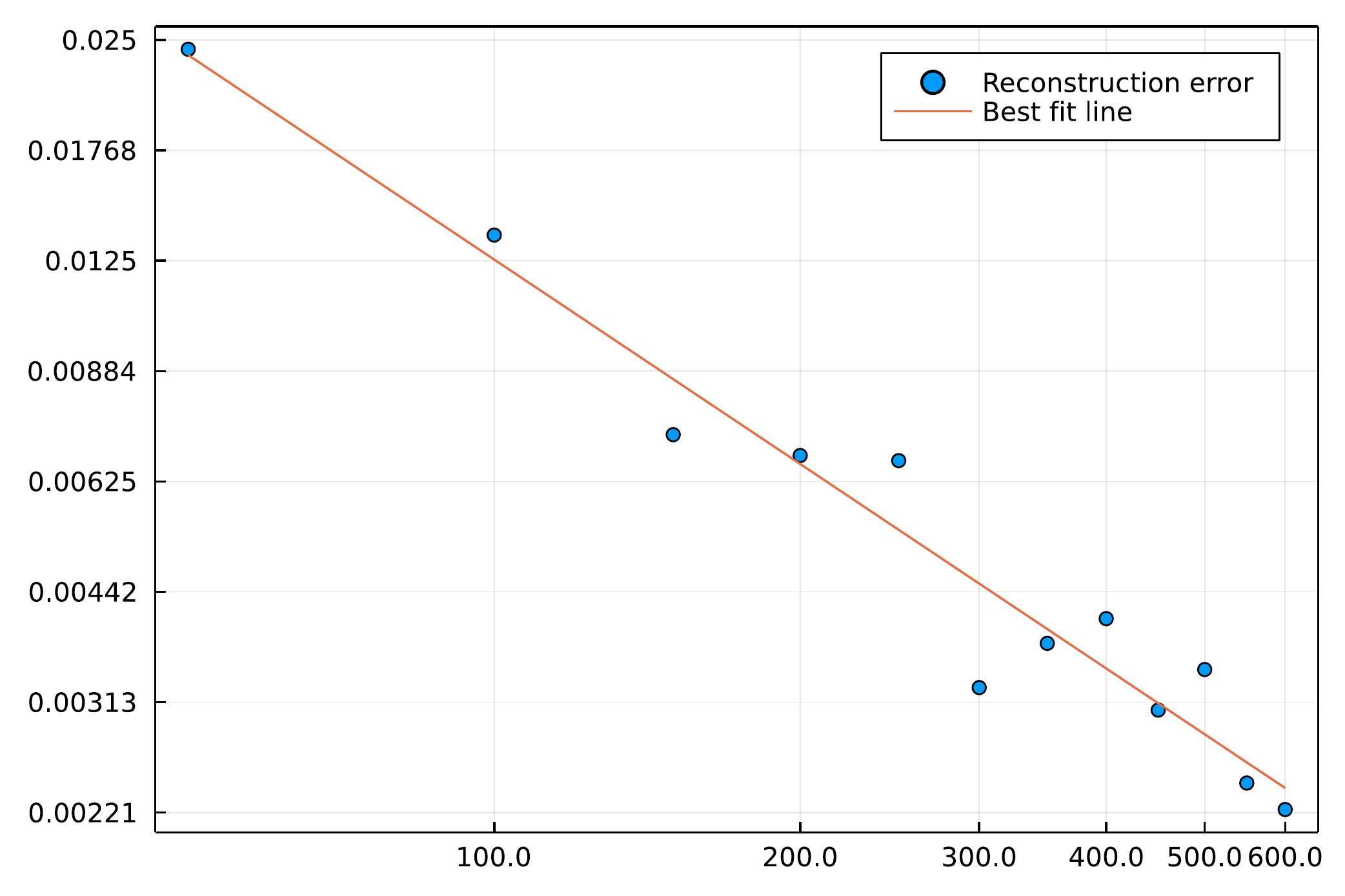}
    \put(-4,20){\rotatebox{90}{$\text{Error:}\, \|u(\vec x,0;g)\|_\infty$}}
    \put(50,-2){{$g$}}
  \end{overpic}\hspace{0.16in}
  
  \caption{\label{fig:box-conv} A test of the convergence of the numerical approximation of $u(x,0;g)$ of  $u(x,0)$ as $g$ increases.  Because of expected non-uniform convergence due to a Gibbs-like phenomenon, we take a uniform grid $\vec x$ on $[0.1,\pi - .1]$ and use $\|u(\vec x,0;g)\|_\infty$ as a proxy for the error ($u(x,0) = 0$ on this interval).  Due to increased oscillations as $g$ increases, this can only be thought of as an estimate for the true error.  The best fit line is given by $\ell(g) \approx 0.89 \times g^{-0.93}$ which appears to be consistent with $O(g^{-1})$ convergence.}
\end{figure}

We investigate various aspects of computing $q(x,t)$ with initial data $q_0$ as above. We focus on the case discussed by Chen and Olver \cite{Chen2014a}.  Specifically, if one chooses $w = \pi/\sqrt{6}$, $L = 2\pi/\sqrt{6}$, $h = -1/2$, then
\begin{align}\label{eq:ukdv}
  u(x,t) := -q(6^{-1/2} x, 6^{-3/2} t),
\end{align}
is the solution of $u_t + u_{xxx} = u u_x$ with initial data
\begin{align}\label{eq:udata}
  u(x,0) = \begin{cases} 0  & 0< x < \pi, \\ 1/2 & \pi < x < 2\pi, \end{cases}
\end{align}
extended periodically.  This allows us to reproduce much of the phenomenon in \cite{Chen2014a}.  In Figure~\ref{fig:dq} we demonstrate dispersive quantization.  When $t$ is a rational multiple of $\pi$, $u(x,t)$ as a function of $x$ appears to be piecewise smooth and slowly varying.  When $t$ is an irrational multiple of $\pi$ the solution appears to have a fractal nature.  One interesting observation we make here is that while Chen and Olver remark in \cite{Chen2014a} that it is not clear from their numerical method if there are truly oscillations between jumps at rational-times-$\pi$ times, our numerics in Figure~\ref{fig:zoomed} indicate that these oscillations will disappear as the genus increases.

We also use this problem to illustrate some important aspects of the numerical method we have developed. First, recall the matrix ${\mathbf{A}}$ in \eqref{eq:Amatrix}.  We plot the eigenvalues of $A(0,0)$ in the left panel of Figure~\ref{fig:precond}.  Here we use $10$ collocation points for $I_{j}$ if $|j| \leq 4$ and $3$ collocation points otherwise.  Then in the right panel of  Figure~\ref{fig:precond} we show the preconditioned matrix ${\tilde{\mathbf{A}}}(0,0)^{-1} {\mathbf{A}}(0,0)$ where ${\tilde{\mathbf{A}}}$ is obtained from a discretization of \eqref{eq:preconder}.  It becomes clear that the eigenvalues become localized near $\lambda = 2$. This problem is extremely well conditioned and GMRES will converge in just a few iterations.

Lastly, in Figure~\ref{fig:coefs} we display how the magnitude of the computed Chebyshev coefficients $\gamma_{i,j}$ in \eqref{eq:expand} depends on $I_j$.  In the top-left panel of Figure~\ref{fig:coefs} we see that $\gamma_{0,j} = O(1)$.  This is not unexpected because the recovery formula \eqref{eq:recovery} weights these coefficients by the gap lengths.  What is rather surprising is how, for large $|j|$, the second coefficient in \eqref{eq:expand} decays rapidly, see the top-right panel of Figure~\ref{fig:coefs}.  The third coefficient decays even more rapidly, see the bottom panel of Figure~\ref{fig:coefs}.

\begin{figure}[tbp]
  \centering
  \begin{overpic}[width = 0.45\linewidth]{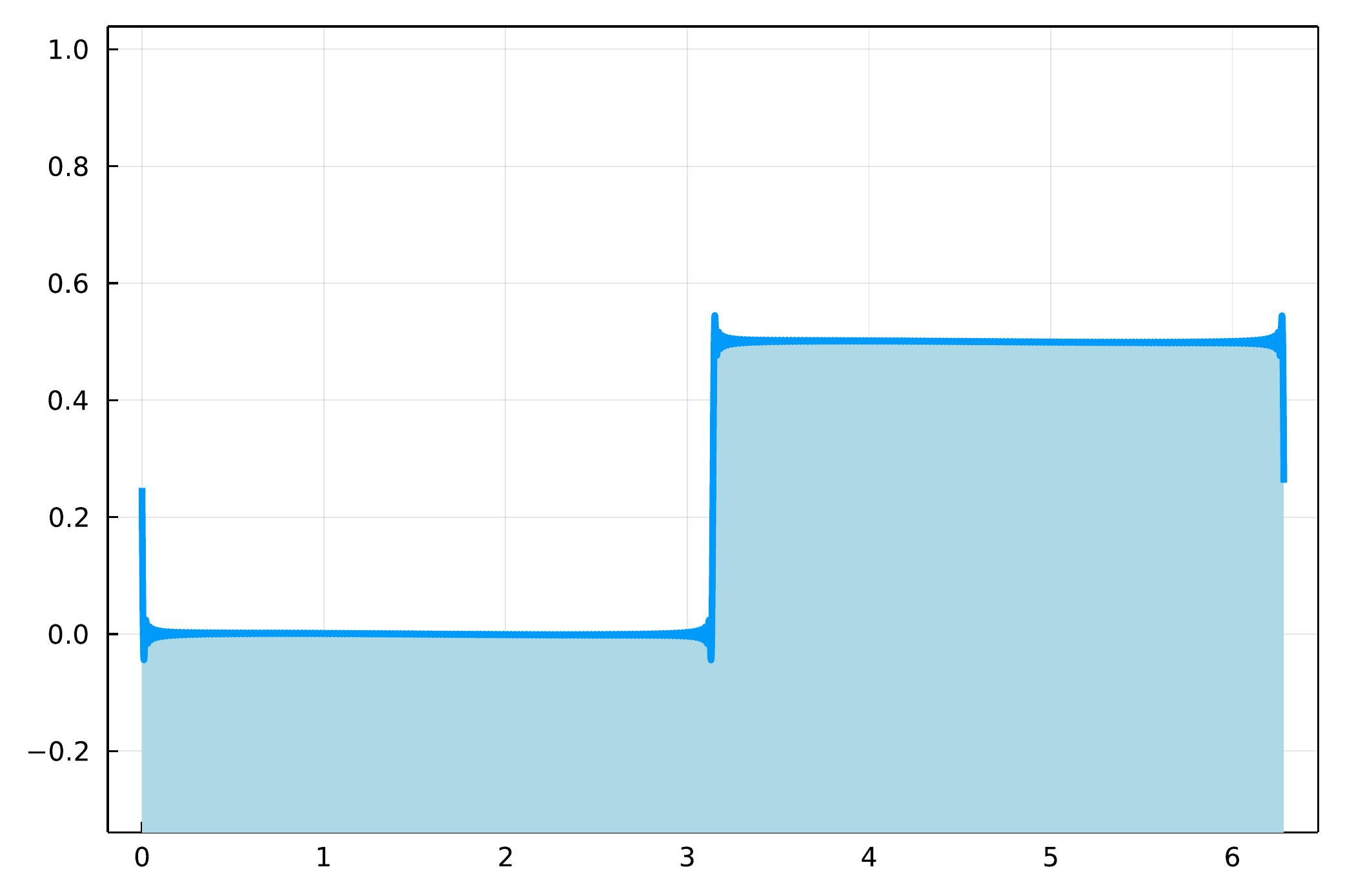}
    \put(-3,25){\rotatebox{90}{$u(x,0)$}}
    \put(55,-1){{$x$}}
  \end{overpic}\hspace{0.1in}
  \begin{overpic}[width = 0.45\linewidth]{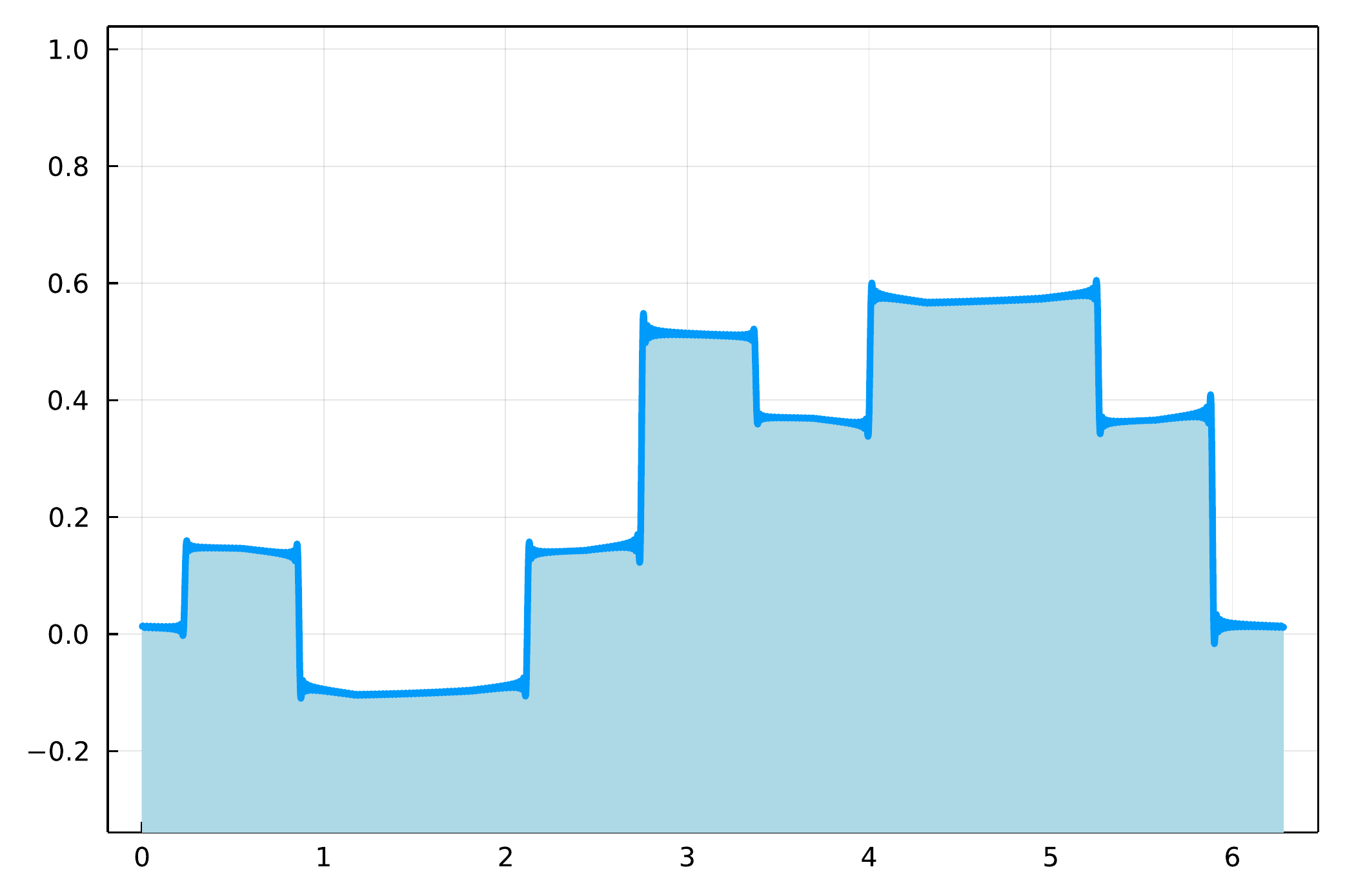}
    \put(-3,25){\rotatebox{90}{$u(x,0.1\pi)$}}
    \put(55,-1){{$x$}}
  \end{overpic}\\
  \vspace{.1in}
  
  \begin{overpic}[width = 0.45\linewidth]{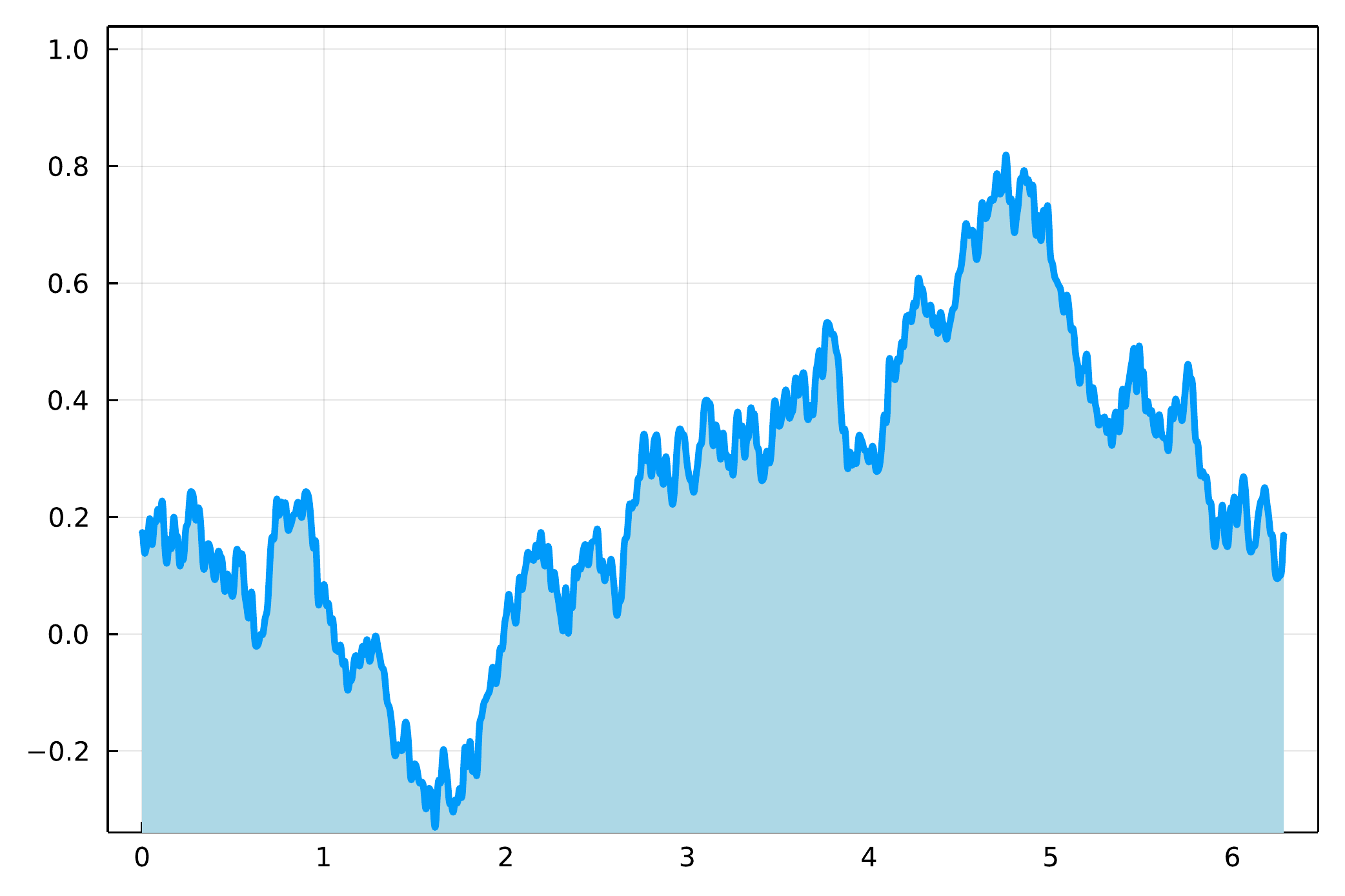}
    \put(-3,25){\rotatebox{90}{$u(x,0.1)$}}
    \put(55,-1){{$x$}}
  \end{overpic}\hspace{0.1in}
   \begin{overpic}[width = 0.45\linewidth]{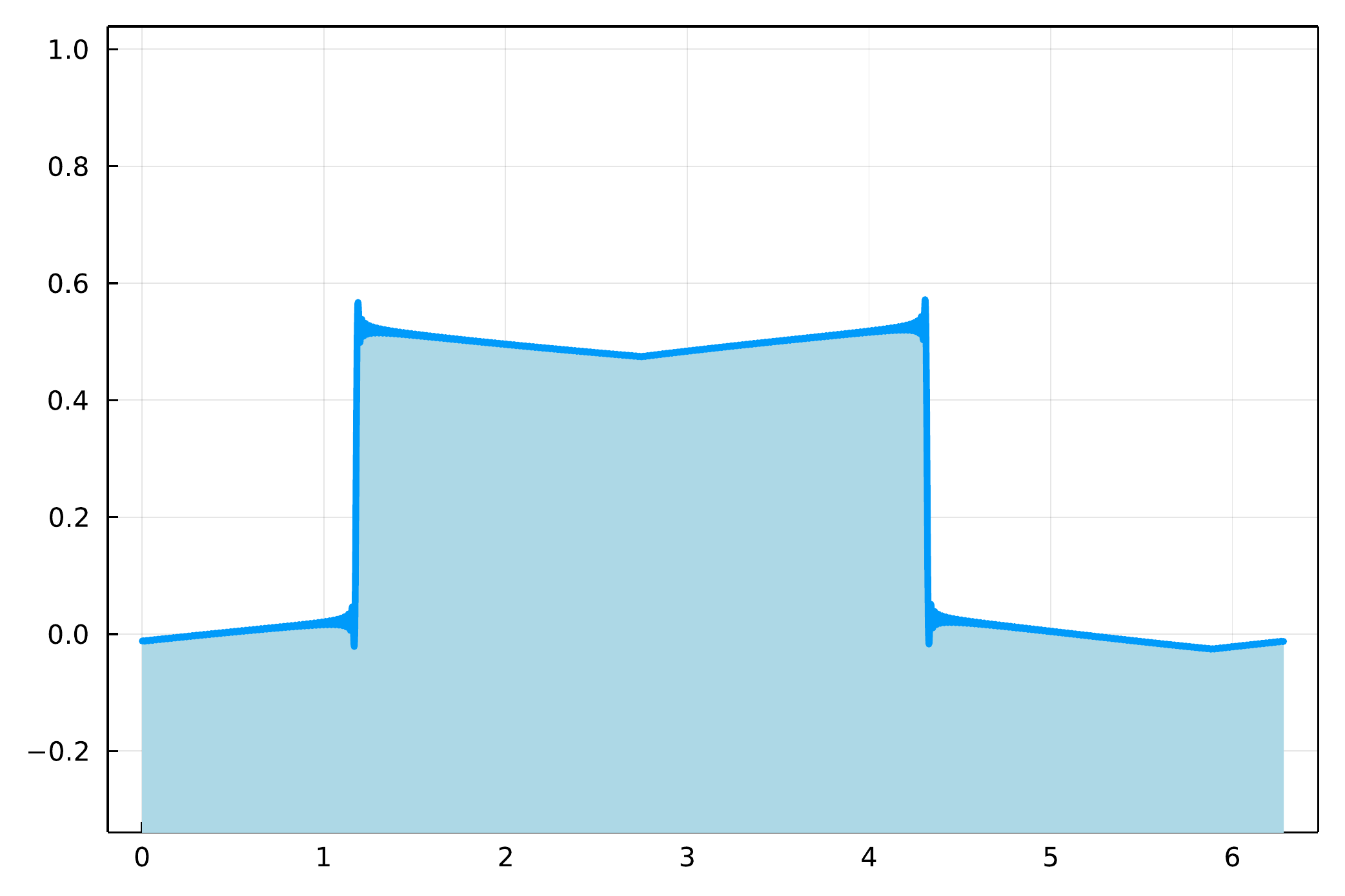}
    \put(-3,25){\rotatebox{90}{$u(x,0.5\pi)$}}
    \put(55,-1){{$x$}}
  \end{overpic}\\
  \vspace{.1in}
  \begin{overpic}[width = 0.45\linewidth]{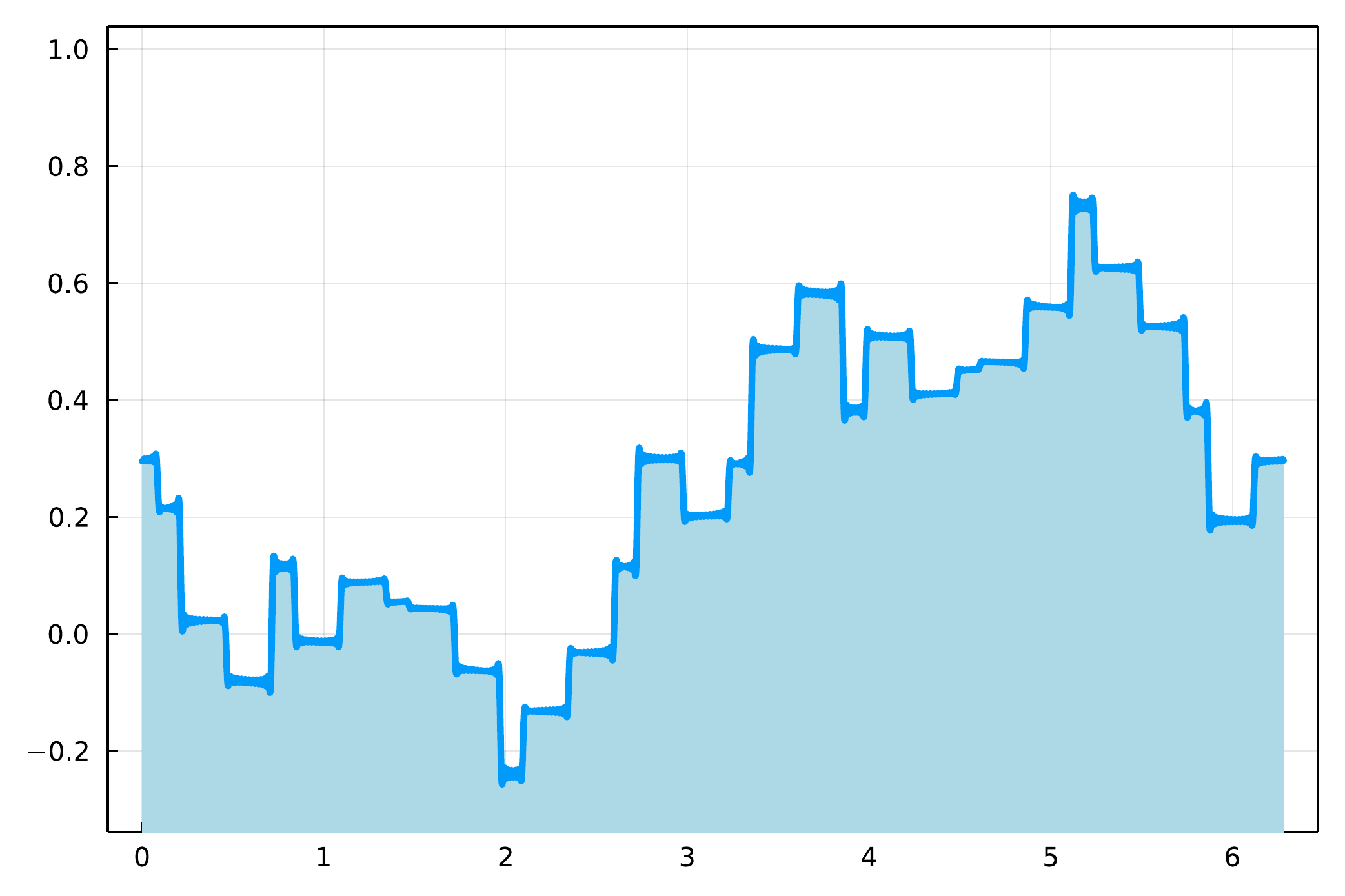}
    \put(-3,25){\rotatebox{90}{$u(x,0.01\pi)$}}
    \put(55,-1){{$x$}}
  \end{overpic}
  \caption{\label{fig:dq} The evolution of $u$ in \eqref{eq:ukdv} with \eqref{eq:udata}. These plots show dispersive quantization where the solution appears to be piecewise smooth at rational-times-$\pi$ times and fractal otherwise.  This was first observed by Chen and Olver, see \cite{Chen2014a}, for example. These plots are produced using a genus $g = 300$ approximation and using $10$ collocation points on $I_j$ if $|j| \leq 4$ and two collocation points otherwise.  This choice is justified by Figure~\ref{fig:coefs}. }
\end{figure}

\begin{figure}[tbp]
  \centering
  \begin{overpic}[width = 0.45\linewidth]{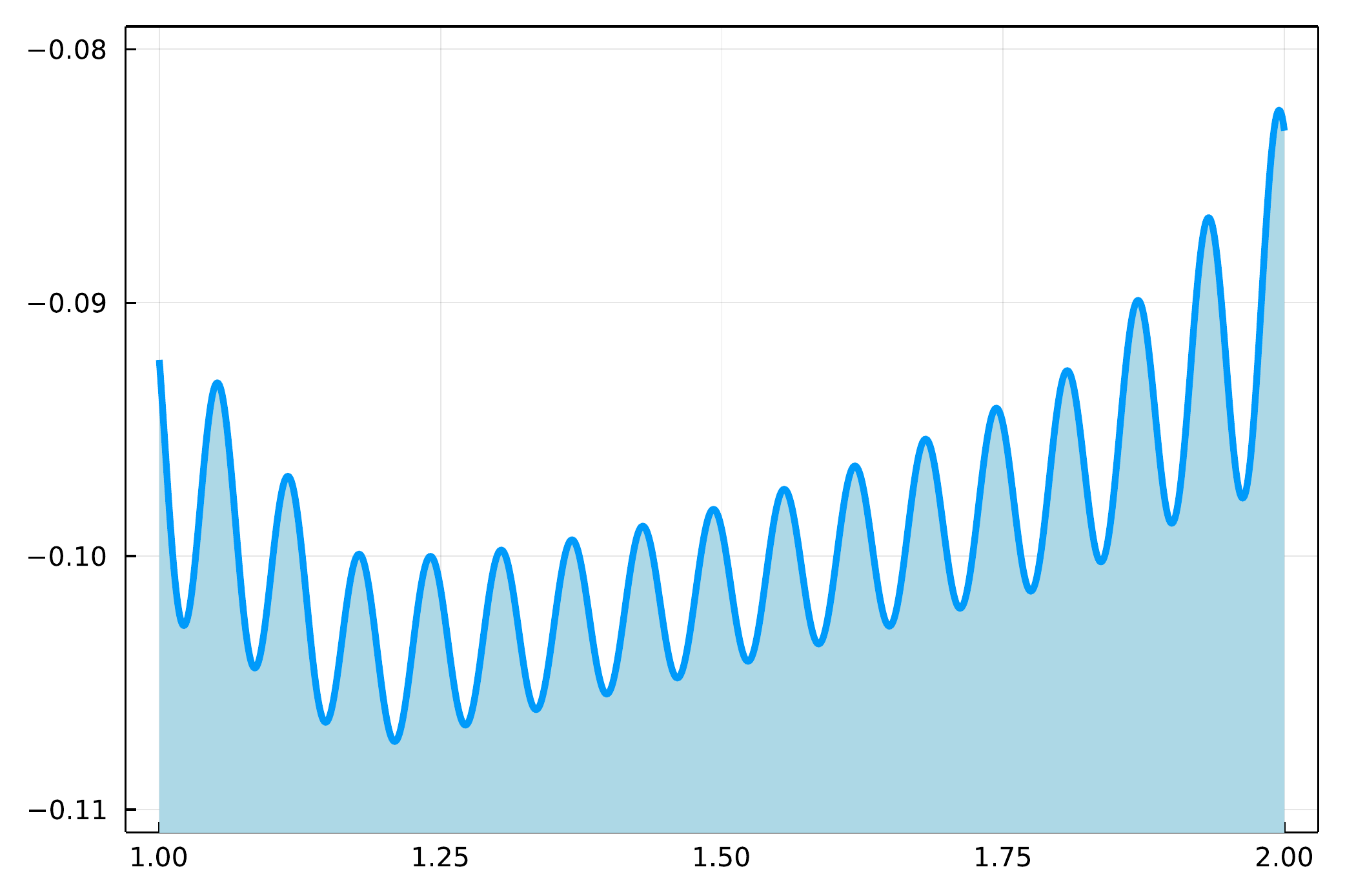}
    \put(70,58){$g = 100$}
    \put(-3,25){\rotatebox{90}{$u(x,0.1\pi)$}}
    \put(55,-1){{$x$}}
  \end{overpic}\hspace{0.1in}
  \begin{overpic}[width = 0.45\linewidth]{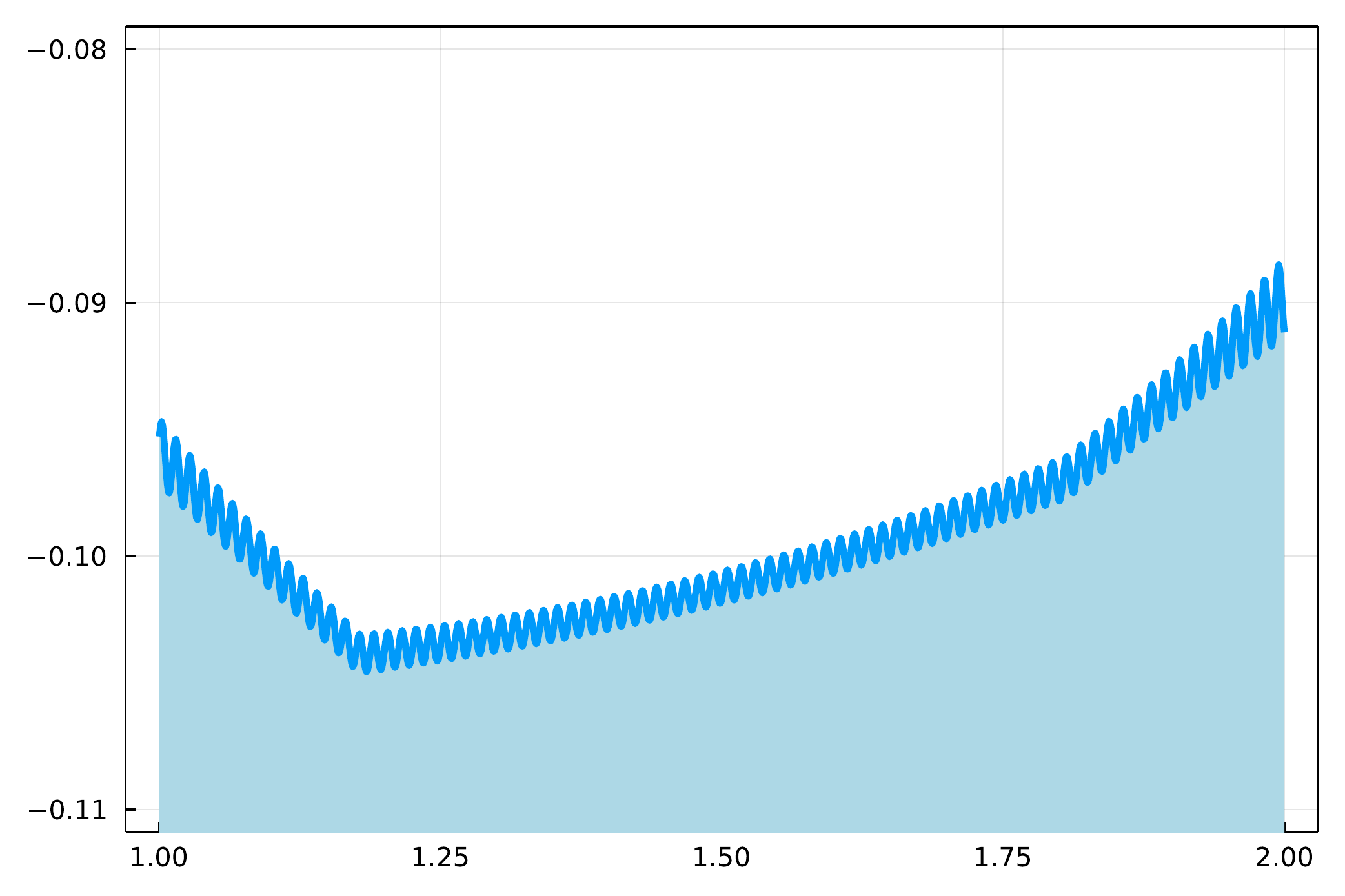}
     \put(70,58){$g = 500$}
    \put(-3,25){\rotatebox{90}{$u(x,0.1\pi)$}}
    \put(55,-1){{$x$}}
  \end{overpic}\\
  \vspace{.1in}
  
  \begin{overpic}[width = 0.45\linewidth]{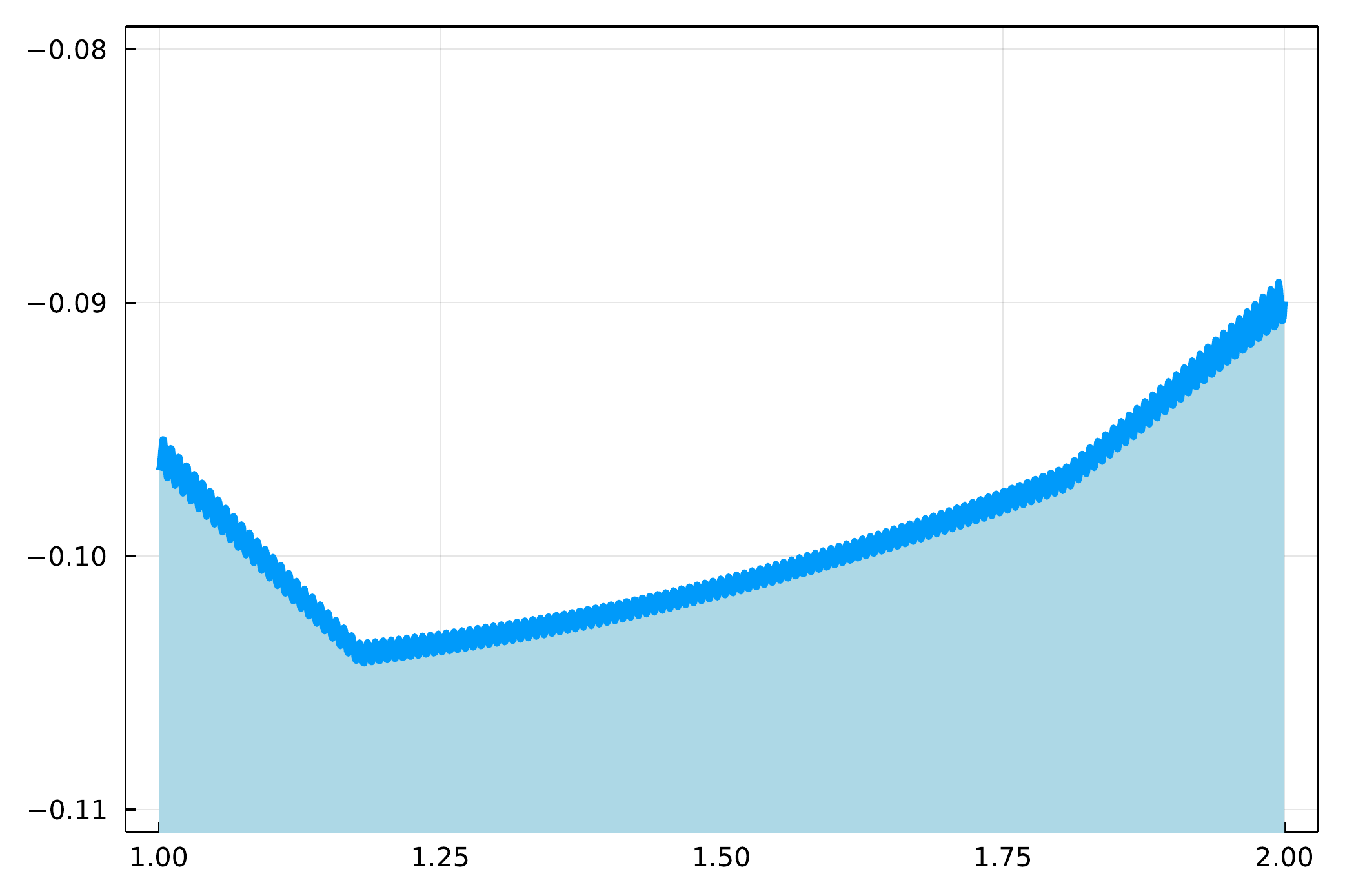}
     \put(70,58){$g = 900$}
    \put(-3,25){\rotatebox{90}{$u(x,0.1\pi)$}}
    \put(55,-1){{$x$}}
  \end{overpic}
  \caption{\label{fig:zoomed} A zoomed view of $u(x,0.1\pi)$ as $g$ increases.  These plots indicate that the amplitude of the oscillations decrease as the genus increases.  This leads to the conjecture that the limiting solution profile is piecewise smooth and slowly varying. }
\end{figure}

\begin{figure}[tbp]
  \centering
  \begin{overpic}[width = 0.45\linewidth]{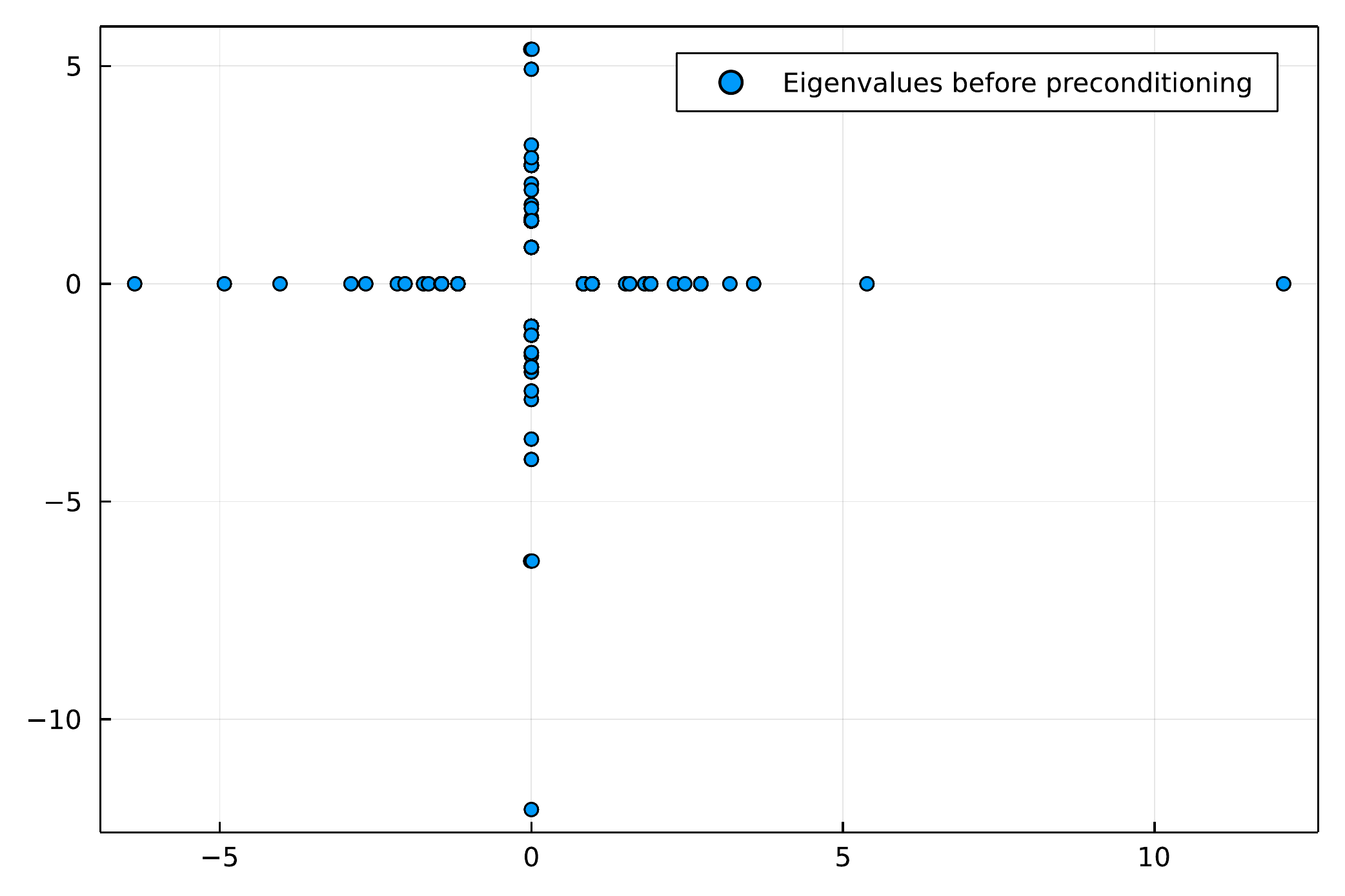}
    \put(-3,30){\rotatebox{90}{{$\mathrm{Im}\, \lambda$}}}
    \put(50,-3){{$\mathrm{Re}\, \lambda$}}
  \end{overpic}\hspace{0.1in}
 \begin{overpic}[width = 0.45\linewidth]{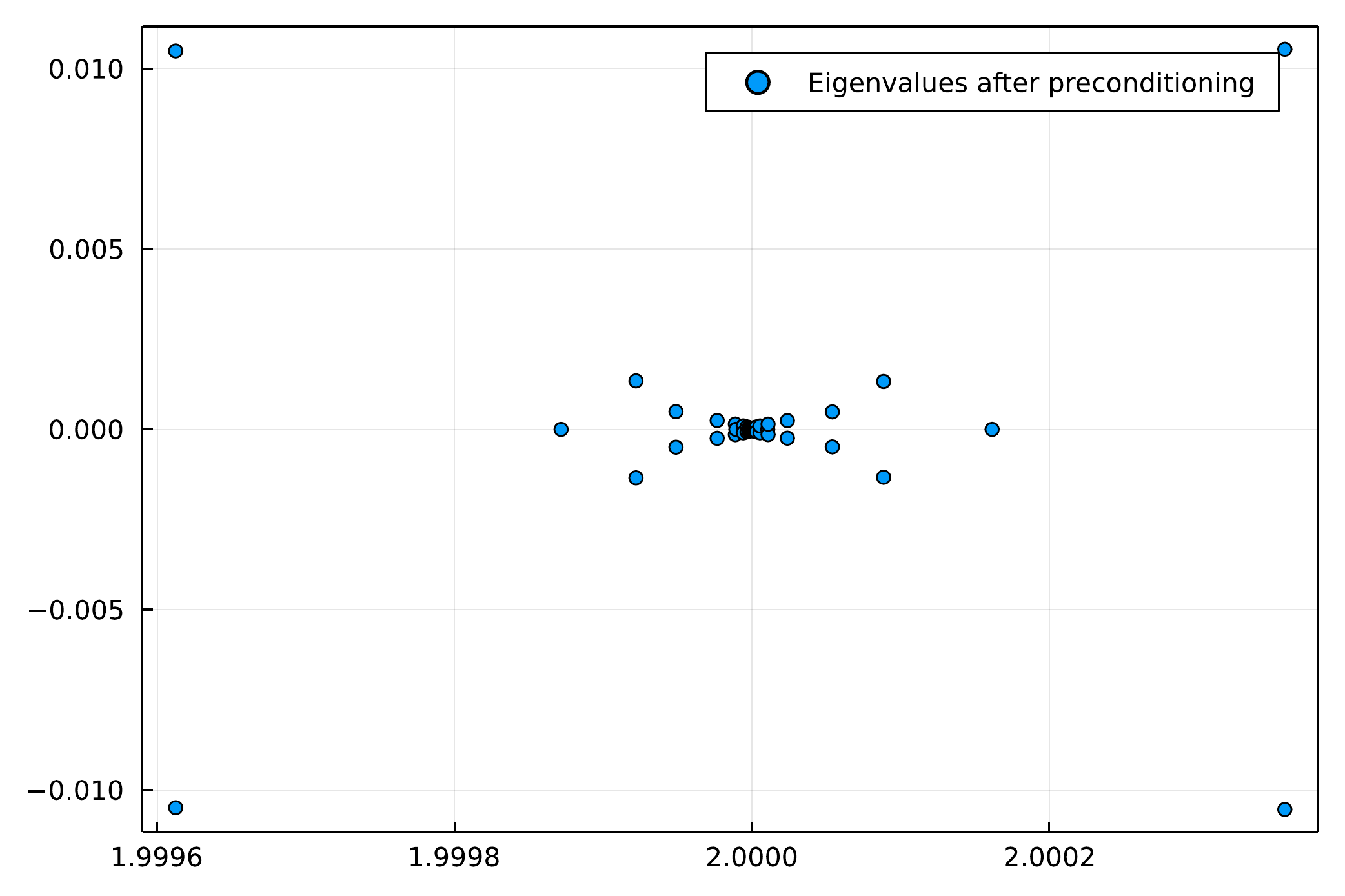}
    \put(-3,30){\rotatebox{90}{{$\mathrm{Im}\, \lambda$}}}
    \put(50,-3){{$\mathrm{Re}\, \lambda$}}
  \end{overpic}
  
  \caption{\label{fig:precond} Left panel: The eigenvalues of $\mathbf{A}(0,0)$ from \eqref{eq:Amatrix} for the potential \eqref{eq:udata}.   We use $10$ collocation points for $I_{j}$ if $|j| \leq 4$ and $3$ collocation points otherwise.  Right panel: The preconditioned matrix $\tilde{\mathbf{A}}(0,0)^{-1} \mathbf{A}(0,0)$ where $\tilde{\mathbf{A}}$ is obtained from a discretization of \eqref{eq:preconder}.  The eigenvalues become localized near $\lambda = 2$. This problem is extremely well conditioned and GMRES will converge in just a few iterations. }
\end{figure}

\begin{figure}[tbp]
  \centering
  \begin{overpic}[width = 0.45\linewidth]{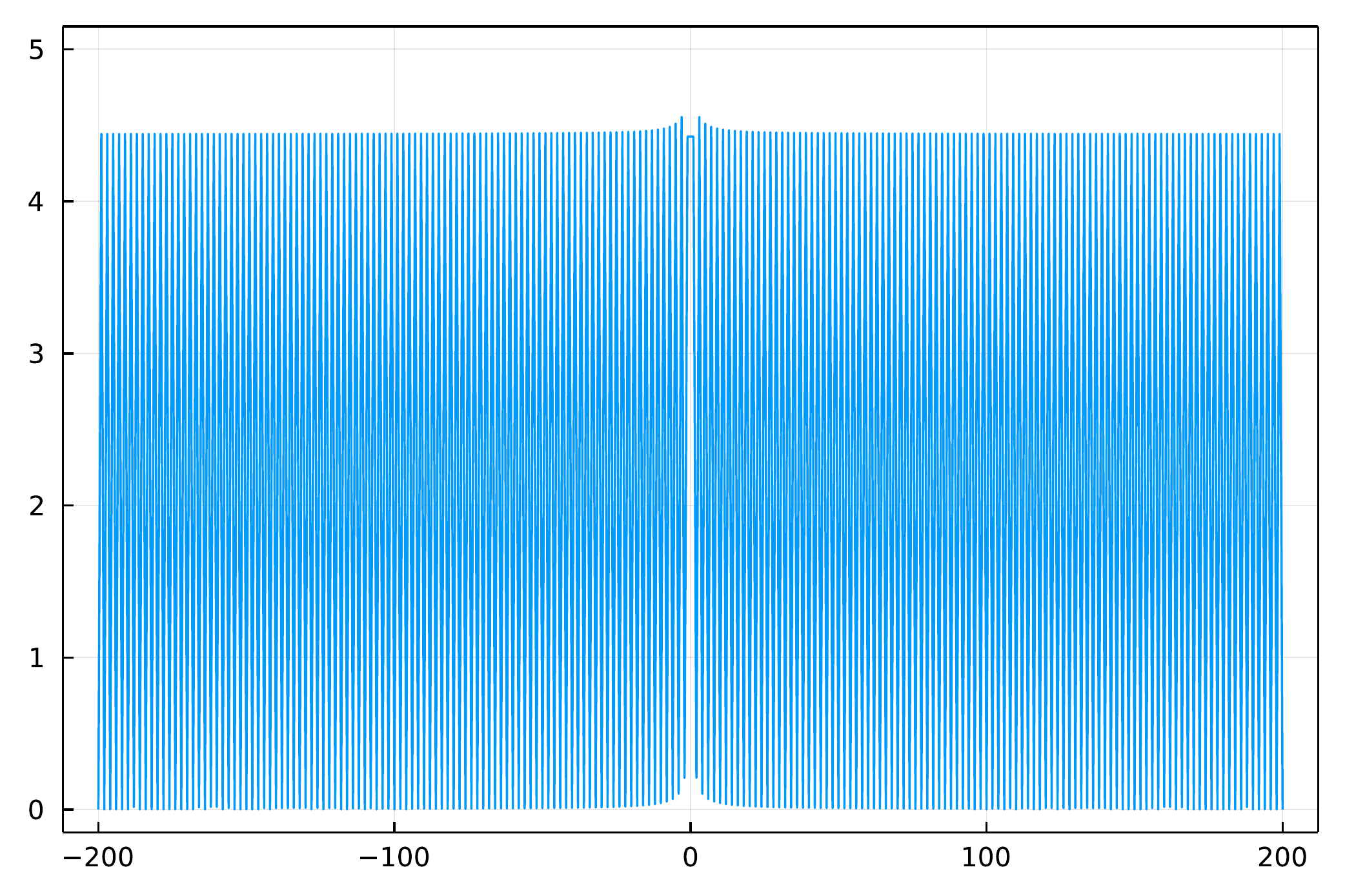}
    \put(-4,30){\rotatebox{90}{$|\gamma_{0,j}|$}}
    \put(50,-2){{$j$}}
  \end{overpic}\hspace{0.16in}
  \begin{overpic}[width = 0.45\linewidth]{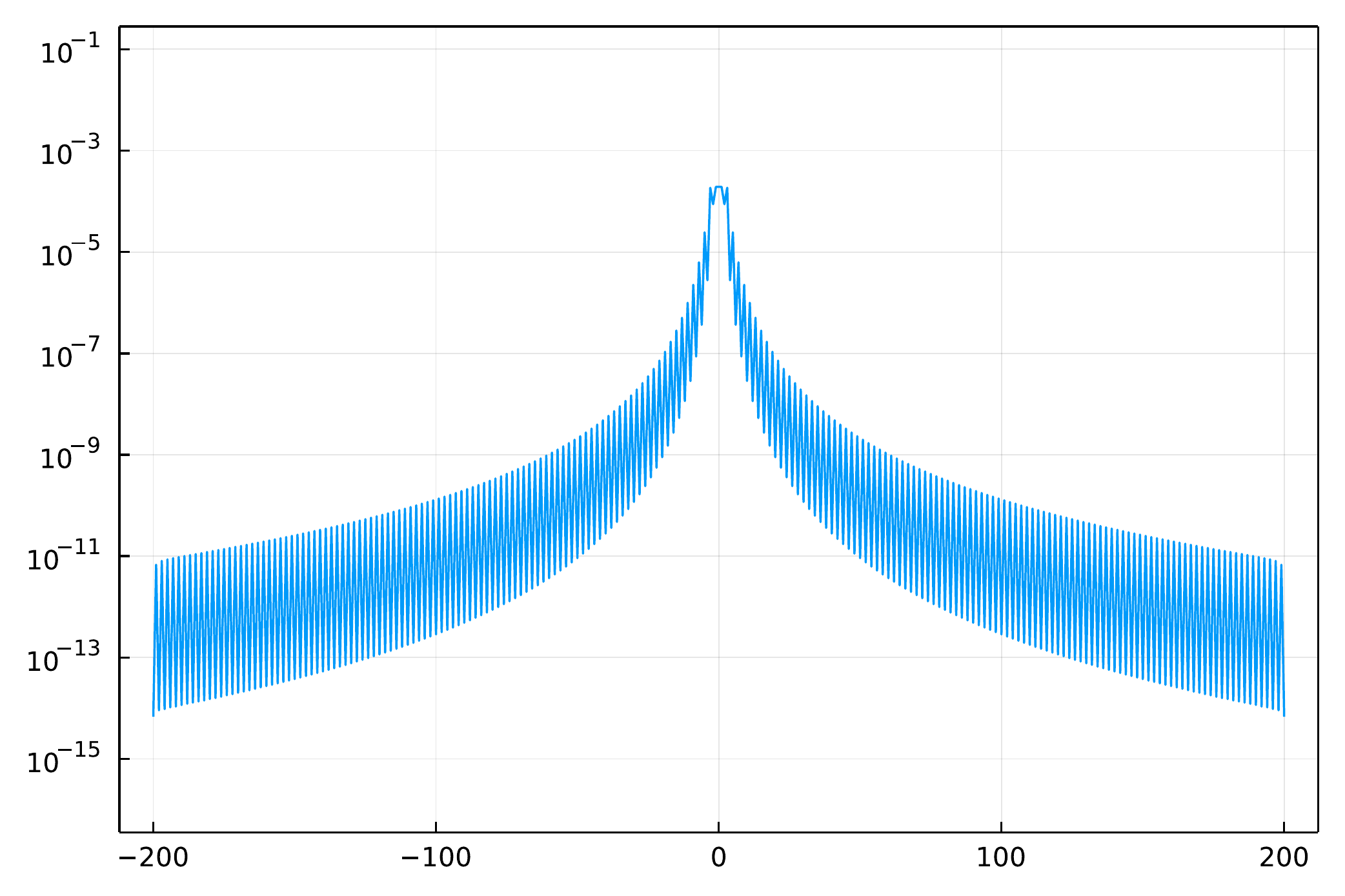}
    \put(-4,30){\rotatebox{90}{$|\gamma_{1,j}|$}}
    \put(53,-2){{$j$}}
  \end{overpic}\\
  \vspace{.1in}
  
  \begin{overpic}[width = 0.45\linewidth]{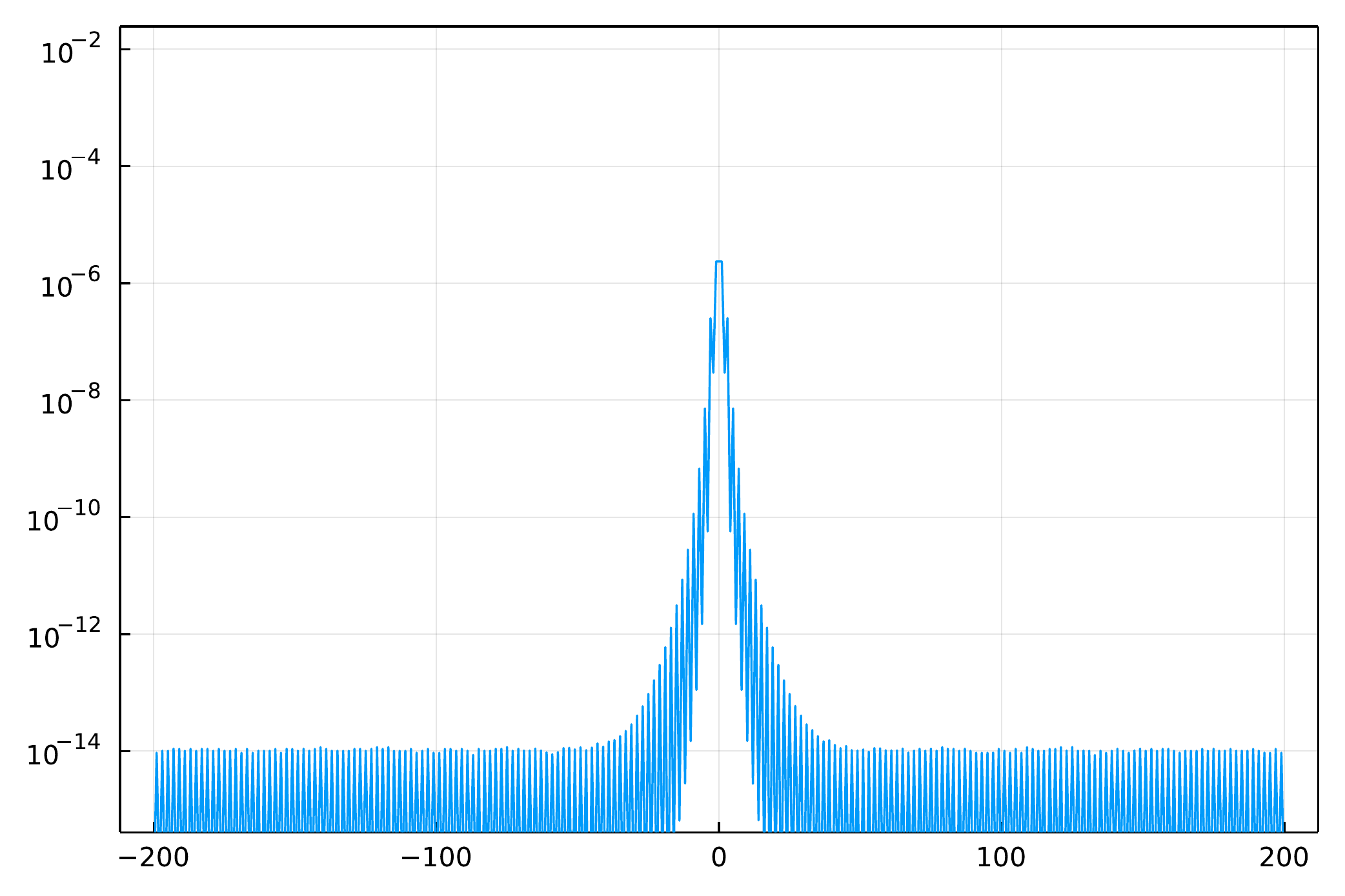}
    \put(-4,30){\rotatebox{90}{$|\gamma_{2,j}|$}}
    \put(53,-2){{$j$}}
  \end{overpic}\hspace{0.1in}
  
  \caption{\label{fig:coefs} The magnitude how of the computed Chebyshev coefficients $\gamma_{i,j}$, $i = 0,1,2$ in \eqref{eq:expand} depends on $I_j$, $j = \pm 1, \pm 2, \ldots, \pm g$, for the potential \eqref{eq:udata}.  As $i$ increases the decay rate with respect to $|j|$ is extremely rapid.}
\end{figure}
\clearpage

\bibliographystyle{abbrv}
\bibliography{library}  

\end{document}